\titleformat{\subsection}[runin]{\normalfont\bfseries}{\thesubsection.}{.5em}{}[.]\titlespacing{\subsection}{0pt}{2ex plus .1ex minus .2ex}{.8em}
\titleformat{\subsubsection}[runin]{\normalfont\itshape}{\thesubsubsection.}{.3em}{}[.]\titlespacing{\subsubsection}{0pt}{1ex plus .1ex minus .2ex}{.5em}
\titleformat{\paragraph}[runin]{\normalfont\itshape}{\theparagraph.}{.3em}{}[.]\titlespacing{\paragraph}{0pt}{1ex plus .1ex minus .2ex}{.5em}
\def\SetFigFont#1#2#3#4#5{\small}
\numberwithin{equation}{section}
\numberwithin{figure}{section}
\theoremstyle{plain} 
\newtheorem{theorem}{Theorem}[section]
\newtheorem*{theorem*}{Theorem}
\newtheorem{lemma}[theorem]{Lemma}
\newtheorem*{lemma*}{Lemma}
\newtheorem*{corollary*}{Corollary}
\newtheorem{proposition}[theorem]{Proposition}
\newtheorem*{proposition*}{Proposition}
\newtheorem{definition}[theorem]{Definition}
\newtheorem*{definition*}{Definition}
\newtheorem*{conjecture*}{Conjecture}
\newtheorem*{example*}{Example}
\newtheorem{remark}[theorem]{Remark}
\newtheorem*{remark*}{Remark}
\renewcommand{\b}[1]{\boldsymbol{\mathrm{#1}}} 
\renewcommand{\r}{\mathrm}  
\renewcommand{\cal}{\mathcal}
\definecolor{darkred}{rgb}{0.9,0,0.3}
\definecolor{darkblue}{rgb}{0,0.3,0.9}
\newcommand{\cob}{\color{darkblue}}
\def\comment#1{\ifthenelse{\isodd{\value{page}}}{\marginpar{\raggedright\scriptsize{\textcolor{darkred}{#1}}}}{\marginpar{\raggedleft\scriptsize{\textcolor{darkred}{#1}}}}}
\renewcommand{\P}{\mathbb{P}}
\newcommand{\E}{\mathbb{E}}
\newcommand{\R}{\mathbb{R}}
\newcommand{\C}{\mathbb{C}}
\newcommand{\D}{\mathbb{D}}
\newcommand{\Z}{\mathbb{Z}}
\newcommand{\ee}{\mathrm{e}}
\newcommand{\ii}{\mathrm{i}}
\renewcommand{\leq}{\leqslant}
\renewcommand{\geq}{\geqslant}
\renewcommand{\epsilon}{\varepsilon}
\newcommand{\ind}[1]{\b 1_{#1}}
\newcommand{\indb}[1]{\b 1_{\hb{#1}}}
\newcommand{\inda}[1]{\b 1_{\ha{#1}}}
\newcommand{\p}[1]{({#1})}
\newcommand{\pB}[1]{\Bigl({#1}\Bigr)}
\newcommand{\pbb}[1]{\biggl({#1}\biggr)}
\newcommand{\pa}[1]{\left({#1}\right)}
\newcommand{\hb}[1]{\bigl\{{#1}\bigr\}}
\newcommand{\ha}[1]{\left\{{#1}\right\}}
\newcommand{\absa}[1]{\left\lvert #1 \right\rvert}
\newcommand{\normB}[1]{\Bigl\lVert #1 \Bigr\rVert}
\DeclareMathOperator{\supp}{supp}
\DeclareMathOperator{\dist}{dist}
\DeclareMathOperator{\diam}{diam}
\newcommand{\ddp}[2]{\frac{\partial #1}{\partial #2}}
\newcommand{\half}{\frac{1}{2}}
\newcommand{\hatmueps}{\hat\mu^{(\varepsilon)}}
\begin{document}

\title{Local density for two-dimensional one-component plasma}
\author{Roland Bauerschmidt\footnote{Harvard University, Department of Mathematics. E-mail: {\tt brt@math.harvard.edu}.} \and
Paul Bourgade\footnote{New York University, Courant Institute of Mathematical Sciences. E-mail: {\tt bourgade@cims.nyu.edu}.} \and
Miika Nikula\footnote{Harvard University, Center of Mathematical Sciences and Applications. E-mail: {\tt minikula@cmsa.fas.harvard.edu}.} \and
Horng-Tzer Yau\footnote{Harvard University, Department of Mathematics. E-mail: {\tt htyau@math.harvard.edu}.}}
\date{}
\maketitle

\begin{abstract}
  We study the classical two-dimensional one-component plasma
  of $N$ positively charged point particles,
  interacting via the Coulomb potential and confined by an external potential.
  For the specific inverse temperature $\beta=1$ (in our normalization),
  the charges are the eigenvalues of random normal matrices,
  and the model is exactly solvable as a determinantal point process.
  For any positive temperature, 
  using a multiscale scheme of iterated mean-field bounds,
  we prove that the equilibrium measure provides the local particle density down to the optimal scale of $N^{o(1)}$ particles.
  Using this result and the loop equation, we further prove that the
  particle configurations are rigid,
  in the sense that the fluctuations of smooth linear statistics on any scale are $N^{o(1)}$.
\end{abstract}

\section{Introduction and results}

\subsection{One-component plasma}

Given a potential $V: \C \to \R \cup \{+\infty\}$, 
the energy of a configuration of $N$ charges $z=(z_1,\dots, z_N) \in \C^N$ is defined by
\begin{equation} \label{e:Hdef}
  H_{N,V}(z) = \sum_{j \neq k} \log \frac{1}{|z_j-z_k|} + N \sum_{j} V(z_j).
\end{equation}
The two-dimensional one-component plasma (OCP) of $N$ charges at inverse temperature $\beta>0$
is the Gibbs measure on $\C^N$ defined by
\begin{equation} \label{e:Pdef}
  P_{N,V,\beta}(dz) = \frac{1}{Z_{N,V,\beta}} \ee^{-\beta H_{N,V}(z)} \, m^{\otimes N}(dz)
  ,
\end{equation}
where $m$ denotes the Lebesgue measure on $\C$ and $Z_{N,V,\beta}$ a normalization constant
(assuming that $V$ has sufficient growth at infinity, so that the latter is well-defined).
For notational convenience, we use $\beta$ rather than $\beta/2$ in \eqref{e:Pdef}.
In particular, in our normalization, the exactly solvable case is $\beta=1$ rather than $\beta=2$,
differently from the usual normalization in random matrix theory.

The OCP describes a plasma of positive charges confined by the potential $V$.
In an alternative interpretation, the effect of the potential is to provide a negative background
charge given by the associated equilibrium measure (described below).
The OCP is also known as Jellium, as Dyson gas, and as (one component) Coulomb gas.
The two-dimensional OCP has fundamental relations to several models in statistical mechanics and probability theory.
For the specific inverse temperature $\beta=1$ (in our units), the OCP is exactly the joint law of the eigenvalues of
a random normal matrix \cite{MR1174692}, and more specifically, for $\beta=1$ and potential $V(z)=|z|^2$
it is the Ginibre ensemble of eigenvalues of a complex Gaussian random matrix \cite{MR0173726,MR2641363}.
For more general values of $\beta$, the OCP also plays a role in the theory of the
Anomalous Quantum Hall Effect, where it arises in the Laughlin wave function \cite{PhysRevLett.50.1395}.

\subsection{Results}
\label{sec:result}

For potentials $V$ that are lower semicontinuous and satisfy the growth condition
\begin{equation} \label{e:Vgrowth}
  \liminf_{|z| \to \infty} \big( V(z) - (2+\epsilon) \log |z| \big) =+\infty
\end{equation}
for some $\epsilon > 0$, 
it is well known that there exists a compactly supported equilibrium measure $\mu_V$
that is the unique minimizer of the convex functional
\begin{equation} \label{e:IV}
  I_V(\mu) = 
  \iint \log \frac{1}{|z-w|} \mu(dz)\,\mu(dw) + \int V(z) \, \mu(dz)
\end{equation}
over the set of probability measures on $\C$; see Theorem~\ref{thm:eqmeasure} below for details.
For $z \in \C^N$, the empirical measure is defined by
\begin{equation*}
  \hat\mu = \frac{1}{N} \sum_{j} \delta_{z_j}.
\end{equation*}
For arbitrary $\beta \in (0,\infty)$,
it is well-known that $\hat\mu \to \mu_V$ vaguely in probability as $N\to\infty$,
with $\hat\mu$ distributed under $P_{N,V,\beta}$;
in fact, a full Large Deviation Principle has been proved \cite{MR1606719,MR1660943}.
Vague convergence concerns the macroscopic behaviour of the systems,
resolving scales of order $1$. The microscopic scale of the system, by which individual
particles are separated, is given by $N^{-1/2}$.

Our first result shows that $\hat\mu \to \mu_V$ also holds on all
mesoscopic scales $N^{-s}$ for any $s \in (0,\frac12)$.
In the random matrix situation, this corresponds to the \emph{local} circular law
\cite{MR2722794,MR3230002},
but here the support of $\mu_V$ is not necessarily a disk.

In the statement of our results below,
$C^k$ denotes the space of (real-valued) $k$-times continuously differentiable functions,
$C^k_c$ the space of functions in $C^k$ which have compact support,
$C^{1,1}$ is the space of differentiable functions whose derivative is Lipschitz continuous,
and $\|f\|_p$ is the standard $L^p$ norm of $f: \C \to \R$ with respect to the 2-dimensional Lebesgue measure on $\C$.

\begin{theorem} \label{thm:locallaw}
Fix $\alpha_0 >0$.
Assume that $V: \C \to \R \cup \{+\infty\}$ obeys \eqref{e:Vgrowth},
is $C^{1,1}$ on a neighbourhood of $\supp \mu_V$,
and satisfies $\alpha_0 \leq \Delta V(z) \leq \alpha_0^{-1}$ for all $z \in \supp \mu_V$. 
Then for any $s \in (0,\half)$,
any $z_0$ in the interior of the support of $\mu_V$ (which we assume to be nonempty),
and for any $f \in C_c^2(\C)$ with support in the disk of radius $\frac12 N^{-s}$ centered at $z_0$,
we have
\begin{equation*}
  \frac{1}{N} \sum_{j=1}^N f(z_j) - \int f(z) \, \mu_V(dz)
  = O\pbb{\pbb{1+\frac{1}{\beta}} \log N}
  \pbb{ N^{-1-2s} \|\Delta f\|_\infty + N^{-\half-s} \|\nabla f\|_{2} } \,,
\end{equation*}
with probability at least $1-\ee^{-(1+\beta)N^{1-2s}}$ for sufficiently large $N$.
The implicit constant depends only on $\alpha_0$, $s$, and $\sup_{\supp \mu_V} |\nabla V|$.%
\footnote{Simultaneously with the first version of this paper,
a result closely related to Theorem~\ref{thm:locallaw} appeared in \cite{1510.01506}.}
\end{theorem}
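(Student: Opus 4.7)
My plan is to prove Theorem~\ref{thm:locallaw} by a multiscale bootstrap that iteratively improves the spatial resolution from macroscopic ($N^{0}$) down to $N^{-s}$ in $O(1)$ steps. For a running scale $\tau \in [0,s]$ let $\mathcal{S}(\tau)$ denote the conclusion of Theorem~\ref{thm:locallaw} with $s$ replaced by $\tau$. The base case $\mathcal{S}(0)$ follows from the macroscopic large deviation principle for the empirical measure under $P_{N,V,\beta}$ cited in the introduction \cite{MR1606719,MR1660943}, which provides a nondegenerate rate $\ee^{-cN^{2}}$ for fixed test functions (using that $\Delta V$ is bounded above and below on $\supp\mu_V$ so that the rate functional is uniformly convex). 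It then suffices to exhibit a constant $\delta=\delta(\alpha_{0})>0$ such that $\mathcal{S}(\tau)\Rightarrow\mathcal{S}(\tau+\delta)$ and iterate $O(1/\delta)$ times.

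\textbf{Inductive step via an effective potential.} Fix $f$ supported in $D\deq B(z_{0},\half N^{-\tau-\delta})$, and pick a slightly larger concentric disk $D'$. The key mechanism is the DLR property of the Gibbs measure: conditional on the external configuration $\{z_{k}\col z_{k}\notin D'\}$, the particles in $D'$ form a smaller OCP with effective external potential
\begin{equation*}
  V_{\r{eff}}(z) \deq V(z) + \frac{2}{N}\sum_{z_{k}\notin D'} \log \frac{1}{|z-z_{k}|}.
\end{equation*}
To approximate $V_{\r{eff}}$ by its deterministic mean-field counterpart, decompose the complement of $D'$ into $O(\log N)$ dyadic annuli $A_{j}$ of radius $\asymp 2^{j} N^{-\tau-\delta}$ around $z_{0}$. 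On each $A_{j}$ the logarithmic kernel $z\mapsto\log|z-w|^{-1}$ is smooth for $z\in D$ and $w\in A_{j}$, with $\|\nabla\cdot\|_{2}$ and $\|\Delta\cdot\|_{\infty}$ controlled explicitly in terms of $2^{j}$; testing the signed measure $\hat\mu-\mu_{V}$ against these kernels via $\mathcal{S}(\tau)$ and summing geometrically over $j$ yields, on $D'$,
\begin{equation*}
  V_{\r{eff}}(z) = V(z) + 2\int_{(D')^{c}} \log\tfrac{1}{|z-w|}\,\mu_{V}(\dd w) + O(\psi_{N} N^{-\kappa})
\end{equation*}
for some $\kappa>0$, with $\psi_{N}\deq(1+\beta^{-1})\log N$. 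Combining with the Euler--Lagrange identity $2U^{\mu_{V}}+V=\r{const}$ on $\supp\mu_{V}$ and the density formula $\dd\mu_{V}=(\Delta V/4\pi)\,\dd m$, and computing the short-range integral over $D'$ explicitly (it is harmonic plus $\tfrac14\Delta V(z_{0})|z-z_{0}|^{2}$ modulo an additive constant), this reduces $V_{\r{eff}}$ on $D$ to a uniformly convex quadratic confinement with Hessian $\asymp \Delta V(z_{0})\in[\alpha_{0},\alpha_{0}^{-1}]$.

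\textbf{Conditional concentration.} Conditional on the good event above, the particles in $D'$ are distributed as an OCP with uniformly convex potential. For such a model one has a standard sub-Gaussian concentration estimate for linear statistics --- obtained by a Brascamp--Lieb / Bakry--\'Emery argument (yielding a $\beta^{-1}$ contribution from the curvature of the Hamiltonian) combined with a volume/entropy input (yielding a $1$ contribution). For $f$ supported in $D$, this produces
\begin{equation*}
  \P\pa{\absb{\textstyle\sum_{j} f(z_{j}) - N\int f\,\dd\mu_{V}} \geq t}
  \leq \exp\pb{-c(1+\beta)N t^{2}/\pa{\|\nabla f\|_{2}^{2} + N^{-1}\absa{D}\|\Delta f\|_{\infty}^{2}}}.
\end{equation*}
Choosing $t$ equal to the right-hand side of Theorem~\ref{thm:locallaw} at scale $\tau+\delta$, and using $\absa{D}\asymp N^{-2\tau-2\delta}$, converts this bound into the asserted probability $1-\ee^{-(1+\beta)N^{1-2(\tau+\delta)}}$, closing the induction.

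\textbf{Main obstacle.} The hardest point is to choose $\delta$ uniformly in $\tau\in[0,s]$ and to ensure that the approximation errors do not accumulate multiplicatively through the $O(1/\delta)$ rounds. This forces the inductive statement to be \emph{scale-adapted}: the specific norms $\|\nabla f\|_{2}$ and $\|\Delta f\|_{\infty}$ in Theorem~\ref{thm:locallaw} are chosen precisely so that the bound reproduces itself cleanly when $f$ is taken to be a dyadic cutoff of $\log|z-\cdot|^{-1}$ in the effective-potential decomposition, and the $\psi_{N}$ prefactor is the largest logarithmic loss that remains polylogarithmic after a bounded number of iterations. A secondary difficulty is the contribution of dyadic shells close to the boundary of $\supp\mu_{V}$, where the density $\rho_{V}$ has a jump and the local quadratic approximation of $V_{\r{eff}}$ fails; these contributions are estimated a priori using the growth condition \eqref{e:Vgrowth} together with the assumption that $z_{0}$ lies strictly in the interior.
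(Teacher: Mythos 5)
Your overall architecture — a multiscale bootstrap that conditions on the exterior configuration and replaces the exterior charges by a mean-field background on each step — is indeed the paper's strategy (Section~\ref{sec:strategy}, Sections~\ref{sec:condmeas}--\ref{sec:pf}). The quantitative form of the recursion is also essentially right, though the paper formalizes it as $s_{j+1}=\tfrac14+\tfrac12 s_j-\varepsilon$ (so the gain per step is $\tfrac14(1-2s_j)-\varepsilon$, geometrically decaying) rather than as a fixed increment $\delta$; this is a cosmetic difference and either way the iteration terminates in $O(1)$ rounds for fixed $s<\tfrac12$.

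The decisive gap is in your conditional concentration step. You propose to deduce sub-Gaussian concentration for the conditioned plasma in $D'$ via a Brascamp--Lieb / Bakry--\'Emery argument, using the claimed uniform convexity of the effective potential. This does not work in two dimensions, because the pair interaction $-\log|z_j-z_k|$ is \emph{not} convex as a function on $\C\cong\R^2$: it is harmonic away from the diagonal, so its Hessian has eigenvalues $\pm|z_j-z_k|^{-2}$ and in particular a direction of strictly negative curvature. (This is a genuine 2D phenomenon: on the line, $-\log|x-y|$ is convex in each variable, which is precisely what makes the Brascamp--Lieb route viable for $\beta$-ensembles.) Hence $H_{N,V_{\rr{eff}}}$ is not convex even when the confinement is, and neither Brascamp--Lieb nor Bakry--\'Emery applies. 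The paper replaces this step entirely: Proposition~\ref{prop:step} derives the one-step estimate from energy lower bounds based on Newton's electrostatic theorem (comparing to smeared point charges, \eqref{e:Onsager0}) together with a Jensen upper bound on the partition function; no curvature hypothesis is used. This is the core mechanism you are missing.

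A secondary but also substantive issue is that after replacing $V_{\rr{eff}}$ by its deterministic mean-field proxy, one must control the equilibrium measure $\mu_W$ of the resulting \emph{infinite-outside-the-disk} potential. It is not automatic that $\mu_W$ fills the disk $D$ or that its density remains bounded; harmonic perturbations coming from the exterior charges can deplete the bulk and/or create an unbounded singular layer on $\partial D$. The paper devotes Propositions~\ref{prop:innerradius}--\ref{prop:bddensity} (and the geometric Lemma~\ref{lem:lrsingle}) exactly to this: showing the support of $\mu_W$ covers an inner disk of radius $(1-\kappa)N^{-s}$ with explicit $\kappa$, and bounding the arclength density of the boundary charge. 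Your proposal silently assumes the effective problem is a well-posed quadratic confinement on $D$, which requires this stability input.

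Finally, your base case as stated is weaker than what the induction needs: the macroscopic LDP for the empirical measure gives weak convergence with speed $N^2$ for fixed test functions, but (A$_0$) demands a quantitative bound of the form $O(N^{-1}\|\Delta f\|_\infty + N^{-1/2}\|\nabla f\|_2)$ for arbitrary $f$, uniformly, at exponential scale $N$. In the paper this follows from the same Proposition~\ref{prop:step} applied with $M=N$, $W=V$; citing the LDP does not directly produce it.
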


Theorem~\ref{thm:locallaw} establishes a local density on all scales $N^{-s}$,
$s\in (0,\frac12)$. 
Indeed, by choosing $f$ to be an approximate $\delta$-function,
the theorem implies that, for any $s\in(0,\frac12)$ and any $z \in \supp \mu_V$,
with very high probability,
\begin{equation*}
  \hat \mu(B(z,N^{-s}))
  = \mu_V(B(z, N^{-s})) \pB{1+ O\pB{N^{-\frac12+s+o(1)}}}
  = \mu_V(B(z, N^{-s})) (1+ o(1)),
\end{equation*}
where $B(z,r)$ is the disk of radius $r$ centered at $z$.
Thus the number of particles in $B(z,N^{-s})$
is concentrated around $N\mu_V(B(z,N^{-s})) \approx \frac{1}{4\pi}\Delta V(z) N^{1-2s}$. 
Here the scale $s \in (0,\frac12)$ is optimal.

On the other hand, Theorem~\ref{thm:locallaw} only shows that the fluctuations are at most
as as large as those of a Poisson process (up to a logarithmic correction).
The following theorem improves this bound on the fluctuations significantly,
providing the optimal order of fluctuations for smooth linear statistics.
This shows the particle configurations are \emph{rigid}.

\begin{theorem} \label{thm:rigidity}
Under the assumptions of Theorem~\ref{thm:locallaw},
and assuming in addition that $V$ and $f$ are both in $C^4$,
for any sufficiently small $\varepsilon>0$,
we have
\begin{equation} \label{e:rigidity}
  \sum_{j=1}^N f(z_j) - N\int f(z) \, \mu_V(dz)
  = O\p{N^{\varepsilon}} \pa{\sum_{l=1}^4 N^{-ls} \|\nabla^l f\|_\infty},
\end{equation}
with probability at least $1-\ee^{-\beta N^{\varepsilon}}$ for sufficiently large $N$.
The implicit constant depends on $\varepsilon$, $\alpha_0$, $s$, and $V$.
\end{theorem}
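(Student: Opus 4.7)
The plan is to combine the local density of Theorem~\ref{thm:locallaw} with the two-dimensional loop equation (Schwinger--Dyson identity), obtained by integration by parts in the holomorphic variable $z_j$. For any smooth $\phi:\C\to\C$ of compact support,
\begin{equation*}
\E\qbb{\sum_j \partial\phi(z_j) - \beta N \sum_j \phi(z_j)\,\partial V(z_j) + \frac{\beta}{2}\sum_{j\neq k}\frac{\phi(z_j)-\phi(z_k)}{z_j-z_k}} = 0.
\end{equation*}
Setting $\nu := \hat\mu - \mu_V$, $K_\phi(z,w) := (\phi(z)-\phi(w))/(z-w)$, and $\mathcal{M}[\phi](z) := \int (z-w)^{-1}\phi(w)\,d\mu_V(w)$, and invoking the Euler--Lagrange relation $\partial V(z)=\int(z-w)^{-1}\,d\mu_V(w)$ on $\supp\mu_V$, the identity rearranges (after separating the diagonal) to
\begin{equation*}
\beta N^2\,\E\qB{\int \mathcal{M}[\phi]\,d\nu} = \pa{1-\tfrac{\beta}{2}} N\,\E\qB{\int \partial\phi\,d\hat\mu} + \frac{\beta N^2}{2}\,\E\qB{\iint K_\phi\,d\nu\,d\nu}.
\end{equation*}

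To extract the fluctuation $X_f := \sum_j f(z_j) - N\int f\,d\mu_V$, I would invert $\mathcal{M}$ via the Cauchy--Pompeiu identity $\bar\partial_z\mathcal{M}[\phi](z) = -\pi\phi(z)\rho_V(z)$, where $\rho_V := \Delta V/(4\pi)\geq \alpha_0/(4\pi) > 0$ on $\supp\mu_V$. Since $f$ is supported strictly inside the support, taking $\phi := -\bar\partial f/(\pi\rho_V)$ yields $\mathcal{M}[\phi] = f$ on $\supp\mu_V$ up to an additive constant, which drops out since $\nu$ has zero mass. The hypotheses $V,f \in C^4$ and $\alpha_0\leq\Delta V\leq\alpha_0^{-1}$ give $\phi \in C^3$ with $\norm{\nabla^k\phi}_\infty \lesssim \norm{\nabla^{k+1}f}_\infty$ for $k\leq 3$. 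Substituting,
\begin{equation*}
\beta N\,\E[X_f] = \pa{1-\tfrac{\beta}{2}} N\,\E\qB{\int\partial\phi\,d\hat\mu} + \frac{\beta N^2}{2}\,\E\qB{\iint K_\phi\,d\nu\,d\nu}.
\end{equation*}

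The heart of the proof is the quadratic remainder $\iint K_\phi\,d\nu\,d\nu$. Although $K_\phi$ extends smoothly across the diagonal (with $K_\phi(z,z)=\partial\phi(z)$), it decays only as $1/|z-w|$ when one argument leaves $\supp\phi$, so a crude $L^\infty$ bound on $K_\phi$ times the total variation of $\nu$ wastes a polynomial factor in $N$. I would decompose $K_\phi$ dyadically in $|z-w|$ on scales $N^{-s'}$ for $s\leq s'\leq \half$ via a smooth partition of unity, applying Theorem~\ref{thm:locallaw} to a mollified test function on each scale. Summing the geometrically decaying contributions produces
\begin{equation*}
\absbb{\iint K_\phi\,d\nu\,d\nu} \leq N^{-1+\varepsilon}\sum_{l=1}^4 N^{-ls}\norm{\nabla^l f}_\infty
\end{equation*}
with probability at least $1-e^{-(1+\beta)N^{1-2s}}$, which combined with the direct estimate of $\int\partial\phi\,d\hat\mu$ by Theorem~\ref{thm:locallaw} establishes~\eqref{e:rigidity} for $\E[X_f]$.

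To upgrade this mean estimate to the stated exponential concentration of $X_f$ itself, I would apply the same scheme to the tilted Gibbs measure $Z(t)^{-1}e^{tX_f}\,dP_{N,V,\beta}$, which is again an OCP with perturbed potential $V-tf/N$. For $|t|\leq N^\varepsilon$ the perturbation satisfies $\alpha_0/2\leq\Delta(V-tf/N)\leq 2/\alpha_0$ and its equilibrium measure $\tilde\mu$ differs from $\mu_V$ by $O(N^{-1+\varepsilon})$ in a suitable norm, so both Theorem~\ref{thm:locallaw} and the loop-equation bound above transfer to the tilted ensemble uniformly in $t$. Since $\partial_t\log Z(t) = \E_t[X_f]$, integrating yields $\E[e^{tX_f}] \leq \exp(C t^2 (M')^2/\beta)$ with $M' := \sum_{l=1}^4 N^{-ls}\norm{\nabla^l f}_\infty$, and Markov's inequality at $t=\beta N^\varepsilon/M'$ gives $\P(\abs{X_f}\geq N^\varepsilon M')\leq e^{-\beta N^\varepsilon}$. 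The main obstacles are (i)~the multi-scale estimate for $\iint K_\phi\,d\nu\,d\nu$, whose dyadic decomposition must exactly match the regularity hypothesis of Theorem~\ref{thm:locallaw}, and (ii)~transferring the local density and its constants to the tilted ensemble uniformly in $t$, which requires quantitative stability of the equilibrium measure under the perturbation $V\mapsto V-tf/N$.
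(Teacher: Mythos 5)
Your plan follows the same road map as the paper: loop equation, inversion of the master operator via $h=4\bar\partial f/\Delta V$ (your $\phi=-\bar\partial f/(\pi\rho_V)$ is the same function), a multiscale estimate for the quadratic error $\iint K_\phi\,d\nu\,d\nu$ using Theorem~\ref{thm:locallaw}, and a tilt of the Gibbs measure to upgrade the mean bound to exponential concentration. The pivot points are correct. However, the two most delicate steps are sketched too loosely to constitute proofs, and one step is over-claimed.

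First, on the quadratic kernel: you propose to decompose $K_\phi$ dyadically in $|z-w|$ and ``apply Theorem~\ref{thm:locallaw} to a mollified test function on each scale.'' But Theorem~\ref{thm:locallaw} controls $\int g\,d\nu$ for a test function $g$ of a \emph{single} variable, whereas $\iint K_\phi\,d\nu\,d\nu$ has a two-variable kernel. A dyadic cutoff in $|z-w|$ alone does not produce, at each scale, a kernel that splits as a product $u(z)v(w)$ with each factor supported in a ball of that scale, which is what one needs to apply the local law twice. The paper's Lemma~\ref{lem:Ehh} achieves this tensorization with the continuous identity $1/|z|^2 = C\int_0^\infty\int_\C \varphi(|z-\zeta|/t)\varphi(|\zeta|/t)\,m(d\zeta)\,dt/t^5$: at each $(t,\zeta)$ the factors $\varphi(|z-\zeta|/t)$ and $\varphi(|w-\zeta|/t)$ are separately supported in $B_t(\zeta)$, so the local law is applied independently in $z$ and $w$ at scale $t$. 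You would also need to say something about the scales you omitted: for $|z-w|\lesssim N^{-1/2}$ (below the microscopic scale), the local law does not directly apply and the paper instead bounds the singular piece by a chaining argument on $\hat\mu(B_j)^2\lesssim \theta^4/N^2$ for balls of radius $\theta/\sqrt N$; and for $|z-w|\gtrsim N^{-s}$ the off-diagonal contribution from pairs $(z,w)$ with exactly one point in $\supp h$ must be absorbed into the decomposition, since $h(z)-h(w)$ is not supported near the diagonal.

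Second, the concentration step claims $\E[e^{tX_f}]\leq\exp(Ct^2(M')^2/\beta)$. This Gaussian-type cumulant bound does not follow from the argument you outline, because $\E_t[X_f] = \partial_t\log Z(t)$ is not $O(t)$: the loop equation yields $\partial_t F_{N,V,\beta}(tf) = O(N^\epsilon)\|f\|_{4,N^{-s}}$ uniformly in $t\in[0,1]$, an error that does \emph{not} vanish at $t=0$ (there are local-law remainders and the $1/\beta\int\partial h\,d\hat\mu$ term which are order-one). The correct conclusion is the weaker linear bound $F_{N,V,\beta}(f)=O(\beta N^\epsilon)$ for $\|f\|_{4,N^{-s}}\leq\beta\alpha_0$, obtained by integrating the uniform estimate over $t\in[0,1]$ for a renormalized test function; Markov's inequality applied to this rescaled cumulant generating function then gives the stated tail $e^{-\beta N^\epsilon}$. (Also, a small bookkeeping point: the tilted Gibbs measure $Z(t)^{-1}e^{tX_f}dP_{N,V,\beta}$ corresponds to the potential $V-tf/(\beta N)$, not $V-tf/N$.)
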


Note that the left-hand side of \eqref{e:rigidity} is \emph{not} normalized by $N$.
Thus Theorem~\ref{thm:rigidity} shows that the fluctuations of smooth linear statistics of the OCP
are $N^{o(1)}$ which is much smaller than the fluctuations of order $N^{\frac12-s}$ for a Poisson process.

\subsection{Strategy}
\label{sec:strategy}

\begin{figure}[t]
\begin{center}
\input{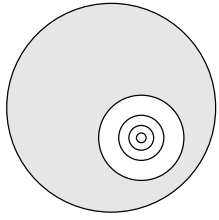}
\end{center}
\caption{Our strategy involves conditioning on the particle positions outside of increasingly small nested balls.
\label{fig:zoom}}
\end{figure}

Our strategy to prove Theorem~\ref{thm:locallaw} relies on potential theory
and a multiscale iteration of bounds from mean-field theory, as follows.
Using a straightforward mean-field estimate, the density can be bounded on scales larger than $N^{-1/4}$.
In the first iteration, we then fix a small disk $B$ of radius a little bit larger than $N^{-1/4}$ inside the support of the equilibrium measure.
By the initial estimates, $B$ typically contains slightly more than $M = N \times (N^{-1/4})^2 = N^{1/2}$ particles.
We then prove that for most of the particle configurations outside $B$,
we can apply a robust version of the mean-field estimate to the conditional particle distribution inside $B$.
This then yields a density on a smaller scale, namely the new estimate is accurate up to slightly more than $M^{1/2} = N^{1/4}$ particles.
Iterating this procedure, we control scales containing only $N^{\varepsilon}$ particles, for arbitrary $\varepsilon >0$.
A somewhat delicate aspect in this procedure is controlling the properties of effective equilibrium measures
at smaller scales in a sufficiently robust way.

To prove Theorem~\ref{thm:rigidity}, we use the \emph{loop equation}
(which is also known as \emph{Schwinger-Dyson equation} and can also be seen as a \emph{Ward identity}),
with Theorem~\ref{thm:locallaw} as the key input to control the arising error terms. This is more subtle
than in similar applications in one dimension because the resulting equation is singular in two dimensions.
To control the singularity, our proof makes use of the estimates on local scales provided by Theorem~\ref{thm:locallaw};
see, in particular, Lemma~\ref{lem:Ehf}.
Previous uses of the loop equation in related contexts are discussed in Section~\ref{sec:related} below.

\subsection{Related results}
\label{sec:related}

The potential theory associated to \eqref{e:IV} has been the subject of considerable study,
and in fact the comprehensive monograph \cite{MR1485778} is entirely devoted to it. The closely related
obstacle problem is also well studied; see in particular \cite{MR3056295,MR1658612}.

For the positive temperature two-dimensional OCP,
a Large Deviation Principle (LDP) with speed $N^2$
for the empirical measure has been proved \cite{MR1606719,MR1660943}.
As was discussed in Section~\ref{sec:result}, this LDP concerns the macroscopic scale $1$ of the system.
Recently, an LDP with speed $N$
has been proved for a tagged point process \cite{MR3353821,1502.02970}.
Correspondingly, it is shown that for two potentials $V,W$ (satisfying reasonable technical assumptions),
\begin{multline*}
  \log Z_{N,V,\beta} - \log Z_{N,W,\beta} = - \beta N^2 (I_V(\mu_V)-I_W(\mu_W))
  \\
  - \beta N \pa{\frac{1}{\beta}-\frac12}
  \pa{ \int \mu_V(z) \log \mu_V(z) \, dm - \int \mu_W(z) \log \mu_W(z) \, dm }
  + o((1+\beta)N).
\end{multline*}
Related results have also been established in higher dimensions \cite{CPA21570,1502.02970}.
From such estimates, one can obtain a density for the empirical measure down to scale $N^{-1/4}$
(in two dimensions).
To obtain a local density at scale $N^{-1/2+o(1)}$ near any fixed point from bounds on the partition
function, a cancellation with error $N^{o(1)}$ instead of $o(N)$ would be required.
Our approach does not attempt to directly compute the partition function so precisely.
Instead, as sketched in Section~\ref{sec:strategy},
to establish a local density (Theorem~\ref{thm:locallaw}),
we use an inductive scheme 
to improve a weaker estimate with error roughly $O(N\log N)$ and
effectively replace the error by $O(N^{1-2s} \log N)$ on scale $N^{-s}$.
This allows us to reach the optimal scale $s = \frac12 - o(1)$.
In parallel to the first version of this paper, a result closely related to Theorem~\ref{thm:locallaw} has appeared in \cite{1510.01506}.

To improve these estimates and obtain the optimal bound on the
fluctuations in Theorem~\ref{thm:rigidity}, we rely on the \emph{loop equation}.
The loop equation has been used previously to study linear statistics of the OCP and log gases,
in \cite{MR1487983} for any $\beta>0$ in one dimension, and
in \cite{MR3342661,MR2817648} for $\beta=1$ in two dimensions.
Moreover, the loop equation was used to prove rigidity in one dimension in \cite{MR3192527,MR3253704,MR2905803},
and to derive an asymptotic expansion for the partition function in \cite{MR3010191,1303.1045}.

For inverse temperature $\beta=1$, the model is a determinantal point process with explicit correlation kernel,
and using this structure very detailed properties are known; see e.g. \cite{MR2641363}.
In particular, convergence of the fluctuations at macroscopic scale to a Gaussian free field has been established \cite{MR2361453,MR2817648,MR3342661,1507.08674}.
Charge fluctuations have been studied \cite{MR1239571}, and
heuristic arguments for very detailed behaviour are given in \cite{MR2932659,MR2240466,MR1986427,MR2037301}.

On the microscopic scale, for $\beta \neq 1$, simulations suggest
the existence of a phase transition (the critical value $2\beta \approx 140$  is mentioned \cite{JSP28325}),
with the system crystallizing at low temperatures to a so-called Wigner crystal \cite{JSP28325,PhysRev.46.1002}.
The nature of this conjectural phase transition does not appear to be understood well. 
Even in the zero temperature case it is not understood whether the system crystallizes.
Other heuristic predictions are discussed in \cite{MR604143}.

As mentioned previously, a special instance of the two-dimensional one-component plasma is the Ginibre ensemble of random Gaussian
matrices. The natural generalization from the point of view of random matrix theory are
random matrices with i.i.d.\ but non-Gaussian entries (without symmetry constraint).
In this case, Girko's hermitization trick \cite{MR773436} essentially allows to
reduce the problem to that of a symmetric random matrices, and very precise results have been obtained
using this method \cite{MR1437734,MR3230002,MR3230004,MR3278919,MR2409368,MR2722794}.

The $\beta$-ensemble is a one-dimensional version of \eqref{e:Pdef} (with logarithmic interaction also in one dimension),
which is understood extremely well.
In particular, bounds on partition function and global statistics have been proved in \cite{MR1487983,MR1327898},
a large deviation principle in \cite{MR1465640},
a complete $1/N$ expansion for the partition function was derived in \cite{MR3010191,1303.1045},
and universality of local statistics (i.e.\ that these are independent of the potential $V$)
has been proved using the method of orthogonal polynomials for $\beta=1,2,4$ (see e.g.\ \cite{MR1435193,MR1702716}),
and more recently for all $\beta>0$ by direct comparison,
first in \cite{MR3192527,MR3253704,MR2905803,MR3372074}, and then also in \cite{MR3390602,MR3351052,CPA21573}.
Moreover, it has been found that, for general temperature, the $\beta$-ensemble with quadratic potential
can be realized as the joint law of the eigenvalues of a tridiagonal random matrix ensemble \cite{MR1936554}.
Using this representation, the correlation functions have been characterized explicitly in terms of stochastic
differential equations \cite{MR2534097},
and a local version of the semicircle law \cite{MR2966359} has also been proved via this representation.

\subsection{Outline of the paper}

In Section~\ref{sec:eqmeasure}, we state the essential potential theoretic results that underlie our analysis.
In Section~\ref{sec:eqmeasurept}, we prove several estimates on the behaviour of the equilibrium measure under
classes of perturbations of the potential that are important for our analysis.
In Section~\ref{sec:step}, we prove a general estimate that will ultimately be iterated to prove Theorem~\ref{thm:locallaw}.
In Section~\ref{sec:condmeas}, we set-up the conditioning used for the multiscale analysis.
In Section~\ref{sec:pf}, we prove Theorem~\ref{thm:locallaw} by inductively applying the previously
proved results.
Finally, in Section~\ref{sec:rigidity}, we use Theorem~\ref{thm:locallaw} as
the input for the loop equation to prove Theorem~\ref{thm:rigidity}.

\subsection{Notation}

We write $\Delta = \partial_x^2 + \partial_y^2 = \frac14 \bar\partial\partial$ for the Laplacian on $\C$ identified as $\R^2$,
where $\partial = \frac12 (\partial_x - \ii\partial_y)$ and $\bar\partial = \frac12 (\partial_x +\ii\partial_y)$,
$\D = \{z \in \C : |z| < 1\}$ for the open unit disk,
$dm$ for the 2-dimensional Lebesgue measure on $\C$,
and $ds$ for the arclength measure on the boundary of an open subset of $\C$.
There should be no confusion between the measure $ds$ and the scale parameter $s$
as appearing in Theorem~\ref{thm:locallaw}.
The space of (Borel) probability measures on a set $\Sigma \subset \C$ is denoted by $P(\Sigma)$.
We write $(f,\mu) = \int f \, d\mu$ if $\mu$ is a measure and $f \in
L^1(\mu)$, and similarly $(u,v)=\int uv \, dm$ if $u,v \in L^2(\C)$.
We use $C$ and $c$ to denote constants which may change from instance to instance,
and also use the usual Landau notation, $o(x)$ and $O(x)$.
All error estimates are under the tacit assumption that $N$ is sufficiently large.

\section{Characterizations of equilibrium measure}
\label{sec:eqmeasure}

In this short section we describe some standard fundamental results on the equilibrium measure:
its characterization as an energy minimizing measure and as the solution of an obstable problem.
Good references for this material are the monograph \cite{MR1485778} and,
especially with our application in mind, the article \cite{MR3056295}.

\subsection{Energy minimizing measure}

Let $P(\Sigma)$ denote the set of probability measures supported on the closed set $\Sigma \subset \C$.
For $\Sigma \subset \C$, we say that $\Sigma$ has positive (logarithmic) capacity if
\begin{equation*}
  \inf_{\mu \in P(\Sigma)} D(\mu,\mu) < \infty, \qquad \text{where }
  D(\mu,\mu) = I_0(\mu,\mu) = \iint \log \frac{1}{|z-w|} \, \mu(dz) \, \mu (dw).
\end{equation*}
Generally, if some property holds everywhere on $\C$ except on a set of zero (nonpositive) capacity, we say that it holds quasi-everywhere (q.e.).
We remark that a property holding quasi-everywhere implies it almost everywhere (w.r.t.\ Lebesgue measure) but not vice versa.

Throughout this paper, we consider potentials 
that are in the space $C^{1,1}_{loc}(\C)$
of differentiable functions whose derivative is locally Lipschitz continuous,
and which satisfy the growth condition \eqref{e:Vgrowth}.
Our strategy of proof requires us to treat modifications of such potentials where we set the potential
to $\infty$ outside some disk. We will always assume that the set
$
  \Sigma_V = \overline{\{z \in \C: V(z) < \infty\}}
$
has positive capacity. In general the regularity theory of equilibrium measures and their potentials is rather subtle.
However, the assumption $V \in C^{1,1}_{loc}$ (which we need for other reasons as well) considerably simplifies the theory.

For any measure $\mu \in P(\C)$, we denote the (weighted logarithmic) energy of $\mu$ by \eqref{e:IV},
and the (logarithmic) potential of $\mu$ by
\begin{equation*}
  U^{\mu}(z) = \int \log \frac{1}{|z-w|} \, \mu(dw).
\end{equation*}
Since $\Delta \log |\cdot| = 2 \pi \delta_0$ in the sense of distributions, a measure may always be uniquely recovered from its potential.
Conversely, for any superharmonic function $U$ that is harmonic near $\infty$ and satisfies $U(z) \sim \log \frac{1}{|z|}$ as $|z| \to \infty$
there exists a constant $c \in \R$ for which $U+c$ is the potential of some compactly supported $\mu \in P(\C)$.

The following existence and characterization theorem is fundamental.

\begin{theorem}[Frostman] \label{thm:eqmeasure}
  Suppose $V$ is lower semicontinuous and satisfies \eqref{e:Vgrowth}, and that $\Sigma_V$ has positive capacity. Then there exists a unique $\mu_V \in P(\Sigma_V)$ such that
  \begin{equation*}
    I_V(\mu_V) = \inf \{ I_V(\mu): \mu \in P(\Sigma_V) \}.
  \end{equation*}
  The support $S_V = \supp \mu_V$ is compact and of positive capacity, and $I_V(\mu_V) < \infty$.
  
  The energy-minimizing measure $\mu_V$ may be characterized as the unique element of $P(\Sigma_V)$ for which there exists a constant $c \in \R$ such that
  Euler-Lagrange equation
  \begin{align}
    \label{e:EL}
    U^{\mu_V} + \tfrac12 V = c & \quad \text{q.e.\ in $S_V$} \quad \text{and}\\
    U^{\mu_V} + \tfrac12 V \geq c & \quad \text{q.e.\ in $\C$} \nonumber
  \end{align}
  holds. Also, necessarily $c = F_V$, with the definition $F_V = I_V(\mu_V) - \frac12 (V,\mu_V)$.
\end{theorem}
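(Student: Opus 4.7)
The plan is to use the direct method of the calculus of variations for existence, strict convexity of $I_V$ for uniqueness, and a first variation argument for the Euler--Lagrange characterization. I set $F_V = I_V(\mu_V) - \tfrac12(V,\mu_V)$ and aim to identify this with the constant $c$ in \eqref{e:EL}. For \emph{existence and compactness}, the growth condition \eqref{e:Vgrowth} produces $R_0>0$ and $c_0\in\R$ with $V(z)\geq (2+\epsilon)\log|z|+c_0$ for $|z|\geq R_0$. Combined with the elementary bound $\log|z-w|^{-1} \geq -\log(1+|z|)-\log(1+|w|) - C$, this gives, for any $\mu\in P(\C)$,
\[
  I_V(\mu) \geq \int\!\bigl(V(z)-2\log(1+|z|)\bigr)\,\mu(dz) - C_1,
\]
whose integrand tends to $+\infty$ as $|z|\to\infty$. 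Hence sublevel sets of $I_V$ are uniformly tight, and any minimizing sequence admits a weakly convergent subsequence with limit $\mu_V \in P(\Sigma_V)$ of compact support. Lower semicontinuity of $I_V$ follows by truncating the interaction kernel at height $M$ (which makes it bounded continuous on the common compact support) and letting $M\to\infty$ by monotone convergence, while the external field contributes via Fatou using lower semicontinuity of $V$. Since a smooth density inside $\Sigma_V$ has finite energy, $\mu_V$ is an actual minimizer.

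For \emph{uniqueness}, the key input is that for any finite signed measure $\nu$ of compact support with $\nu(\C)=0$ and $D(|\nu|,|\nu|)<\infty$, the potential $U^\nu$ satisfies $U^\nu(z) = O(|z|^{-1})$ at infinity with $\nabla U^\nu \in L^2(\C)$; integration by parts (using $\Delta U^\nu = -2\pi\nu$ distributionally) yields
\[
  D(\nu,\nu) = \frac{1}{2\pi}\int_{\C}|\nabla U^\nu|^2\,dm \;\geq\; 0,
\]
with equality only when $\nu \equiv 0$. Consequently $\mu\mapsto D(\mu,\mu)$ is strictly convex on $\{D<\infty\}$, and since $(V,\mu)$ is linear, $I_V$ is strictly convex on $\{I_V<\infty\}$; two minimizers would make the midpoint strictly better, a contradiction. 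Boundedness of $V$ on the compact $S_V$ together with $I_V(\mu_V)<\infty$ then gives $D(\mu_V,\mu_V)<\infty$, so $S_V$ has positive capacity.

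For the \emph{Euler--Lagrange equation}, given any $\nu\in P(\Sigma_V)$ with $I_V(\nu)<\infty$, convexity along $\mu_t=(1-t)\mu_V+t\nu$ forces $\frac{d}{dt}|_{t=0^+}I_V(\mu_t)\geq 0$; a direct computation (using $I_V(\mu_V) = D(\mu_V,\mu_V) + (V,\mu_V)$) reduces this to
\[
  \int\!\bigl(U^{\mu_V}+\tfrac12 V\bigr)\,d\nu \;\geq\; \int\!\bigl(U^{\mu_V}+\tfrac12 V\bigr)\,d\mu_V \;=\; F_V.
\]
Every Borel set of positive capacity supports a probability measure of finite logarithmic energy; localized inside a large bounded region where $V$ is bounded above, such measures belong to $P(\Sigma_V)$ with $I_V<\infty$. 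Applied to test measures concentrated in shrinking neighbourhoods of a point, the inequality forces $U^{\mu_V}+\tfrac12 V \geq F_V$ quasi-everywhere on $\C$. Conversely, strict inequality on a set of positive $\mu_V$-measure inside $S_V$ would yield $F_V>F_V$ upon integration against $\mu_V$; since $\mu_V$ has finite logarithmic energy it does not charge polar sets, so equality holds q.e.\ on $S_V$, and the constant is $c=F_V$ by definition.

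The \emph{principal obstacle} is promoting the integrated variational inequality into a pointwise q.e.\ statement: this rests on the standard but nontrivial potential-theoretic fact that every Borel set of positive capacity carries a probability measure of finite logarithmic energy, together with a localization argument placing such test measures in arbitrarily small neighbourhoods of a prescribed point while maintaining $I_V$-finiteness. A secondary technicality is the simultaneous unboundedness of $\log|z-w|^{-1}$ (above on the diagonal and below at infinity), which is tamed by the a priori compactness of supports established at the outset combined with the truncation procedure.
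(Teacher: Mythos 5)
The paper does not prove this theorem itself; it simply cites \cite[Theorem~I.3.3]{MR1485778} (Saff--Totik), whose proof is exactly the direct-method/strict-convexity/first-variation argument you sketch. So your approach matches the intended one. A few points in your sketch deserve tightening before it could stand as a proof. First, you argue $\inf I_V<\infty$ by placing a smooth density \emph{inside} $\Sigma_V$; but positive capacity of $\Sigma_V$ does not imply $\Sigma_V$ has nonempty interior, so the right device is to take a measure of finite logarithmic energy on the positive-capacity set and check that $(V,\cdot)$ is finite against it (using lower semicontinuity of $V$ and a localization to a bounded region where $V$ is bounded above on $\{V<\infty\}$). Second, when deducing $D(\mu_V,\mu_V)<\infty$ you invoke ``boundedness of $V$ on the compact $S_V$''; lower semicontinuity only gives a lower bound there, but that is in fact what you need: both $D(\mu_V,\mu_V)$ and $(V,\mu_V)$ are bounded below on a compact support, so finiteness of their sum forces each to be finite. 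Third, and most substantively, the statement asserts a bidirectional characterization: that $\mu_V$ is the \emph{unique} probability measure satisfying \eqref{e:EL} for some $c$. You only prove the forward direction (minimizer $\Rightarrow$ Euler--Lagrange); the converse (Euler--Lagrange $\Rightarrow$ minimizer, and uniqueness of $c$) needs a separate argument, typically by comparing two measures via $D(\mu-\mu',\mu-\mu')\geq 0$ and exploiting the EL inequalities, which you should add. With these repairs the sketch is a correct rendering of the classical proof.
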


\begin{proof}
  See \cite[Theorem I.3.3]{MR1485778}.
\end{proof}

\subsection{Obstacle problem}

Based on the characterization \eqref{e:EL}, to determine the equilibrium measure $\mu_V$ it is essentially enough to determine its support $S_V$.
However, changing a measure $\mu$ locally generally changes its potential $U^\mu$ everywhere, making the determination of $S_V$ through \eqref{e:EL} difficult.
The characterization of $\mu_V$ as the energy-minimizing measure is likewise non-local and thus difficult to apply to the problem of actually determining $S_V$.
To get hold of $S_V$ in a local, effective way we will instead apply the characterization of $U^{\mu_V}$ as the solution
of an obstacle problem associated to $V$. This connection is discussed for example in \cite{MR3056295}, to which we will refer in this section.

Denote the class of subharmonic functions on $\C$ by $\r{subh}(\C)$ and, given $V$, define
\begin{equation} \label{e:obstacleproblem}
  u_V(z) = \sup \left\{ v(z): v \in \r{subh}(\C), \, v \leq \tfrac12 V \text{ q.e. on } \C, \,
    \limsup_{|z| \to \infty} \big(v(z) - \log |z|\big) < \infty \right\}.
\end{equation}
Note that the conditions $v \in \r{subh}(\C)$ and $\limsup_{|z| \to \infty} \big(v(z) - \log |z|\big) < \infty$ imply
that $v$ is of the form $c - U^\nu$ for some $c \in \R$ and positive measure $\nu$ with $\|\nu\| \leq 1$.
Further observe that $F_V - U^{\mu_V}$ with $F_V$ and $\mu_V$ as in \eqref{e:EL} satisfies all the three requirements for $v$
and thus $F_V - U^{\mu_V} \leq u_V$ quasi-everywhere. The converse inequality is given in Theorem \ref{thm:eqmeasure-regularity},
giving the promised characterization of the equilibrium potential.

Denote 
\begin{equation*}
  S_V^* = \{ z \in \C: u_V(z) \geq \tfrac{1}{2} V(z) \}.
\end{equation*}
Up to a set of capacity zero, $S_V^*$ is the same as the set $\{ u_V(z) = \tfrac{1}{2} V(z) \}$ and so it is called the \emph{coincidence set}.
The precise relation of the obstacle problem to the energy minimizing problem is given in the
following theorem, summarizing several results from \cite{MR3056295}.

\begin{theorem} \label{thm:eqmeasure-regularity}
  Let $V$ be as in Theorem \ref{thm:eqmeasure} and define $u_V$ by \eqref{e:obstacleproblem}. Then
\begin{enumerate}
\item
For q.e.\ $z \in \C$,
\begin{equation*}
  u_V(z) = F_V - U^{\mu_V}(z).
\end{equation*}
Especially $u_V(z) = \tfrac{1}{2} V(z)$ q.e.\ in $S_V$, which in turn implies $S_V \subset S_V^*$.
\item
The measure $\mu_V$ is given by
\begin{equation}
  \label{e:equilibriumdensity}
  \mu_V = \tfrac{1}{2\pi} \Delta u_V ,
\end{equation}
where the Laplacian is understood in the distributional sense.
\item
Suppose $V$ is $C^{1,1}$ in a neighbourhood of $z \in S_V^*$.
Then also $u_V$ is $C^{1,1}$ in a neighbourhood of $z$. Also if $z \notin S_V^*$ then $u_V$ is harmonic in a neighbourhood of $z$.
\end{enumerate}
\end{theorem}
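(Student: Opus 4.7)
The plan is to treat the three parts separately: (i) reduces to Frostman's theorem combined with the domination principle of classical potential theory, (ii) is a direct distributional computation, and (iii) is the sharp regularity for obstacle problems, which I would invoke from \cite{MR3056295}.

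For (i), set $v_0 \deq F_V - U^{\mu_V}$. Since $\mu_V \geq 0$, $U^{\mu_V}$ is superharmonic on $\C$, so $v_0$ is subharmonic. The standard expansion $U^{\mu_V}(z) = \log \frac{1}{|z|} + O(|z|^{-1})$ as $|z|\to\infty$ (valid since $\mu_V$ is a compactly supported probability measure) yields $\limsup_{|z|\to\infty}(v_0(z) - \log|z|) = F_V < \infty$. The Frostman inequality $U^{\mu_V} + \frac12 V \geq F_V$ q.e.\ rewrites as $v_0 \leq \frac12 V$ q.e., so $v_0$ is admissible in \eqref{e:obstacleproblem}, giving $v_0 \leq u_V$ q.e. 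For the converse, any admissible competitor $v$ is subharmonic with growth $v(z)-\log|z|$ bounded above, so by the Riesz representation theorem $v = c - U^\nu$ for some $c\in\R$ and a positive Borel measure $\nu$ with $\|\nu\|\leq 1$. On $S_V$ one has $v \leq \frac12 V = v_0$ q.e. (using the already-proved $u_V = \frac12 V$ on $S_V$, which itself comes from combining $v_0 \leq u_V \leq \frac12 V$ with the equality $v_0 = \frac12 V$ q.e.\ on $S_V$ from Frostman). The classical domination principle (see \cite[Thm.~II.3.2]{MR1485778}), applied to $U^\nu$ and $U^{\mu_V}$, then upgrades this inequality from $S_V = \supp \mu_V$ to all of $\C$ quasi-everywhere, giving $v \leq v_0$ q.e. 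Taking the supremum over $v$ yields $u_V \leq v_0$, and the inclusion $S_V \subset S_V^*$ is immediate.

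For (ii), since $\Delta \log|\cdot| = 2\pi \delta_0$ in the distributional sense, differentiating under the integral gives $\Delta U^{\mu_V} = -2\pi \mu_V$ as distributions. Combining with (i) produces $\Delta u_V = -\Delta U^{\mu_V} = 2\pi \mu_V$, which is exactly \eqref{e:equilibriumdensity}.

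For (iii), I would invoke the obstacle problem regularity theory from \cite{MR3056295}. The point is that $u_V$ is the least superharmonic majorant of the obstacle $\frac12 V$ (with the correct behaviour at infinity), so it solves the obstacle problem with obstacle $\frac12 V$. If $z \notin S_V^*$ then $u_V(z) < \frac12 V(z)$; by upper semicontinuity of $u_V - \frac12 V$ (note $u_V \in \r{subh}(\C)$), this persists in a neighbourhood, where the constraint is inactive and $u_V$ therefore coincides with its harmonic envelope, hence is harmonic. If instead $z \in S_V^*$ and $V \in C^{1,1}$ near $z$, the Caffarelli-type $C^{1,1}$ estimate, adapted to this logarithmic obstacle problem in \cite{MR3056295}, gives $u_V \in C^{1,1}$ in a neighbourhood of $z$.

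The hard part is part (iii): the $C^{1,1}$ bound across the free boundary is the optimal regularity for the obstacle problem and its proof is genuinely nontrivial, relying on blow-up arguments and a delicate second-derivative control at free boundary points; I would not reprove it and would rely wholesale on the analysis in \cite{MR3056295}. The subharmonic representation step in (i), with the Riesz decomposition of an admissible $v$ and the verification that the domination principle applies, is the other nonroutine ingredient, but once the correct statement from \cite{MR1485778} is identified it is essentially a quotation.
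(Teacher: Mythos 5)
The paper's own proof is just a citation: \cite[Prop.~3.2, Cor.~3.4]{MR3056295} for (i)--(ii) and Caffarelli \cite[Thm.~2]{MR1658612} for (iii). You instead reconstruct (i) and (ii) from first principles, which is a more self-contained route, and defer only the genuine content, the $C^{1,1}$ regularity across the free boundary in (iii), to the literature (though you cite \cite{MR3056295} where the paper points to \cite{MR1658612}). Your distributional computation for (ii) is correct and is essentially the only possible argument. Your argument for (i)---admissibility of $F_V - U^{\mu_V}$ via Frostman, Riesz decomposition $v = c - U^\nu$ of a competitor, $v \leq v_0$ q.e.\ on $S_V$, then the principle of domination \cite[Thm.~II.3.2]{MR1485778} to globalize---is sound and is the standard potential-theoretic route to this identity. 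Two small points are worth patching. First, the domination principle as stated in \cite{MR1485778} assumes the dominating measure has compact support, whereas your $\nu$ need not; this is repaired by truncating $\nu$ to $B(0,R)$ (so that $U^{\nu|_{B(0,R)}}$ decreases to $U^\nu$ pointwise as $R \to \infty$), applying the principle to each truncation, and passing to the limit. Second, the parenthetical appeal to ``the already-proved $u_V = \tfrac12 V$ on $S_V$'' is an unnecessary detour: for the step $v \leq v_0$ q.e.\ on $S_V$ you need only $v \leq \tfrac12 V$ (admissibility) and $v_0 = \tfrac12 V$ q.e.\ on $S_V$ (Frostman). It is also worth saying explicitly that the domination principle yields $v \leq v_0$ at \emph{every} point, not merely q.e.; this stronger pointwise conclusion is what legitimizes taking the supremum over the uncountable family of competitors without worrying about a union of exceptional sets.
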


\begin{proof}
Proofs of (i) and (ii) can for example be found in
\cite[Proposition~3.2 and Corollary~3.4]{MR3056295}. For (iii), see \cite[Theorem 2]{MR1658612}.
\end{proof}

In particular, (ii--iii) imply that if $V \in C^{1,1}$ in a neighbourhood of $S_V$,
then the equilibrium measure $\mu_V$ is absolutely continuous with respect to the Lebesgue measure.

\section{Perturbations of equilibrium measure}
\label{sec:eqmeasurept}

Using the characterizations of the equilibrium measure of Theorems~\ref{thm:eqmeasure}--\ref{thm:eqmeasure-regularity},
we prove estimates on its dependence under
certain classes of perturbations of the potential. These estimates will play an important role in the proof of Theorem~\ref{thm:locallaw}.
Throughout this section, $V: \C \to \R \cup \{+\infty\}$ is a potential that is locally in $C^{1,1}$
and satisfies the assumptions of Theorem~\ref{thm:eqmeasure}, and $\mu_V$ is the  associated
equilibrium measure according to Theorem~\ref{thm:eqmeasure}.

\subsection{Local perturbation}

Given a potential $V$, 
the next result concerns the change
of its equilibrium measure and its energy under a change $V \to V-f$, where $f$ is a small local perturbation.

\begin{proposition} \label{prop:eqmeasf}
  Let $f \in C^2(\C)$ be bounded and satisfy  the conditions
  $\supp \Delta f \subset S_V$ and $\Delta V \geq \Delta f$ in $S_V$.
  Then $\mu_{V-f} = \mu_V - \frac{1}{4\pi} \Delta f$. In particular,
  $S_{V-f} \subseteq S_V$,
  \begin{equation} \label{e:zetafeq}
    U^{\mu_{V-f}} + \frac12 (V-f) - F_{V-f} =
    U^{\mu_{V}} + \frac12 V - F_{V},
  \end{equation}
  and 
  \begin{equation} \label{e:energyfineq}
    I_{V-f}(\mu_{V-f}) = I_V(\mu_V) - (f,\mu_V) - \frac{1}{8\pi} (f,-\Delta f).
  \end{equation}
\end{proposition}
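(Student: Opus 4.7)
The plan is to define the candidate measure $\nu = \mu_V - \frac{1}{4\pi}\Delta f$ and verify directly via Theorem~\ref{thm:eqmeasure} that it is the equilibrium measure $\mu_{V-f}$. First I would check that $\nu \in P(\C)$. Positivity of $\nu$ follows from the hypotheses: by Theorem~\ref{thm:eqmeasure-regularity} and $V \in C^{1,1}$, the measure $\mu_V$ has density $\frac{1}{4\pi}\Delta V$ in the interior of $S_V$, so $\Delta V \geq \Delta f$ on $S_V$ gives $\nu \geq 0$ there, while outside $S_V$ both $\mu_V$ and $\Delta f$ vanish. To see that $\nu$ has total mass $1$, the key point is $\int_\C \Delta f \, dm = 0$. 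Since $\Delta f$ is compactly supported and $f$ is bounded, $f$ is a bounded harmonic function outside a compact set, which in two dimensions forces its multipole expansion at infinity to have no logarithmic term; this is exactly the statement $\int \Delta f \, dm = 0$, and also shows that $f$ has a finite limit $f_\infty$ at infinity.

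Next I would verify the Euler--Lagrange conditions of Theorem~\ref{thm:eqmeasure} for $\nu$ with respect to $V-f$. From $\Delta U^{\mu} = -2\pi \mu$ and linearity, the function $U^{\Delta f/(4\pi)} + \tfrac12 f$ is harmonic on $\C$; it is moreover bounded, since $U^{\Delta f/(4\pi)}(z) = O(1/|z|)$ as $|z| \to \infty$ (using $\int \Delta f = 0$) and $f \to f_\infty$. By Liouville it equals its limit $\tfrac12 f_\infty$, giving
\begin{equation*}
  U^{\Delta f/(4\pi)} = \tfrac12(f_\infty - f)
  \qquad \text{and hence} \qquad
  U^\nu + \tfrac12(V-f) = U^{\mu_V} + \tfrac12 V - \tfrac12 f_\infty.
\end{equation*}
Combined with the Euler--Lagrange equations for $\mu_V$, this shows that $\nu$ satisfies the Euler--Lagrange conditions for $V-f$ with constant $F_V - \tfrac12 f_\infty$. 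Uniqueness in Theorem~\ref{thm:eqmeasure} then forces $\mu_{V-f} = \nu$ and $F_{V-f} = F_V - \tfrac12 f_\infty$, so $S_{V-f} \subseteq S_V$, and \eqref{e:zetafeq} is immediate from the displayed identity.

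Finally, for \eqref{e:energyfineq} I would expand directly. The identity $I_W(\mu_W) = F_W + \tfrac12(W,\mu_W)$, a restatement of the definition of $F_W$, applied to $V$ and $V-f$ gives
\begin{equation*}
  I_{V-f}(\nu) - I_V(\mu_V) = (F_{V-f} - F_V) + \tfrac12(V-f,\nu) - \tfrac12(V,\mu_V),
\end{equation*}
and the right-hand side is rewritten using $\nu = \mu_V - \tfrac{1}{4\pi}\Delta f$. The crucial auxiliary identity is
\begin{equation*}
  \tfrac{1}{4\pi}(V,\Delta f) = (f,\mu_V) - f_\infty,
\end{equation*}
which follows by substituting $\tfrac12 V = F_V - U^{\mu_V}$ on $S_V$ (legitimate since $\supp \Delta f \subset S_V$), using $\int \Delta f = 0$ to kill the constant contribution, and invoking $U^{\Delta f/(4\pi)} = \tfrac12(f_\infty - f)$ once more. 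After collecting terms, the factors of $f_\infty$ cancel and one obtains \eqref{e:energyfineq}. The main obstacle throughout is the behaviour at infinity, where only boundedness of $f$ (not compact support) is assumed; the Liouville argument avoids any direct integration by parts on unbounded domains and makes the statements about $f_\infty$ and the cancellation of logarithmic terms rigorous.
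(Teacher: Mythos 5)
Your proof is correct and takes essentially the same approach as the paper: define $\nu = \mu_V - \tfrac{1}{4\pi}\Delta f$, establish that $f$ is a constant plus the logarithmic potential $-\tfrac{1}{2\pi}\Delta f$ (equivalently your $U^{\Delta f/(4\pi)} = \tfrac12(f_\infty-f)$), check the Euler--Lagrange conditions for $\nu$ with respect to $V-f$, and invoke the uniqueness in Theorem~\ref{thm:eqmeasure}. For \eqref{e:energyfineq} you reorganize the bookkeeping slightly — going through the Robin constant via $I_W(\mu_W) = F_W + \tfrac12(W,\mu_W)$ and the auxiliary identity $\tfrac{1}{4\pi}(V,\Delta f) = (f,\mu_V)-f_\infty$, whereas the paper expands $I_{V-f}(\mu_{V-f})-I_V(\mu_V)+(f,\mu_V)$ directly in the bilinear form $D$ — but both rest on the same cancellations and are equally valid.
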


In preparation of the proof of the proposition we note that,
for $f$ bounded, twice differentiable and with compact $\supp \Delta f$,
as in the statement of the proposition,
it follows that
\begin{equation} \label{e:flogint}
f(z) = U^{-\frac{1}{2 \pi} \Delta f}(z) + c = \int \log \frac{1}{|z-w|} \left(- \frac{1}{2 \pi} \Delta f(w)\right) \, dm + c
\end{equation}
for some $c \in \R$, i.e.\ that $f$ can be written as a constant plus the logarithmic potential of its Laplacian.
Namely, the difference $f - U^{-\frac{1}{2 \pi} \Delta f}$ satisfies
\begin{equation*}
\Delta (f - U^{-\frac{1}{2 \pi} \Delta f})(z)
= \Delta f(z) - \int \left(-\frac{1}{2 \pi} \Delta f(w)\right) \, d (-2 \pi \delta_z(w)) = 0
\end{equation*}
for all $z \in \C$, implying that it is a harmonic function.
Clearly also $|(f - U^{-\frac{1}{2 \pi} \Delta f})(z)| = O(\log |z|)$ as $|z| \to \infty$.
By (a strong version of) Liouville's theorem for harmonic functions the difference is thus constant.
In particular, by the representation \eqref{e:flogint}, we have
\begin{equation} \label{e:Deltafint}
  \int \Delta f \, dm = 0,\quad
  \nabla f(z) = O(1/|z|^2) \;\; \text{ as $|z|\to\infty$},\quad
  (f,-\Delta f) = \|\nabla f\|_2^2.
\end{equation}

\begin{proof}
  Let $\mu = \mu_V - \frac{1}{4\pi} \Delta f$.
  By the assumption $\Delta V \geq \Delta f$ in $S_V$ and since $\mu_V = \frac{1}{4\pi} \Delta V$ on $S_V$,
  $\mu$ is a positive measure. 
  By \eqref{e:Deltafint}, we have
  \begin{equation*}
    \int d\mu  = 1 - \frac{1}{4\pi} \int \Delta f \, dm = 1,
  \end{equation*}
  which means that $\mu \in P(\C)$.
  Moreover, \eqref{e:flogint} implies
  \begin{equation} \label{e:zetacal}
    U^{\mu} + \frac12 (V-f)
    = U^{\mu_V}
    - \frac{1}{4\pi} \int \log\frac{1}{|z-w|} \Delta f(w) \, m(dw) 
    + \frac12 V - \frac12 f
    = U^{\mu_V} + \frac12 V - \frac12 c,
  \end{equation}
  where $c$ is the same constant as in \eqref{e:flogint}.
  By Theorem~\ref{thm:eqmeasure} applied with potential $V$,
  the right-hand side of the above equality is equal to $F_V - \tfrac12 c$ in $S_V$ and at least $F_V - \tfrac12 c$ outside $S_V$.
  The uniqueness statement of Theorem~\ref{thm:eqmeasure} applied with the potential
  $V-f$ now implies that $\mu_{V-f} = \mu$, as claimed. 
  It also follows that $F_{V-f} = F_V - \tfrac12 c$, and therefore \eqref{e:zetacal} implies \eqref{e:zetafeq}.

  It remains to show \eqref{e:energyfineq}.
  Since $\mu_{V-f} =\mu_V - \frac{1}{4\pi}\Delta f$, we indeed have
  \begin{align*}
    I_{V-f}(\mu_{V-f}) - I_V(\mu_V) + (f,\mu_V)
    &= 
    D(\tfrac{1}{4\pi} \Delta f,\tfrac{1}{4\pi} \Delta f)
    + (f,\tfrac{1}{4\pi} \Delta f)
    - 2 D(\mu_V,\tfrac{1}{4\pi} \Delta f) - (V, \tfrac{1}{4\pi} \Delta f)
    \\
    &=
    \tfrac{1}{8\pi} (f,-\Delta f) - \tfrac{1}{4\pi} (f,-\Delta f)
    =
    -\tfrac{1}{8\pi} (f,-\Delta f),
  \end{align*}
  where we have used that the last two terms on the first line cancel,
  by integration by parts and since $d\mu_V = \frac{1}{4\pi} \Delta V \, dm$ in the interior of its support.
  Thus \eqref{e:energyfineq} holds as claimed.
\end{proof}

\subsection{Restriction}

The following proposition shows the important property that, given a potential $V$ with equilibrium measure $\mu_V$,
the potential $W$ defined by adding the logarithmic potential of the charge of $\mu_V$ contained in some region $B^c$ to $V$,
has equilibrium measure $\mu_W$ given simply by the rescaled restriction of $\mu_V$ to $B$.

\begin{proposition} \label{prop:condeqmeas}
  Let $B \subset S_V$ a compact subset and set
  \begin{equation*}
    W(z) = \frac{1}{\mu_V(B)} \left( V(z) + 2\int_{S_V \setminus B} \log \frac{1}{|z-w|} \, \mu_V(dw) \right).
  \end{equation*}
  Then $S_W = B$ and
  \begin{equation} \label{e:muWideal}
    \mu_W = \frac{1}{\mu_V(B)}  \mu_V|_{B},
  \end{equation}
  where $\mu_V|_B$ is the restriction of $\mu_V$ to $B$.
\end{proposition}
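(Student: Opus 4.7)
The plan is to verify directly that the candidate measure $\nu := \mu_V(B)^{-1}\,\mu_V|_B$ satisfies the Euler--Lagrange characterization of Theorem~\ref{thm:eqmeasure} for the potential $W$, and then invoke uniqueness. The point is that $W$ has been defined precisely so that the potential of $\nu$ telescopes against $\tfrac12 W$ back to $\mu_V(B)^{-1}(U^{\mu_V}+\tfrac12 V)$.

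Concretely, I would decompose $U^{\mu_V}(z) = U^{\mu_V|_B}(z) + U^{\mu_V|_{S_V\setminus B}}(z)$ and observe that $U^\nu = \mu_V(B)^{-1} U^{\mu_V|_B}$. Substituting into the definition of $W$ gives, after a line of algebra,
\begin{equation*}
U^\nu(z) + \tfrac12 W(z) \;=\; \frac{1}{\mu_V(B)}\,\pB{U^{\mu_V}(z) + \tfrac12 V(z)},
\end{equation*}
because the $U^{\mu_V|_{S_V\setminus B}}$ term inside $W$ exactly kills the matching piece of $U^{\mu_V}$. Applying the Euler--Lagrange identity \eqref{e:EL} for $\mu_V$, and using $B \subset S_V$, this yields
\begin{equation*}
U^\nu + \tfrac12 W = \frac{F_V}{\mu_V(B)} \text{ q.e.\ on } B, \qquad U^\nu + \tfrac12 W \;\geq\; \frac{F_V}{\mu_V(B)} \text{ q.e.\ on } \C,
\end{equation*}
which is exactly the characterization \eqref{e:EL} of the equilibrium measure of $W$, with corresponding constant $F_W = F_V/\mu_V(B)$.

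To legitimately invoke Theorem~\ref{thm:eqmeasure} for $W$, I would verify its hypotheses: lower semicontinuity of $W$ is inherited from $V$ together with the standard lower semicontinuity of logarithmic potentials of positive measures, and at infinity the added term behaves like $-2(1-\mu_V(B))\log|z|+O(1)$, so after dividing by $\mu_V(B)$ one finds $W(z) - 2\log|z| = \mu_V(B)^{-1}(V(z)-2\log|z|) + O(1)$, which still obeys the growth condition \eqref{e:Vgrowth}. The uniqueness statement of Theorem~\ref{thm:eqmeasure} then identifies $\mu_W$ with $\nu$, proving \eqref{e:muWideal}. For the support claim, $S_W = \mathrm{supp}\,\nu \subseteq B$ is immediate; the reverse inclusion follows because the standing assumption $\Delta V \geq \alpha_0 > 0$ on $S_V$ makes $\mu_V$ absolutely continuous with strictly positive density on $S_V$, so $\mathrm{supp}(\mu_V|_B) = B$ for any $B$ that is the closure of its interior (the only kind of $B$ used in the later multiscale conditioning). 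There is no real obstacle beyond the algebraic identity above; the slightly delicate point is simply keeping track of the hypotheses on $W$ so that Frostman's theorem applies.
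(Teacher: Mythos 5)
Your proof is correct and takes essentially the same route as the paper's: plug the candidate measure $\mu_V(B)^{-1}\mu_V|_B$ into the Euler--Lagrange characterization, observe that the $S_V\setminus B$ piece of $U^{\mu_V}$ cancels against the added term in $W$, and invoke the uniqueness statement of Frostman's theorem. Your extra steps (verifying the growth and semicontinuity hypotheses for $W$, and flagging that $S_W=B$ requires $\supp(\mu_V|_B)=B$) are legitimate points the paper leaves tacit, but they do not change the argument.
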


\begin{proof}
  Define $\mu$ by the right-hand side of \eqref{e:muWideal}. Then
  \begin{align*}
    U^\mu(z) + \frac{1}{2} W(z)
    &= \frac{1}{\mu_V(B)}
      \left( \int_B \log \frac{1}{|z-w|} \mu_V(dw) + \frac{1}{2} V(z) 
      +  \int_{S_V \setminus B} \log\frac{1}{|z-w|} \mu_V(dw) \right)
    \nonumber \\
    &\begin{cases}
      = \frac{F_V}{\mu_V(B)} & \text{for q.e. } z \in B
      \\
      \geq \frac{F_V}{\mu_V(B)} & \text{for q.e. } z \in \C
    \end{cases}
  \end{align*}
  by the Euler--Lagrange equation associated to $V$.
  Since, by definition, $\mu$ is a probability measure,
  Theorem~\ref{thm:eqmeasure} implies that $\mu = \mu_W$.
\end{proof}

\subsection{Harmonic perturbation}

In the following, we consider a class of perturbations of the potential $V$ that are 
harmonic inside the support of the equilibrium measure.

For convenience, we assume here that $S_V = \rho\overline{\D}$ for some $\rho>0$, where $\D$ is the open unit disk.
(This will be sufficient for our application in the proof of Theorem~\ref{thm:locallaw} with general potential; see Section~\ref{sec:pf}.)
Furthermore, we assume that $\frac{1}{4 \pi} \Delta V \geq \alpha$ in $\rho\D$ for some parameter $\alpha>0$.
The class of perturbations $W$ is as follows.
Let $\nu$ be
a positive measure with $\supp \nu \cap \rho\D = \emptyset$,
$R \in C(\rho\overline{\mathbb{D}})$ be harmonic in $\rho\D$, and $t > 0$.
Then $W$ is given by
\begin{equation} \label{e:Winnerrad}
W(z) = \begin{cases}
tV(z) + 2 U^\nu(z)
+ 2R(z), & z \in \rho\overline{\mathbb{D}} \\
\infty, & z \in \rho\mathbb{D}^* \,.
\end{cases}
\end{equation}
Both perturbations $U^\nu$ and $R$ are harmonic inside $\rho\D$. We will later assume that $R$ is small,
in a certain sense, while $U^\nu$ is allowed to be more singular but generated by a \emph{positive} measure $\nu$.

We write $\mathbb{D}^* = \C \setminus \overline{\mathbb{D}}$ for the open complement of the unit disk.
Moreover, for $z \in \partial \rho\D$, we write $\bar n = \bar n(z) = z/|z|$ for the outer unit normal, and
\begin{equation*}
  \partial_n^-f(z) = \lim_{\varepsilon \downarrow 0} \frac{f(z)-f(z-\varepsilon \bar n)}{\varepsilon} 
\end{equation*}
for the derivative in the direction $\bar n$ taken from inside $\rho\D$.

\begin{figure}[t]
\begin{center}
\input{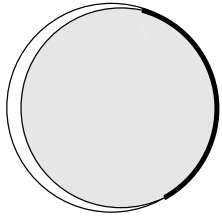}
\end{center}
\caption{For a class of perturbations $W$ of $V$ as above \eqref{e:Winnerrad},
the support of $\mu_{W}$ contains the bulk of the support of $\mu_V$,
and the density on the boundary remains bounded.
\label{fig:innerradius}}
\end{figure}

The next two propositions show that the bulk of the equilibrium measure $\mu_V$
is stable under suitable pertubations $W$ of the form \eqref{e:Winnerrad},
and that the density of $\mu_W$ on the boundary remains bounded.
Both properties are illustrated in Figure~\ref{fig:innerradius}.

\begin{proposition} \label{prop:innerradius}
Suppose that $V$ and $W$ are as above \eqref{e:Winnerrad}.
Then we have
\begin{equation}
\label{e:kappabd}
S_W \supset \left\{ z \in \rho\D: \dist(z, \rho\D^*) \geq \kappa \right\},
\quad
\text{where } \kappa = 4 \sqrt{\frac{\max(\|\nu\|,2 \rho\|\partial_n^- R\|_{\infty,\partial \rho\D} + (t-1))}{\alpha t}}.
\end{equation}
\end{proposition}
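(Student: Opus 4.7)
The plan is to exploit the obstacle-problem characterization of $u_W$ furnished by Theorem~\ref{thm:eqmeasure-regularity} together with a maximum-principle barrier argument on the gap $h \deq \tfrac12 W - u_W$. Since $\supp\nu\cap\rho\D=\emptyset$ and $R$ is harmonic on $\rho\D$, the Laplacian $\Delta W$ equals $t\Delta V$ there, so Theorem~\ref{thm:eqmeasure-regularity}(ii)--(iii) gives $\mu_W = \tfrac{t}{4\pi}\Delta V \geq t\alpha$ on the interior of $S_W\cap\rho\D$, with $u_W$ harmonic on $\rho\D\setminus S_W$.

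Suppose for contradiction there exists $z_0\in\rho\D$ with $\dist(z_0,\partial\rho\D)\geq\kappa$ and $z_0\notin S_W$, and let $D$ be the connected component of $\rho\D\setminus S_W$ containing $z_0$. Then $h$ is nonnegative on $\rho\overline{\D}$, satisfies $h(z_0)>0$ by continuity, vanishes on $S_W$, and on $D$ obeys $\Delta h = \tfrac{t}{2}\Delta V \geq 2\pi t\alpha$. The quadratic barrier $g(z) \deq h(z) - \tfrac{\pi t\alpha}{2}|z-z_0|^2$ satisfies $\Delta g\geq 0$ on $D$, hence is subharmonic there. The maximum principle gives $h(z_0) = g(z_0) \leq \sup_{\partial D}g$. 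Splitting $\partial D = (\partial D\cap\rho\D) \cup (\partial D\cap\partial\rho\D)$: on the first piece $h=0$ (since $\partial D\cap\rho\D\subset\partial S_W$) and so $g\leq 0$; on the second piece $|z-z_0|\geq \kappa$, so $g(z) \leq h(z) - \tfrac{\pi t\alpha\kappa^2}{2}$. Combining,
\begin{equation*}
   \sup_{\partial\rho\D} h \;>\; \tfrac{\pi t\alpha\kappa^2}{2}.
\end{equation*}
As a byproduct, this argument already rules out connected components of $\rho\D\setminus S_W$ whose closures do not touch $\partial\rho\D$, since for such ``island'' components the second piece is empty and we would get $h(z_0)\leq 0$.

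To reach a contradiction with the stated value of $\kappa$, the remaining and most delicate task is to prove an upper bound of the form
\begin{equation*}
  \sup_{\partial\rho\D}h \;\leq\; C\, \max\pB{\|\nu\|,\ 2\rho\|\partial_n^-R\|_{\infty,\partial\rho\D} + (t-1)}
\end{equation*}
for a universal constant $C$ compatible with the factor $4$ in $\kappa$. The strategy is to exhibit an admissible competitor $v$ in the variational problem \eqref{e:obstacleproblem} defining $u_W$, so that $u_W\geq v$ yields a pointwise upper bound on $h = \tfrac12 W - u_W$. The natural starting point is $v = u_V + C_0$ with $C_0$ chosen to be the infimum of $\tfrac{t-1}{2}V + U^\nu + R$ on $\rho\overline{\D}$, ensuring $v\leq\tfrac12 W$ quasi-everywhere (the growth condition $\limsup(v-\log|z|)<\infty$ is inherited from $u_V$). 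The boundary oscillation of the resulting majorant of $h$ then decomposes into three pieces controlled separately: the harmonic function $R$ has boundary oscillation $\lesssim \rho\|\partial_n^-R\|_{\infty,\partial\rho\D}$ via a Poisson-kernel estimate; the scaling $\tfrac{t-1}{2}V$ contributes an amount of order $t-1$; and the potential $U^\nu$ is controlled by $\|\nu\|$.

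The main obstacle is the last of these contributions, because $U^\nu$ may have logarithmic singularities on $\partial\rho\D$ when $\nu$ charges the boundary, so pointwise bounds are hopeless. The way around is first to use balayage to replace $\nu$ by its sweep $\hat\nu$ onto $\partial\rho\D$ (which leaves $W$ unchanged on $\rho\overline{\D}$ and preserves $\|\nu\|$), and then to replace the pointwise oscillation estimate on $U^{\hat\nu}$ by an integral/averaging argument on $\partial\rho\D$ that depends only on the total mass $\|\nu\|$; a further approximation of $\hat\nu$ by smooth measures handles the remaining singular cases. The interplay between these three pieces, with careful tracking of the numerical constants so that they combine cleanly into the single maximum on the right-hand side of \eqref{e:kappabd}, is the technical crux of the proof.
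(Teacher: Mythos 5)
The first half of your argument is essentially correct: the quadratic barrier $g = h - \tfrac{\pi t\alpha}{2}|\cdot - z_0|^2$ and the maximum principle do yield $\sup_{\partial\rho\D}h > \tfrac{\pi t\alpha\kappa^2}{2}$ under the contradiction hypothesis, and your aside about island components is the right observation (a small repair is needed where you assert $h(z_0)>0$ ``by continuity''; the correct justification is that $z_0 \in S_W^*\setminus S_W$ is impossible here because on the coincidence set one has $\Delta u_W = \tfrac12\Delta W = \tfrac{t}{2}\Delta V>0$, hence $\mu_W>0$, hence $z_0\in S_W$, so indeed $z_0\notin S_W$ forces $z_0\notin S_W^*$). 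The genuine gap is the boundary estimate you defer to as ``the technical crux.'' The bound $\sup_{\partial\rho\D}h \leq C\,\max\pb{\|\nu\|,\,2\rho\|\partial_n^- R\|_\infty + (t-1)}$ is simply false: take $V(z)=|z|^2$, $\rho=1$, $t=1$, $R=0$, $\nu=\varepsilon\,\delta_{1+\delta}$. Then $\tfrac12 W(1) = \tfrac12 + \varepsilon\log\tfrac1\delta$, whereas $u_W(z) = F_W - U^{\mu_W}(z) \leq F_W + \log 2$ for all $z\in\overline{\D}$ (because $\mu_W$ is a probability measure on $\overline\D$), and $F_W = O(1)$ uniformly in small $\delta,\varepsilon$; so $h(1)\gtrsim \varepsilon\log\tfrac1\delta \to\infty$ as $\delta\to0$ with $\|\nu\|=\varepsilon$ held fixed. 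The oscillation of $U^\nu$ on $\partial\rho\D$ depends on the distance of $\nu$ to the boundary, not just on $\|\nu\|$, and neither balayage (which preserves $U^\nu$ inside $\D$ but does not tame its boundary behaviour in the required pointwise sense) nor an averaging argument can rescue this: the maximum principle step requires a \emph{pointwise} upper bound on $h$ at the specific point of $\partial\rho\D$ where the supremum of $g$ is attained, and this can be arbitrarily large for fixed $\|\nu\|$.

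The paper sidesteps this obstruction entirely by reversing the direction of the barrier: instead of bounding $h$ from above on $\partial\rho\D$, it constructs, for each fixed interior point $z_0$, a competitor $v = v_{z_0}$ from \emph{below} in the obstacle problem \eqref{e:obstacleproblem} with $v(z_0)=\tfrac12 W(z_0)$, which immediately gives $z_0\in S_W^*$ without any uniform boundary control. The key mechanism is Lemma~\ref{lem:lrsingle}: the singular piece $\log\tfrac1{|\cdot - w|}$ of the obstacle (with $w\in\supp\nu$, possibly on $\partial\rho\D$) is replaced in $v$ by a translated cap $l_r(\cdot - \tilde z(w))$ whose center $\tilde z(w)$ is chosen \emph{inside} $\rho\D$, between $z_0$ and $w$ and at distance at most $r$ from $z_0$, so that the cap tangentially matches $\log\tfrac1{|\cdot - w|}$ at $z_0$ and lies below it globally. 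The negative Laplacian $-2/r^2$ of the cap is absorbed by the positivity $\Delta(tu_V)\geq \alpha t$ inside $\rho\D$ provided $r\gtrsim\sqrt{\|\nu\|/(\alpha t)}$, which is exactly where the scale $\kappa$ comes from. Any fix of your argument would have to introduce something playing the same role: a $z_0$-adapted construction that absorbs the boundary singularity of $U^\nu$, rather than a global boundary estimate.
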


\begin{proposition} \label{prop:bddensity}
Suppose that $V$ and $W$ are as above \eqref{e:Winnerrad} and assume in addition
that $\mu_V$ is absolutely continuous with respect to the 2-dimensional Lebesgue measure.
Then $\mu_W = \mu + \eta$, where $\mu$ is absolutely continuous with respect to $\mu_V$, and $\eta$ absolutely continuous
with respect to the arclength measure $s$ on $\partial \rho\mathbb{D}$ with the Radon--Nikodym derivative bounded by
\begin{equation} \label{e:etabd}
\rho
\normB{\frac{d \eta}{ds}}_\infty 
\leq 
\frac{1}{2\pi} \pB{ \|\eta\| + \|\nu\| + 2\rho\|\partial_n^- R\|_{\infty,\partial \rho\D} + |1-t|\rho\|\partial_n^- V\|_{\infty,\partial\rho\D} }.
\end{equation}
\end{proposition}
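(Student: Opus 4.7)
The plan is to exploit the obstacle-problem structure of $u_W$ to decompose $\mu_W$ into an interior bulk $\mu$ and a boundary piece $\eta$ on $\partial\rho\D$, and then bound the density of the latter via the jump of $\partial_n u_W$ across $\partial\rho\D$. For the decomposition, observe that since $V\in C^{1,1}(\rho\overline{\D})$, $R$ is harmonic, and $U^\nu$ is harmonic in $\rho\D$ (as $\supp\nu\cap\rho\D=\emptyset$), the potential $W$ is $C^{1,1}$ on $\rho\D$. By Theorem \ref{thm:eqmeasure-regularity}(iii), $u_W\in C^{1,1}_{\mathrm{loc}}(\rho\D)$, so $\mu_W|_{\rho\D}$ is absolutely continuous with respect to Lebesgue measure; since $u_W$ is harmonic in $\rho\D^*$, any singular part of $\mu_W$ must concentrate on $\partial\rho\D$, yielding $\mu_W=\mu+\eta$. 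On $S_W\cap\rho\D$, the equality $u_W=W/2$ combined with $\Delta R=\Delta U^\nu=0$ in $\rho\D$ gives $\mu=t\mu_V\,\mathbf{1}_{S_W\cap\rho\D}$, which is absolutely continuous with respect to $\mu_V$.

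The boundary density $\sigma=d\eta/ds$ satisfies $\sigma(z_0)=\tfrac{1}{2\pi}(\partial_n^+ u_W-\partial_n^- u_W)(z_0)$ by the Plemelj jump formula for single-layer potentials. I would bound each normal derivative at $z_0\in\partial\rho\D$ separately. From inside, if $z_0\in S_W$, the obstacle inequality $u_W\leq W/2$ in $\rho\overline{\D}$ with equality at $z_0$ gives, by a Hopf-type argument, $\partial_n^- u_W(z_0)\geq\tfrac{1}{2}\partial_n^- W(z_0)=\tfrac{t}{2}\partial_n^- V(z_0)+\partial_n^- U^\nu(z_0)+\partial_n^- R(z_0)$; if $z_0\notin S_W$, then $\sigma$ vanishes in a neighborhood of $z_0$ in $\partial\rho\D$ and the bound is vacuous. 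From outside, $u_W=F_W-U^{\mu_W}$ combined with the Plemelj formula yields
\[
  \partial_n^+ u_W(z_0)=\int K(z_0,w)\,d\mu_W(w)+\pi\sigma(z_0),\quad K(z_0,w)=\frac{(z_0-w)\cdot\bar n(z_0)}{|z_0-w|^2},
\]
with the integral understood in the principal-value sense on $\partial\rho\D$.

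The crucial ingredient is the algebraic identity $K(z_0,w)=\frac{1}{2\rho}-\frac{|w|^2-\rho^2}{2\rho\,|z_0-w|^2}$, which shows that $K\geq 0$ on $\rho\D$, $K\equiv 1/(2\rho)$ on $\partial\rho\D$, and $K\leq 1/(2\rho)$ on $\rho\D^*$. Hence $\int_{\partial\rho\D}K\,d\eta=\|\eta\|/(2\rho)$; the Euler--Lagrange equation $U^{\mu_V}+V/2=F_V$ on $\rho\overline{\D}$, differentiated normally at $z_0$, gives $\int K\,d\mu_V=\tfrac{1}{2}\partial_n^- V(z_0)$; using $\mu=t\mu_V\,\mathbf{1}_{S_W\cap\rho\D}$ with $K\geq 0$ on $\rho\D$, one obtains $\int K\,d\mu\leq\max(1,t)\,\tfrac{1}{2}\partial_n^- V(z_0)$ (bounding by $\int K\,d\mu_V$ when $t\leq 1$ and by $t\int K\,d\mu_V$ otherwise); and applying the same $K$-identity to $\nu$, together with $\supp\nu\cap\rho\D=\emptyset$ and positivity of $\nu$, yields $-\partial_n^- U^\nu(z_0)\leq\|\nu\|/(2\rho)$. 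Subtracting the inside lower bound from the outside upper bound on $\partial_n u_W$, the $\tfrac{t}{2}\partial_n^- V$ terms cancel up to a slack of $(\max(1,t)-t)\tfrac{1}{2}\partial_n^- V\leq\tfrac{|1-t|}{2}\partial_n^- V$, giving
\[
  \pi\sigma(z_0)\leq\tfrac{|1-t|}{2}\partial_n^- V(z_0)+\tfrac{\|\eta\|}{2\rho}+\tfrac{\|\nu\|}{2\rho}+|\partial_n^- R(z_0)|,
\]
which after multiplying by $2\rho/\pi$ and taking suprema over $z_0$ yields the claimed bound.

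The main obstacle is justifying the Hopf-type inequality $\partial_n^- u_W(z_0)\geq\tfrac{1}{2}\partial_n^- W(z_0)$ at $z_0\in S_W\cap\partial\rho\D$, which requires $u_W$ and $W$ to be $C^1$ up to $\partial\rho\D$ from inside $\rho\D$. This follows from the $C^{1,1}$-regularity of $W$ near $\partial\rho\D$ (guaranteed by $\supp\nu\cap\rho\D=\emptyset$, after handling any mass of $\nu$ on $\partial\rho\D$ via its own Plemelj jump) and Theorem \ref{thm:eqmeasure-regularity}(iii). A further delicate point is the principal-value interpretation of the $K$-integral when $\mu_W$ has a singular concentration at $z_0$ itself.
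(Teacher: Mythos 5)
Your approach is essentially the same as the paper's: write $\mu_W = \mu + \eta$ with $\mu$ the Lebesgue-a.c.\ part and $\eta$ a boundary layer on $\partial\rho\D$, express the boundary density as a jump of $\partial_n u_W$, feed in the obstacle inequality $u_W \leq W/2$, and control the resulting kernel $K(z_0,w) = \frac{(z_0-w)\cdot\bar n}{|z_0-w|^2}$ using the geometric fact that $K \leq \frac{1}{2\rho}$ off $\rho\D$; your algebraic identity for $K$ is a clean closed form for the picture the paper draws in Figure~\ref{fig:nbd}. The paper works with $\partial_n^-$ alone via \eqref{e:dUeta}, while you route through $\partial_n^+ - \partial_n^-$; the two are equivalent, just a presentation detour on your part. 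One improvement: you correctly identify $\mu = t\,\mu_V\,\mathbf 1_{S_W\cap\rho\D}$ and then bound $\int K\,d\mu \leq \max(1,t)\int K\,d\mu_V$, which accounts for the factor $t$; the paper writes $\mu = \ind{S_W}\mu_V$, dropping $t$, and its intermediate step $\int K\,d\mu \leq \int K\,d\mu_V$ is not literally justified when $t>1$ (though, as you note, the slack evaporates and the stated bound \eqref{e:etabd} survives either way since $\partial_n^- V \geq 0$ at boundary points by $\int K\,d\mu_V \geq 0$).

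There is one genuine gap. You begin by writing $\sigma = d\eta/ds$ and invoking a Plemelj formula, which already presupposes the conclusion of the first half of the proposition — that $\eta$ has \emph{no singular part} relative to arclength. The paper does not assume this: from its formula \eqref{e:dUeta} it first derives the bound on $\lim_{r\to 0^+} \eta(B_r(z))/s(B_r(z))$ at differentiability points, then runs the same argument with $\limsup$ in place of $\lim$ and $\partial_n^-$ replaced by an upper difference quotient, and uses the equivalence \eqref{e:bdd-limsup} to conclude that this $\limsup$ is finite at \emph{every} $z \in \supp\eta$ — which forces $\eta_s \equiv 0$. You flag the issue at the end ("a further delicate point is the principal-value interpretation ... when $\mu_W$ has a singular concentration at $z_0$ itself") but do not resolve it; without the $\limsup$ step, you have not ruled out a singular boundary concentration. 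To close this, repeat your estimate with $\partial_n^-$ replaced by $\limsup_{\varepsilon\to 0^+} \varepsilon^{-1}(u_W(z) - u_W(z-\varepsilon\bar n))$ (the Hopf inequality holds for difference quotients without regularity of $u_W$ up to the boundary), combine with \eqref{e:bdd-limsup}, and deduce $\limsup_r \eta(B_r(z))/s(B_r(z)) < \infty$ everywhere on $\supp\eta$, hence $\eta_s = 0$. The secondary point you raise — $C^1$-regularity of $u_W$ from inside for the Hopf argument — is then also sidestepped by the same maneuver.
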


\begin{remark}
The bounds \eqref{e:kappabd} and \eqref{e:etabd} are effective
for small $\nu$ and $R$, and $t$ close to $1$.
For larger perturbations, the bounds still remain valid (but sometimes vacuously).
For example, for $\rho=1$ note that $S_V = \overline{\D}$ implies that $\alpha \leq 1/\pi$.
As $t\to\infty$, we have $\kappa \to 4/\sqrt{\alpha} \geq 4/\sqrt{\pi} \geq 1$, so $S_W = \emptyset$ as expected. 
Suppose $\nu=0$ and $R=0$. Then $\kappa = 4 \sqrt{\max(0,t-1)/(\alpha t)}$. This is $0$ for $t\leq 1$ and increasing for $t\geq 1$,
also as expected.
\end{remark}

\subsubsection{Proof of Proposition~\ref{prop:innerradius}}

As preparation for the proof of Proposition~\ref{prop:innerradius} we first recall the behaviour of the distributional Laplacian
for functions with a discontinuous gradient on a curve and then prove a technical lemma.

Let $\gamma$ be a smooth Jordan curve with interior domain $D^+$ and exterior domain $D^-$. Suppose $f \in C^2(D^+ \cup D^-) \cap C(\mathbb{C})$
and further that $\|\nabla f\|_\infty < \infty$ in a neighbourhood of the curve $\gamma$. Then the distributional Laplacian of $f$ coincides
with the usual pointwise Laplacian off the curve $\gamma$ and on $\gamma$ it is the measure $(\partial^+ - \partial^-) f \, ds$,
where $\partial^+$ and $\partial^-$ denote the normal derivatives from the outside and inside, respectively,
taken at a point of $\gamma$ and $ds$ is the arclength measure on $\gamma$. Concisely we may write
\begin{equation} \label{e:Deltaf}
\Delta f = \Delta f \, dm + (\partial^+ f - \partial^- f) \, ds,
\end{equation}
where the left-hand side denotes the Laplacian understood in the distributional sense and on the right-hand side $dm$ is the area measure
and $ds$ the arclength measure on $\gamma$. This formula can be deduced from Green's identity as follows.

Let $\phi \in C^\infty_c(\mathbb{C})$ be a test function whose support intersects both $D^+$ and $D^-$.
To determine the distribution $\Delta f$, first write
\begin{equation*}
\int \Delta f \, \phi = \int f \, \Delta \phi \, dm = \int_{D^+} f \, \Delta \phi \, dm + \int_{D^-} f \, \Delta \phi \, dm,
\end{equation*}
where the first equality is by definition of the distributional derivative and the second holds by the continuity of $f$ and $\phi$
and the smoothness of $\gamma$. Again by the smoothness assumptions, Green's identity may be applied twice to both terms separately to obtain
\begin{equation*}
\int_{D^\pm} f \, \Delta \phi \, dm
= \int_{\partial D^\pm} f \, \partial_n \phi \, ds - \int_{\partial D^\pm} \partial_n f \, \phi \, ds + \int_{D^\pm} \Delta f \, \phi \, dm
\end{equation*}
where $\partial_n$ denotes the outer normal derivative in the corresponding domain.
Finally, note that $\nabla \phi$ is $0$ on $\partial \supp \phi$ and that on $\gamma$ the outer normals of $D^+$ and $D^-$ are
negatives of each other, by the continuity of $f$ implying that
$
\int_{\partial D^+} f \, \partial_n \phi + \int_{\partial D^-} f \, \partial_n \phi = 0
$.
Summing up,
\begin{align*}
\int \Delta f \, \phi
& = - \int_{\partial D^+} \partial_n f \, \phi \, ds + \int_{D^+} \Delta f \, \phi \, dm - \int_{\partial D^-} \partial_n f \, \phi \, ds + \int_{D^-} \Delta f \, \phi \, dm \\
& = \int_{D^+} \Delta f \, \phi \, dm + \int_{D^-} \Delta f \, \phi \, dm + \int_\gamma (\partial^+ f - \partial^- f) \, \phi \, ds.
\end{align*}
Since the test function $\phi$ is arbitrary, this is equivalent to \eqref{e:Deltaf}.

The following lemma contains the central idea of the proof of Proposition \ref{prop:innerradius}.
While checking the calculus of the lemma is slightly tedious, the idea is clearly illustrated in Figure \ref{fig:lfit}:
the lemma shows that the dotted graph lies below the solid graph.
%
%

\begin{figure}[t]
\begin{center}
\input{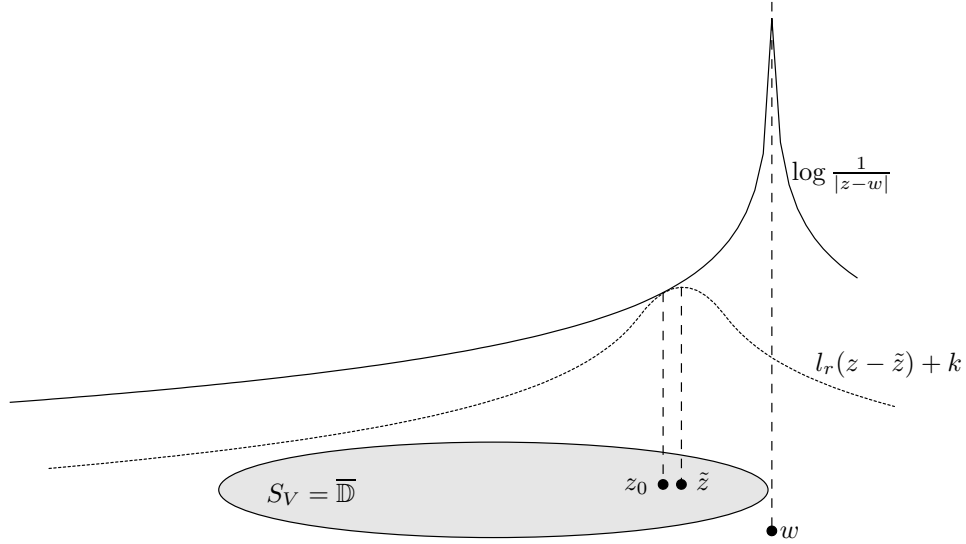}
\end{center}
\caption{
The figure illustrates (a one-dimensional projection of)
the construction of the test function \eqref{e:vdef} for a single external charge located at $w \not\in\D$,
in the case $\tilde R=0$ and $t=1$.
The density of the equilibrium measure imposes a lower bound on $r$. Then if $z_0$ is sufficiently far from $w$, we can find
$\tilde z$ and $k$ such that $\log\frac{1}{|z-w|}$ and $l_r(z-\tilde z)+k$ match at $z_0$
and $\log\frac{1}{|z-w|}$ dominates $l_r(z-\tilde z)+k$ everywhere.
Lemma~\ref{lem:lrsingle} shows that the dotted graph lies below the solid graph.
\label{fig:lfit}}
\end{figure}

To state the lemma,
for any $r>0$, define the function $l_r : \C \to \R$ by
\begin{equation} \label{e:lrdef}
l_r(z) = (-\log { |\cdot|} * \, \frac{1}{\pi r^2} \ind{B(0,r)})(z) = \begin{cases}
\frac{1}{2} + \log \frac{1}{r} - \frac{|z|^2}{2 r^2} , & |z| \leq r \\
\log \frac{1}{|z|}, & |z| \geq r.
\end{cases}
\end{equation}
For later use, we note that
\begin{equation}
  \label{e:Deltal}
  \nabla l_r(z) = \frac{-z}{r^2 \vee |z|^2}, \qquad
   \Delta l_r(z) = - \frac{2}{r^2} \ind{B_r(0)}
  .
\end{equation}

\begin{lemma} \label{lem:lrsingle}
Let $z_0 \in \C, w \in \C$, $\sigma \geq \tfrac{1}{2}$ and $r \in (0,1)$ be given so that $|z_0-w| \geq 2r$.
Then there exist $\tilde{z} \in \mathbb{C}$ and $k \in \mathbb{R}$ such that
\begin{equation} \label{e:log-matching}
\sigma \big(l_r(z_0-\tilde{z}) + k\big) = \frac{1}{2} \log \frac{1}{|z_0-w|}
\quad \textrm{and} \quad
\sigma \big(l_r(z-\tilde{z}) + k\big) \leq \frac{1}{2} \log \frac{1}{|z-w|} \quad \textrm{for all } z \in \mathbb{C}.
\end{equation}
Moreover, the point $\tilde z$ lies on the line passing through $z_0$ and $w$ at distance at most $r$ from $z_0$
between $z_0$ and $w$.
\end{lemma}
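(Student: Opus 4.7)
My plan is to choose $\tilde z$ and $k$ so that the two sides of the equation in \eqref{e:log-matching} agree in both value and gradient at $z_0$, and then deduce the global inequality from a maximum-principle argument applied to their difference.

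To construct $\tilde z$, I would use \eqref{e:Deltal} to compute $\nabla l_r(z_0-\tilde z) = -(z_0-\tilde z)/(r^2\vee|z_0-\tilde z|^2)$, while $\nabla(\half\log\frac{1}{|\cdot-w|})(z_0) = -\half(z_0-w)/|z_0-w|^2$. Matching gradients forces $\tilde z-z_0$ to be parallel to $w-z_0$; selecting the branch with $|z_0-\tilde z|\le r$ gives $|z_0-\tilde z| = r^2/(2\sigma|z_0-w|)$. Since $|z_0-w|\ge 2r$ and $\sigma\ge\half$, one has $|z_0-\tilde z|\le r/(4\sigma)\le r/2$, so $\tilde z$ lies on the segment between $z_0$ and $w$ at distance at most $r$ from $z_0$, as the lemma requires. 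The constant $k$ is then fixed by the value-matching identity in \eqref{e:log-matching}.

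For the global inequality, I would introduce the difference $h(z) := \half\log\frac{1}{|z-w|} - \sigma(l_r(z-\tilde z)+k)$; the task is to show $h\ge 0$. From \eqref{e:Deltal} and $\Delta\log\frac{1}{|z-w|} = -2\pi\delta_w$, we get $\Delta h = \frac{2\sigma}{r^2}\ind{B(\tilde z,r)} - \pi\delta_w$. The estimate $|w-\tilde z|\ge |w-z_0|-|z_0-\tilde z|\ge 3r/2$ places the pole at $w$ outside $\overline{B(\tilde z,r)}$. Hence $h$ is strictly subharmonic on $B(\tilde z,r)$ and harmonic on $\C\setminus(\overline{B(\tilde z,r)}\cup\{w\})$, with $h(z)\to+\infty$ as $z\to w$. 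At infinity, $h(z)$ tends to $+\infty$ if $\sigma>\half$, and to $-\sigma k$ if $\sigma=\half$; in the latter case, a direct computation from the formula for $k$ together with $|z_0-w|\ge 2r$ shows $-\sigma k\ge 0$. Applying the (strong) minimum principle separately on $B(\tilde z,r)$ and on its exterior therefore reduces the inequality $h\ge 0$ to $\min_{\partial B(\tilde z,r)} h \ge 0$.

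The hard part will be verifying this boundary inequality. On $\partial B(\tilde z,r)$, $l_r(z-\tilde z)=\log\frac{1}{r}$ is constant, so minimizing $h$ reduces to maximizing $|z-w|$; the maximum equals $|\tilde z-w|+r = |z_0-w|+r-|z_0-\tilde z|$. Substituting this along with the explicit formula for $k$ and using $\log(1+x)\le x$, I expect the requirement to collapse to an inequality of the shape
\[
-\frac{u}{2} + \frac{u^2}{8\sigma} + \frac{\sigma}{2} \ge 0, \qquad u := r/|z_0-w|\in(0,\half],
\]
which is immediate from $\sigma/2\ge 1/4\ge u/2$. The algebra is tedious but routine; all other steps are standard applications of the minimum principle and of the Laplacian formula for $l_r$ in \eqref{e:Deltal}.
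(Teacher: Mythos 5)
Your construction of $\tilde z$ and $k$ (matching value and gradient at $z_0$) is exactly the paper's; the global argument, however, is genuinely different. The paper first proves the inequality along the line $\mathcal L$ through $z_0$ and $w$ (convexity of $\log\frac{1}{|x|}$ versus concavity of $l_r$ near $\tilde z$, then derivative comparisons off $[\tilde z-r,\tilde z+r]$ and a separate check that $k<0$), and then extends from $\mathcal L$ to all of $\C$ by comparing the circular level sets of the two sides. Your route instead decomposes via the Laplacian of $h=\frac12\log\frac1{|z-w|}-\sigma(l_r(\cdot-\tilde z)+k)$, uses the minimum principle in the exterior region (where $h$ is harmonic and has nonnegative boundary behaviour at $w$, at $\partial B(\tilde z,r)$, and at $\infty$), and reduces everything to the explicit boundary inequality on $\partial B(\tilde z,r)$. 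That boundary computation, and the exterior minimum-principle step (together with the check $-\sigma k\ge 0$), are correct; I verified that $\min_{\partial B}h\ge -\tfrac{u}{2}+\tfrac{u^2}{8\sigma}+\tfrac{\sigma}{2}\ge 0$ for $u=r/|z_0-w|\le \tfrac12$, $\sigma\ge\tfrac12$.

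There is, however, a genuine gap in the interior step. You assert that the minimum principle applied on $B(\tilde z,r)$ yields $h\ge 0$ there once $h\ge 0$ on $\partial B(\tilde z,r)$. But on $B(\tilde z,r)$ you have computed $\Delta h=2\sigma/r^2>0$, so $h$ is \emph{subharmonic} there, and subharmonic functions satisfy a maximum principle, not a minimum principle: they can dip below their boundary values in the interior (e.g.\ $|z|^2-1$ on the unit disk). So nonnegativity on $\partial B(\tilde z,r)$ alone does not control $h$ inside. The conclusion $h\ge 0$ on $B(\tilde z,r)$ is in fact true, but you must argue for it separately. One fix: on the circle $|z-\tilde z|=\rho$ the function $h$ is constant plus $-\frac12\log|z-w|$, minimized at the point farthest from $w$, which (since $\tilde z$ lies on the segment from $z_0$ to $w$) is on the ray from $w$ through $\tilde z$; the resulting function $m(\rho)=\min_{|z-\tilde z|=\rho}h$ is convex in $\rho$ with $m'(0)<0$, $m'(r)>0$, and its unique minimum is at $\rho^*=|z_0-\tilde z|$ with $m(\rho^*)=h(z_0)=0$. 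Alternatively, show directly that $h$ has a unique critical point in $B(\tilde z,r)$ (necessarily on the line $\mathcal L$, and computable to be $z_0$), so the minimum of $h$ on $\overline{B(\tilde z,r)}$ is attained either at $z_0$ or on the boundary, both nonnegative. Without some such argument, the interior step is not justified.
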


\begin{proof}
First, we choose $\tilde{z} \in \C$ and $k\in \R$ so that
\begin{equation} \label{e:choicetildezk}
\sigma \nabla l_r(z_0-\tilde{z}) = \frac{1}{2} \left. \nabla \log \frac{1}{|z-w|} \right|_{z=z_0} = \frac{1}{2} \frac{z_0-w}{|z_0-w|^2}
\quad \textrm{and} \quad
\sigma \big( l_r(z_0-\tilde{z}) + k \big) = \frac{1}{2} \log \frac{1}{|z_0-w|}
.
\end{equation}
To see that this is possible, first note that 
$\frac{1}{2} \left| \frac{z_0-w}{|z_0-w|^2}\right| \leq \frac{1}{4 r}$ by the assumption $|z_0-w| \geq 2r$.
By \eqref{e:Deltal}, the map $z \mapsto \sigma \nabla l_r(z_0-z)$ takes the disk
$B_r(z_0)$ bijectively onto $B_{\sigma/r}(0) \supset B_{1/(4r)}(0)$.
It follows there exists a unique choice of $\tilde z \in B_r(z_0)$ so that the gradients of
$\sigma l_r(\cdot - \tilde{z})$ and $\tfrac{1}{2} \log \tfrac{1}{|\cdot - w|}$ match at $z_0$.
The second equality can then be arranged by the choice of $k$.

It remains to be shown that with the choice \eqref{e:choicetildezk} it is in fact true that
\begin{equation} \label{e:lbdC}
\sigma \big(l_r(z-\tilde{z}) + k\big) \leq \frac{1}{2} \log \frac{1}{|z-w|} \quad \textrm{for all } z \in \mathbb{C}.
\end{equation}
Clearly, the point $\tilde{z}$ lies between the points $z_0$ and $w$ on the line $\cal L$ connecting these two points.
We will first prove that the inequality in \eqref{e:lbdC} holds for $z \in \cal L$.
For the proof, it is helpful to keep Figure~\ref{fig:lfit} in mind.
Without loss of generality assume $w = 0$ and $z_0 >0, \tilde{z} > 0$ so that $\cal L$ coincides with $\R$.
Thus it needs to be shown that
\begin{equation*}
f(x) := \frac{1}{2} \log \frac{1}{|x|} \geq \sigma \big( l_r(x-\tilde{z}) + k \big) =: g(x), \quad x \in \R,
\end{equation*}
where $\tilde{z}$ is chosen as in \eqref{e:choicetildezk}.
Denote by $h$ the common tangent of the graphs of $f$ and $g$ drawn at $x=z_0$.
Since $f$ is convex and $g$ is concave on $[\tilde{z}-r,\tilde{z}+r]$,
the graph of $f$ lies above $h$ and the graph of $g$ lies below $h$ on this interval.
Especially $g(x) \leq f(x)$ on $[\tilde{z}-r,\tilde{z}+r]$. Moreover, since $f'(x) < 0$ and $g'(x) > 0$ for $x \in (0,\tilde{z})$,
the inequality $g(x) \leq f(x)$ holds by these observations for $x \in (0,\tilde{z}+r]$.
To prove the inequality for $x \in [\tilde{z}+r,\infty)$ note that $g'(t) = -\frac{\sigma}{t-\tilde{z}} \leq -\frac{1}{2t} = f'(t)$
for $t \in [\tilde{z}+r,\infty)$. It follows that
\begin{equation*}
g(x) - g(\tilde{z}+r) = \int_{\tilde{z}+r}^x g'(t) \, dt \leq \int_{\tilde{z}+r}^x f'(t) \, dt = f(x) - f(\tilde{z}+r),
\end{equation*}
which by $g(\tilde{z}+r) \leq f(\tilde{z}+r)$ implies the desired inequality $g(x) \leq f(x)$, now proven for $x \in (0,\infty)$.
For $x \in (-\infty,0)$ it also holds that $g'(x) \leq f'(x)$ and it is clear that $f(x) \geq g(x)$ as $x \to 0^-$,
so it remains to check the inequality as $x \to -\infty$. For $|x|$ large we write the difference $f-g$ as
\begin{equation*}
\frac{1}{2} \log \frac{1}{|x|} - \sigma \left( \log \frac{1}{|x-\tilde{z}|} + k \right) \leq \frac{1}{2} \log \left( 1 - \frac{|\tilde{z}|}{|x|}\right) - \sigma k,
\end{equation*}
and from this form it is clear that $g(x) \leq f(x)$ on the whole negative real axis if and only if $k < 0$. This can be verified by the calculation
\begin{equation*}
\sigma k = \frac{1}{2} \log \frac{1}{z_0} - \sigma l_r(z_0-\tilde{z}) < \frac{1}{2} \log \frac{1}{z_0} - \sigma l_r(r) \leq \frac{1}{2} \log \frac{1}{2r} - \sigma \log \frac{1}{r} = \left( \frac{1}{2} - \sigma \right) \log \frac{1}{r} - \frac{1}{2} \log 2 < 0.
\end{equation*}

\begin{figure}[h]
\begin{center}
\input{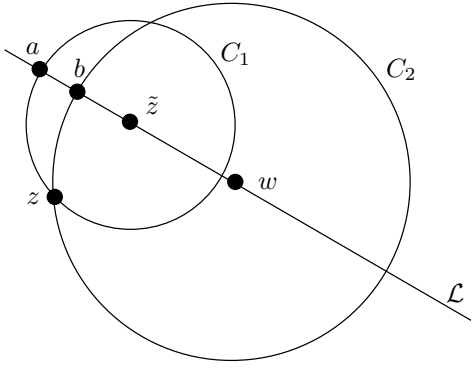}
\end{center}
\caption{
The circle centered at $\tilde z$ is the level set of the left-hand side of \eqref{e:lbdC},
while the circle centered at $w$ is the level set of the right-hand side.
Therefore is suffices to verify $\text{LHS}(a) \leq \text{RHS}(b)$.
\label{fig:levelsets}}
\end{figure}

It remains to extend the inequality \eqref{e:lbdC} from the line $\cal L$ passing through $w$ and $z_0$ to the whole plane.
This is easiest done by inspection of the level sets of the left- and right-hand sides of \eqref{e:lbdC} as in Figure~\ref{fig:levelsets}.
The level sets of the left-hand side are the circles centered at $\tilde{z}$ and the level sets of the right-hand side are the circles centered at $w$.
Let $z \in \mathbb{C}$ be arbitrary and consider the circles $C_1$ and $C_2$ that pass through $z$ and have centres $\tilde{z}$ and $w$ respectively.
Let $a$ denote the point of intersection of $C_1$ and $\cal L$ for which $\tilde{z}$ lies between $a$ and $w$,
and let $b$ denote the point of intersection of $C_2$ and $\cal L$ for which $b$ and $\tilde{z}$ are in the same direction as seen from $w$.
The left-hand side of \eqref{e:lbdC} a decreasing function of the distance $|z-\tilde{z}|$,
so it suffices to show that the point $b$ lies within the circle $C_1$. If $\tilde{z}$ lies between $b$ and $w$, then we have
\begin{equation*}
|b-w| = |z-w| \leq |\tilde{z}-w| + |z-\tilde{z}| = |\tilde{z}-w| + |a-\tilde{z}| = |a-w|,
\end{equation*}
which implies $|b-\tilde{z}| \leq |a-\tilde{z}|$ since $\tilde{z}$ lies between $b$ and $w$ and also between $a$ and $w$. Thus if $\tilde{z}$ lies between $b$ and $w$ we have proven \eqref{e:lbdC}. In the remaining case $b$ lies between $\tilde{z}$ and $w$. Then
\begin{equation*}
|b-w| + |\tilde{z}-a| = |z-w| + |z-\tilde{z}| \geq |\tilde{z}-w|,
\end{equation*}
which implies that $a$ does not lie strictly between $b$ and $\tilde{z}$,
again implying that $b$ lies inside $C_1$.
The inequality \eqref{e:lbdC} has now been proven for all $z \in \C$,
completing the proof of the lemma.
\end{proof}

\begin{proof}[Proof of Proposition~\ref{prop:innerradius}]
As a preliminary step we show that we can assume $\rho=1$ without loss of generality.
Indeed, assume that $S_V = \rho\overline{\D}$, and set $V_\rho(z) = V(\rho z)$ and $W_\rho(z) = W(\rho z)$.
From Theorem~\ref{thm:eqmeasure} it then follows that
$\mu_{V_\rho}(dz) = \mu_{V}(\rho \, dz)$ and $\mu_{W_\rho}(dz) = \mu_{W}(\rho \, dz)$,
so in particular $S_{V_\rho} = \overline{\D}$.
Then the claim with $\rho=1$ implies that
$S_{W_\rho} \supset \{z \in \D: \dist(z,\D^*) \geq \kappa_\rho \}$, with
\begin{equation}
  \sqrt{\rho^2 \alpha}\kappa_\rho 
  =
  4 \sqrt{\frac{\max(\|\nu\|,2\rho\|\partial_n^- R\|_{\infty,\partial \rho\mathbb{D}} + (t-1))}{t}},
\end{equation}
and the original claim then follows by rescaling.

From now on, we therefore assume that $\rho=1$.
Let
\begin{equation*}
D = \left\{ z \in \mathbb{D} : \dist(z, \mathbb{D}^*) \geq \kappa \right\}.
\end{equation*}
We will show that $D \subset S_W^*$ by exhibiting, for every $z_0 \in D$, a function $v = v_{z_0}$ that satisfies $v(z_0) = \frac{1}{2}W(z_0)$
and the requirements in \eqref{e:obstacleproblem} with $V$ replaced by $W$.
Thus we have $u_W = \frac12 W$ in $D$, and \eqref{e:equilibriumdensity} and the fact that $W-tV$ is harmonic in $\D \supset D$
imply that $\mu_W = \frac{1}{ 4\pi} t \Delta V$ in $D$. In particular $D \subset \supp \mu_W$ as claimed.

We may without loss of generality assume that $\nu \neq 0$. Namely, if $\nu = 0$ it follows from the statement
of the proposition applied with $\nu$ replaced by $\nu = \varepsilon \delta_M$ (with $\varepsilon > 0$ small and $M > 1$ large)
and $R$ replaced by $R - U^{\nu'}$
that the estimate holds also for $\nu = 0$.

The main difficulty of the proof is in giving $v=v_{z_0}$ for every $z_0 \in D$ and checking that the requirements are satisfied.
Fix $z_0 \in D$.
Let $u_V$ be the solution to the obstacle problem \eqref{e:obstacleproblem} with the original potential $V$.
Define the functions $l_r$ for $r>0$ as in \eqref{e:lrdef} and define
\begin{equation}  \label{e:Rtildedef}
\tilde{R}(z) = \begin{cases}
R(z), & z \in \overline{\mathbb{D}} \\
R(1/\bar{z}), & z \in \mathbb{D}^*.
\end{cases}
\end{equation}
Observe that $\tilde{R}$ is bounded and continuous on $\mathbb{C}$ and harmonic in $\mathbb{D} \cup \mathbb{D}^*$.
(Though we will not need the fact, it is the unique such function.)
We will choose $v = v_{z_0}$ as a function of the form
\begin{equation} \label{e:vdef}
v(z) = t u_V(z) + \sigma L(z) + \tilde{R}(z) + \gamma G(z), \quad L(z) = \int \left( l_r\big(z-\tilde{z}(w)\big) + k(w) \right) \nu(dw)
,
\end{equation}
where $G(z) = \max(0,\log |z|)$ is the Green's function of $\mathbb{D}^*$ with pole at $\infty$,
and $\sigma > 0$, $r > 0$, $\alpha > 0$, $k: \supp \nu \to \R$ and $\tilde{z}: \supp \nu \to \mathbb{D}$ are parameters. Set
\begin{equation*}
\gamma = 2 \|\partial_n^- R\|_{\infty,\partial \mathbb{D}}, \quad
\sigma = \max\left(\frac12,\frac{\gamma+(t-1)}{\|\nu\|}\right),
\quad \textrm{and} \quad
r = 2 \sqrt{\frac{\|\nu\| \sigma}{\alpha t}} = \frac12 \kappa.
\end{equation*}
The functions $\tilde{z}$ and $k$ will be determined through Lemma~\ref{lem:lrsingle} later;
see also Figure~\ref{fig:lfit}.

We first verify the conditions on $v$ required in \eqref{e:obstacleproblem}:
that $\limsup_{|z| \to \infty} \big( v(z) - \log |z| \big) < \infty$,
that $v$ is subharmonic on $\C$, and that $v \leq \frac12 W$ on $\C$.
The asymptotics as $|z| \to \infty$
of the different terms in \eqref{e:vdef} are
$u_V(z) \sim t \log |z|$,
$\sigma L(z) \sim -\sigma \|\nu\| \log |z|$,
$\tilde{R}(z) \sim 1$
and $\gamma G(z) \sim \gamma \log |z|$.
Thus the growth condition at $\infty$ in \eqref{e:obstacleproblem} is satisfied,
as $\gamma + t - \sigma \|\nu\| \leq 1$ by the definitions of the parameters.
Next, we show that $v$ is subharmonic.
By the symmetry of the definition \eqref{e:Rtildedef},
the jump of the gradient of $\tilde{R}$ at $\zeta \in \partial \mathbb{D}$ is $2 \partial_n^- R(\zeta)$.
Thus by \eqref{e:Deltal} and \eqref{e:Deltaf} the (distributional) Laplacian of $v$ is given by
\begin{equation*}
\Delta v
= t\Delta u_V + \sigma \Delta L + \Delta \tilde{R} + \gamma \Delta G
= \frac{t}{2} \Delta V \ind{\mathbb{D}}  - \sigma \int \frac{2}{r^2} \ind{B_r(\tilde{z}(w))} \, \nu(dw) + 2 \partial_n^- R \, ds + \gamma \, ds,
\end{equation*}
where $ds$ is the arclength measure on $\partial \mathbb{D}$.
The points $\tilde{z}(w)$ will be chosen so that $B_r(\tilde{z}(w)) \subset \mathbb{D}$ for all $w \in \supp \nu$,
so the subharmonicity of $v$ in the interior of $\mathbb{D}$ follows from
\begin{equation*}
\sigma \int \frac{2}{r^2} \ind{B_r(\tilde{z}(w))} \, \nu(dw)
\leq 
\frac{2\sigma  \|\nu\| }{r^2} \ind{\mathbb{D}}
\leq \frac{t}{2} \alpha \ind{\mathbb{D}} \leq \frac{t}{2} \Delta V \ind{\mathbb{D}}.
\end{equation*}
by the definition of $r$.
On the other hand, 
on $\partial \mathbb{D}$ the positivity of $\Delta v$ follows from 
$2 \partial_n^- R + \gamma \geq 0$ by our choice of $\gamma$. 

By Theorem~\ref{thm:eqmeasure-regularity}~(i) and the assumption $S_V = \overline{\D}$,
we have $S_V^* \supseteq S_V = \overline{\D}$, and therefore 
\begin{equation*}
u_V(z) = \frac{1}{2} V(z) \quad \textrm{for all } z \in \overline{\D}.
\end{equation*}
By definition we also have
\begin{equation*}
  \tilde{R}(z) = R(z) \quad \textrm{for all } z \in \overline{\D}.
\end{equation*}
Moreover, we have $G(z) = 0$ for all $z \in \overline{\D}$.
To have $v(z) \leq \frac{1}{2} W(z)$ for all $z \in \overline{\D}$ and $v(z_0) = \frac{1}{2} W(z_0)$ as required,
it thus suffices to show that we can choose the parameters $\tilde{z}: \supp \nu \to \mathbb{D}$ and $k: \supp \nu \to \R$
in the definition of $L(z)$ such that
\begin{align}
\label{e:Lz}
\sigma L(z) 
&\leq \frac{1}{2} \int \log \frac{1}{|z-w|} \nu(dw)
\quad \textrm{for all } z \in \overline{\D},
\\
\label{e:Lz0}
\sigma L(z_0)
&= \frac{1}{2} \int \log \frac{1}{|z_0-w|} \nu(dw).
\end{align}
To achieve this, for every $w \in \supp \nu$ we apply Lemma \ref{lem:lrsingle} to obtain $\tilde{z}(w)$ and $k(w)$ for which
$$
\sigma \big( l_r(z_0-\tilde{z}(w)) + k(w) \big) = \frac{1}{2} \log \frac{1}{|z_0-w|}  \quad \textrm{and} \quad  \sigma \big( l_r(z-\tilde{z}(w)) + k(w) \big) \leq \frac{1}{2} \log \frac{1}{|z-w|} \quad \textrm{for } z \in \mathbb{C}.
$$
The assumptions of the lemma are in force, as $\sigma \geq \tfrac{1}{2}$ and $|z_0-w| \geq 2r = \kappa$ by our definitions and assumptions,
and further if $r \geq 1$, $\kappa = 2r \geq 2$ and there is nothing to prove. (It is also obvious from the proof of Lemma \ref{lem:lrsingle}
that the maps $w \mapsto \tilde{z}(w)$ and $w \mapsto k(w)$ are measurable.) The requirements \eqref{e:Lz}--\eqref{e:Lz0} are now satisfied,
finishing the proof.
\end{proof}

\subsubsection{Proof of Proposition~\ref{prop:bddensity}}

In the proof of Proposition~\ref{prop:bddensity}, we will use the following
general formula for logarithmic potentials.
Let $\gamma \subset \C$ be a $C^1$ curve and $\eta$ a measure supported on $\gamma$
for which the potential $U^\eta$ is continuous on $\C$. Then for $z \in \gamma$ we have
\begin{equation} \label{e:dUeta}
\partial_n^- U^\eta(z) = \pi \lim_{r \to 0^+} \frac{\eta(B_r(z))}{s(B_r(z))} - \int_\gamma \frac{z-w}{|z-w|^2} \cdot \bar{n} \, \eta(dw),
\end{equation}
where $\partial_n^-$ denotes a one-sided derivative
in the normal direction $\bar{n} = \bar{n}(z)$ and $s$ denotes the arclength measure of $\gamma$,
if the limit on the right-hand side exists. 
In addition we will use the closely related fact that
that for every $z \in \supp \eta$ it holds that
\begin{equation} \label{e:bdd-limsup}
\limsup_{r \to 0^+} \frac{\eta(B_r(z))}{s(B_r(z))} = \infty 
\quad \textrm{if and only if} \quad
\limsup_{\varepsilon \to 0^+} \frac{1}{\varepsilon} \left( U^\eta(z) - U^\eta(z-\varepsilon \bar{n}) \right) = \infty.
\end{equation}
For completeness, we provide a proof of \eqref{e:dUeta} and \eqref{e:bdd-limsup} below the proof of the proposition.

\begin{proof}[Proof of Proposition~\ref{prop:bddensity}]
As in the proof of Proposition~\ref{prop:innerradius}, without loss of generality, we can assume $\rho=1$.

Let $\mu$ be the absolutely continuous part of $\mu_W$ and set $\eta = \mu_W - \mu$. Write $d \eta = \frac{d \eta}{ds} \, ds + d \eta_s$, where $\eta_s$ is singular with respect to the arclength measure.
Given Theorem \ref{thm:eqmeasure-regularity}, it is to be shown that $\eta_s \equiv 0$ and that $\frac{d \eta}{ds}$ obeys \eqref{e:etabd}.

First,  by \cite[Theorems I.4.8 and I.5.1]{MR1485778}, the potential $U^{\mu_W}$ is continuous on the full plane.
(Indeed, by Theorem~I.4.8, continuity is clear inside $\D$ as well as outside $S_V$.
For $z_0 \in \partial S_V \cap \partial \D$, condition (iv)' of Theorem I.5.1 is satisfied since $\C \setminus \Sigma =\D^*$
and any point on its boundary $\partial \D$ is thus a regular point for the Dirichlet problem.)
Therefore, by \eqref{e:dUeta}, for every $z \in \supp \eta$ for which the measure $\eta$ is differentiable with respect to $s$, we have
\begin{equation} \label{e:bdd-normal1}
\pi \lim_{r \to 0^+} \frac{\eta(B_r(z))}{s(B_r(z))} = \partial_n^- U^{\mu_W}(z) + \int \frac{z-w}{|z-w|^2} \cdot \bar{n} \, \mu_W(dw)
.
\end{equation}
We will show that the right-hand side of \eqref{e:bdd-normal1}
is bounded by the right-hand side of \eqref{e:etabd}.
In fact the same argument shows that $\limsup_{r \to 0^+} \frac{\eta(B_r(z))}{s(B_r(z))} < \infty$ at every $z \in \supp \eta$.
From this it follows that $\eta_s \equiv 0$, and then, since $\lim_{r \to 0^+} \frac{\eta(B_r(z))}{s(B_r(z))} = \frac{d \eta}{ds}(z)$
for $ds$-almost every $z \in \supp \eta$ by \cite[Theorem~3.22]{MR1681462},
we see that the bound \eqref{e:etabd} holds.

By the continuity of the potential $U^{\mu_W}$ equality in the Euler--Lagrange equation \eqref{e:EL}
holds at every (rather than quasi-every) $z \in \supp \eta$,
i.e., $U^{\mu_W}(z) = c - \frac{1}{2} W(z)$, and
we also have $U^{\mu_W}(z) \geq c - \frac{1}{2} W(z)$ everywhere. 
Thus
\begin{equation*}
  U^{\mu_W}(z) + \frac12 W(z) - (U^{\mu_W}(z-\varepsilon \bar n) + \frac12 W(z-\varepsilon \bar n)) \leq 0,
\end{equation*}
and therefore, assuming that the derivative exists at $z$,
\begin{equation} \label{e:bdd-normal2}
\partial_n^- U^{\mu_W}(z) \leq - \frac{1}{2} \partial_n^- W(z)
= - \frac{t}{2} \partial_n^- V(z) + \int \frac{z-w}{|z-w|^2} \cdot \bar{n} \, \nu(dw) - \partial_n^- R(z).
\end{equation}
Combining \eqref{e:bdd-normal1} and \eqref{e:bdd-normal2} gives
\begin{equation} \label{e:bdd-bound}
\pi \lim_{r \to 0^+} \frac{\eta(B_r(z))}{s(B_r(z))}
\leq - \frac{t}{2} \partial_n^- V(z) + \int \frac{z-w}{|z-w|^2} \cdot \bar{n} \, \nu(dw) - \partial_n^- R(z) + \int \frac{z-w}{|z-w|^2} \cdot \bar{n} \, (\mu+\eta)(dw)
\end{equation}
for those $z \in \supp \eta$ for which the limit exists.

\begin{figure}[h]
\begin{center}
\input{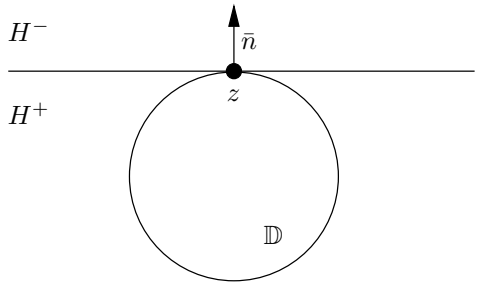}
\end{center}
\caption{
For $w$ in the halfplane $H^-$ we have $(z-w)\cdot \bar n < 0$
and $(z-w)\cdot \bar n > 0$ for $w \in H^+$.
The bound $\frac{z-w}{|z-w|^2} \cdot \bar{n}\leq \frac12$ holds for all $z \in \C \setminus  \D$.
\label{fig:nbd}}
\end{figure}

Given $z \in \partial\D$, denote $H^+ = \{ w: (z-w) \cdot \bar{n} > 0 \}$ and $H^- = \{ w: (z-w) \cdot \bar{n} < 0 \}$.
These sets are the half-planes into which the tangent of $\partial \mathbb{D}$ at $z$ divides the plane,
$H^+$ being the one that contains $\mathbb{D}$;
see Figure~\ref{fig:nbd}.
Observe the inequality $\frac{z-w}{|z-w|^2} \cdot \bar{n}\leq \frac12$ in $\overline{H^+} \setminus \mathbb{D}$. 
Indeed, supposing $z=(1,0)$ for notational convenience, $\overline{H^+} \setminus \mathbb{D} = \{(x,y): x \leq 1, y^2 \geq 1-x^2 \}$ and thus
$$
\frac{z-w}{|z-w|^2} \cdot \bar{n} = \frac{1-x}{(x-1)^2+y^2} \leq \frac{1-x}{2-2x} = \frac{1}{2} \quad \textrm{for } w = (x,y) \in \overline{H^+} \setminus \mathbb{D}.
$$
By the definition of $H^-$ thus also $\frac{z-w}{|z-w|^2} \cdot \bar{n}\leq \frac12$ in $\C\setminus \mathbb{D}$.
Since $\supp \nu \subset \C \setminus \mathbb{D}$ and $\supp \eta \subset \partial \mathbb{D} \subset \overline{H^+} \setminus \mathbb{D}$ it follows that
$$
\int \frac{z-w}{|z-w|^2} \cdot \bar{n} \, \nu(dw) \leq \frac{1}{2} \|\nu\| \quad \textrm{and} \quad \int \frac{z-w}{|z-w|^2} \cdot \bar{n} \, \eta(dw) \leq \frac{1}{2} \|\eta\|.
$$
To bound the integral with respect to $\mu$ in \eqref{e:bdd-bound} we note that, since $S_V = \overline{\D}$ and $V \in C^{1,1}_{loc}$,
and since $\mu_V$ is absolutely continuous with respect to the Lebesgue measure, we have $\mu = \ind{S_W} \mu_V$
(recall the perturbation in harmonic in $\D$, so can only change the support but not the density of the equilibrium measure)
and thus
$$
\int \frac{z-w}{|z-w|^2} \cdot \bar{n} \, \mu(dw) \leq \int \frac{z-w}{|z-w|^2} \cdot \bar{n} \, \mu_V(dw) = -\partial_n^- U^{\mu_V}(z) = \frac{1}{2} \partial_n^- V(z).
$$
With these estimates \eqref{e:bdd-bound} gives
$$
\pi \lim_{r \to 0^+} \frac{\eta(B_r(z))}{s(B_r(z))}
\leq
\frac{1-t}{2} \partial_n^- V(z) +
\frac{1}{2} \|\nu\| - \partial_n^- R(z) + \frac{1}{2} \|\eta\|
$$
as needed.

Finally, by replacing the one-sided derivative with the corresponding upper limit of difference quotients,
the bound \eqref{e:bdd-normal2} holds for all $z \in \supp \eta$.
The claim that $\limsup_{r \to 0^+} \frac{\eta(B_r(z))}{s(B_r(z))} < \infty$ at every $z \in \supp \eta$ then follows from \eqref{e:bdd-limsup}.
\end{proof}

\begin{proof}[Proof of \eqref{e:dUeta}]
Suppose $\gamma$ is such that $\gamma:[a,b] \to \C$ is an arclength parametrization, $\gamma(c) = z$ and let $\eta^\star$ be the pullback of $\eta$ on $[a,b]$. We may suppose $c \in (a,b)$, $z = 0$ and $\bar{n} = (0,-1)$.
Thus $\gamma(c+\varepsilon) = (\varepsilon+o_1(\varepsilon),o_2(\varepsilon))$.
Let $\phi(\varepsilon)$ be a function for which $\phi(\varepsilon)/\varepsilon \to \infty$, $\phi(\varepsilon) \to 0$ as $\varepsilon \to 0$.
Then
\begin{align*}
\frac{1}{\varepsilon} \left( U^\eta(z) - U^\eta(z - \varepsilon\bar{n}) \right) = & \frac{1}{\varepsilon} \int_\gamma \left( \log \frac{1}{|z-w|} - \log \frac{1}{|(z-\varepsilon\bar{n})-w|} \right) \eta(dw) \\
= & \frac{1}{\varepsilon} \int_{\gamma \setminus \gamma([c-\phi(\varepsilon),c+\phi(\varepsilon)])} \left( \log \frac{1}{|z-w|} - \log \frac{1}{|(z-\varepsilon\bar{n})-w|} \right) \eta(dw) \\
& + \frac{1}{\varepsilon} \int_{-\phi(\varepsilon)}^{\phi(\varepsilon)} \frac{1}{2} \log \frac{(s+o_1(s)^2)+(\varepsilon+o_2(s)^2)^2}{(s+o_1(s))^2 + o_2(s)^2} \eta^\star(ds).
\end{align*}
Noting that the integrand is not singular since $\bar{n}$ is the normal of $\gamma$ at $z$, the first term tends to $- \int_\gamma \frac{z-w}{|z-w|^2} \cdot \bar{n} \, \eta(dw)$ as $\varepsilon \to 0^+$. In the latter term we make the change of variables $s \to \varepsilon s$, note that the terms involving $o_1$ and $o_2$ may be dropped and get
\begin{equation*}
\frac{1}{2} \int_{-\phi(\varepsilon)/\varepsilon}^{\phi(\varepsilon)/\varepsilon} \log \left( 1 + \frac{1}{s^2} \right) \frac{\eta^\star(\varepsilon\,ds)}{\varepsilon} \stackrel{\varepsilon \to 0^+}{\longrightarrow} \pi \lim_{r \to 0^+} \frac{\eta^\star([-r,r])}{2r} = \pi \lim_{r \to 0^+} \frac{\eta(B_r(z))}{s(B_r(z))}
\end{equation*}
if the last limits exist,
by the evaluation $\int_{-\infty}^\infty \log (1 + \frac{1}{s^2}) \, ds = 2 \pi$.
The claim \eqref{e:bdd-limsup} follows by estimating the integrand from below by an indicator function.
\end{proof}

\section{Large deviation estimate for one step}
\label{sec:step}

The result of this section is the following estimate.
It follows from elementary estimates from mean-field theory:
Newton's electrostatic theorem and Jensen's inequality;
see e.g.\ \cite{MR0401029}.

The statement of our assumptions below is somewhat involved due to the required generality
that the equilibrium measure may have charge on the boundary of a domain.
In the special case that there is no boundary charge, the assumptions can be significantly simplified.
Moreover, our conditions on the boundary are not the most general possible,
but convenient to verify in our application of interest (in which the domain will be a small disk).
For example, we only use assumption (ii) to control the logarithmic potential of $v$,
and the exponent $1/2$ could be replaced by any exponent greater than $0$ or directly by an estimate on the logarithmic potential.

\begin{proposition} \label{prop:step}
Let $\Sigma = \Sigma_W$ be a smooth domain.
Given a potential $W \in C^{1,1}_{loc}(\Sigma_W)$ possibly depending on the number of particles $M$,
assume that there exist $u: \Sigma_W \to \R_+$ and $v: \partial \Sigma_W \to \R_+$ such that
$d\mu_W = u \, dm + v \, ds$, where $dm$ is the 2-dimensional Lebesgue measure and $ds$ is the arclength measure on $\partial \Sigma_W$.
We further make the following assumptions.
\begin{enumerate}
\item $\Delta W \geq \alpha$ on $S_W$ for some $\alpha > 0$ (which is allowed to depend on $M$);

\item there exist $A_u,A_v$ such that $\|\ind{S_W}\Delta W\|_{\infty} \leq C M^{A_u}$, $\|v\|_\infty \leq C M^{A_v}$ and
\begin{equation} \label{e:Ai}
  \sup_{w \in \C}
  \int \frac{v(z) \, s(dz)}{\sqrt{|z-w|}} \leq CM^{A_v};
\end{equation}

\item there exist $A_\nabla,A_n$ and an exceptional set $F \subset \Sigma_W$ such that $\|\ind{S_W \cap F^c} \nabla W\|_\infty \leq C M^{A_\nabla}$ and
such that we have
$\sup_{t \in [0,M^{-A_n}]} \int_{F \cap \partial \Sigma_W} |W(z-t\bar n) - W(z)| \, v(z) \, s(dz) \leq CM^{-1}$;

\item
there exist $A_\zeta, A_n$ such that 
\begin{equation} \label{e:ind-zeta-ass}
  \int_\C \ee^{-M\beta \zeta(z)} \, m(dz) \leq CM^{A_\zeta}, \qquad 
  \text{where }
  \zeta = 2U^{\mu_W}+ W - F_W
  ,
\end{equation}
and that for any $w \in \supp v \subset \partial \Sigma_W$
it holds that 
$w - \varepsilon \bar{n}_w \in  \Sigma_W$ for all $\varepsilon < M^{-A_n}$
where $\bar{n}_w$ denotes the outward normal of $\partial \Sigma_W$.
\end{enumerate}
Then for any bounded $f \in C^2(\C)$ with $\supp \Delta f \subset S_W$, 
and for any $\xi \geq 1+1/\beta$,
\begin{equation} \label{e:step}
  \left|\frac{1}{M} \sum_j f(z_j) - \int f \, d\mu_W\right|
  = O(\xi) \pa{ \frac{M \log M}{\alpha M^2} \|\Delta f\|_\infty + \left(\frac{M \log M}{M^2} \right)^{1/2}  \|\nabla f\|_2 }
  ,
\end{equation}
with probability at least $1-\ee^{-\xi\beta M \log M}$ under $P_{M,W,\beta}$.
The implicit constant depends only on the numbers $A$ and $C$ assumed above.

In fact, if there is no boundary charge ($v=0$), the condition that $\Delta f$ has support in $S_V$
can be extended to the condition that $\Delta f$ has compact support in $\Sigma$.
\end{proposition}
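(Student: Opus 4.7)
}
The plan is a standard mean-field large deviation argument: combine Newton's electrostatic theorem with Jensen's inequality to sandwich $\log Z_{M,W,\beta}$ between $-\beta M^2 I_W(\mu_W) \pm O(M\log M)$, then pass to linear statistics via a perturbation of the potential and the Laplace-transform identity $\E[\ee^{\beta M t\sum_j f(z_j)}]=Z_{M,W-tf,\beta}/Z_{M,W,\beta}$.

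\emph{Step 1 (pointwise lower bound on the Hamiltonian).}
Fix a scale $\epsilon>0$ and regularize the empirical measure by $\hatmueps=\hat\mu\ast\rho_\epsilon$, where $\rho_\epsilon$ is the uniform probability density on the disk of radius $\epsilon$. By Newton's theorem and the inequality $l_\epsilon\leq\log\tfrac{1}{|\cdot|}$ (cf.~\eqref{e:lrdef}),
\[
  \sum_{j\neq k}\log\tfrac{1}{|z_j-z_k|}\;\geq\;M^2 D(\hatmueps,\hatmueps)-M\,L_\epsilon(0),
\]
with $L_\epsilon(0)=O(\log(1/\epsilon))$. Expanding $I_W$ around its minimizer and using $\zeta\geq 0$ with $\zeta|_{S_W}=0$,
\[
  I_W(\hatmueps)\;=\;I_W(\mu_W)+(\zeta,\hatmueps)+D(\hatmueps-\mu_W,\hatmueps-\mu_W).
\]
Combining these, and estimating $|(W,\hat\mu)-(W,\hatmueps)|$ via assumption~(iii) (which controls $\nabla W$ off the exceptional set $F$ and the tangential behaviour of $W$ on $\supp v$ via~(ii)--(iii)), we obtain
\[
  H_{M,W}(z)\;\geq\;M^2 I_W(\mu_W)+M\sum_j\zeta(z_j)+M^2 D(\hatmueps-\mu_W,\hatmueps-\mu_W)-CM\log M,
\]
for an appropriate choice $\epsilon=M^{-A}$ compatible with assumption~(ii) on $\|\Delta W\|_\infty$ and \eqref{e:Ai}.

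\emph{Step 2 (two-sided bounds on $\log Z$).}
Integrating $\ee^{-\beta H_{M,W}}$ against $\dd m^{\otimes M}$ and using $\zeta\geq 0$ together with the integrability condition~(iv) to absorb the contribution of particles outside $S_W$ yields
\[
  \log Z_{M,W,\beta}\;\leq\;-\beta M^2 I_W(\mu_W)+CM\log M.
\]
A matching lower bound
$\log Z_{M,W,\beta}\geq -\beta M^2 I_W(\mu_W)-CM\log M$
follows by Jensen's inequality applied to a reference density $\rho$ that is a smoothed version of $\dd\mu_W/\dd m$; when $v\neq 0$, one first spreads the boundary charge $v\,\dd s$ into a bulk layer of thickness $M^{-A_n}$ (permissible by assumption~(iv)), controlling the error using \eqref{e:Ai} and (iii) for the $W$-integral and a direct bound on the entropy for the $\rho\log\rho$ term.

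\emph{Step 3 (perturbation and Markov).}
Given $f\in C^2(\C)$ with $\supp\Delta f\subset S_W$, let $t\in[0,\alpha/\|\Delta f\|_\infty]$ so that $\mu_{W-tf}=\mu_W-\tfrac{t}{4\pi}\Delta f$ is a positive measure. By Proposition~\ref{prop:eqmeasf},
\[
  I_{W-tf}(\mu_{W-tf})\;=\;I_W(\mu_W)-t(f,\mu_W)-\tfrac{t^2}{8\pi}\|\nabla f\|_2^2,
\]
and the hypotheses~(i)--(iv) are preserved with comparable constants (here we use that $f$ is supported in $S_W$, so $\zeta_{W-tf}$ and $\zeta_W$ differ by a bounded amount there and agree in $S_W^c$). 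Applying Step~2 to both $W$ and $W-tf$, Markov's inequality gives
\[
  \P\!\left(\tfrac{1}{M}\!\sum_j f(z_j)-(f,\mu_W)\geq \tfrac{r}{M}\right)
  \leq\exp\!\left(-\beta t r+\beta M t^2\tfrac{\|\nabla f\|_2^2}{8\pi}+CM\log M\right).
\]
Optimizing over $t$ in the admissible range yields two regimes, contributing respectively the terms $\xi\tfrac{\log M}{\alpha M}\|\Delta f\|_\infty$ (when the constraint $t\leq\alpha/\|\Delta f\|_\infty$ is active) and $\xi\sqrt{\log M/M}\,\|\nabla f\|_2$ (the Gaussian regime). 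The same argument applied to $-f$ handles the lower tail, completing the proof of~\eqref{e:step}.

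\emph{Anticipated obstacle and final remark.}
The main technical nuisance is the boundary charge $v$ in Steps~1 and~2: the Jensen lower bound wants an absolutely continuous reference measure, and the pointwise lower bound on $H_{M,W}$ must handle particles lying outside $\overline{\Sigma_W}$ even though $W=+\infty$ there---this is where assumptions (ii)--(iv), in particular \eqref{e:Ai} and the ``one-sided collar'' condition $w-\epsilon\bar n_w\in\Sigma_W$, are essential. For the concluding sentence of the proposition, when $v\equiv 0$ the boundary complications disappear entirely: the lower bound of Step~1 and the Jensen bound of Step~2 are insensitive to whether $\supp\Delta f$ lies in $S_W$ or only in $\Sigma$, because the $\zeta$-term in Step~1 still penalizes particles that escape $S_W$.
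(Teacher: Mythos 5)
Your plan follows the same route as the paper: a pointwise Newton/electrostatic lower bound on the Hamiltonian, a Jensen upper bound on the partition function with the boundary charge spread into a collar, Proposition~\ref{prop:eqmeasf} to compute the energy shift under $W\to W-tf$, and finally a Laplace transform plus Markov. The final display of your Step~1 coincides with the paper's version, and Step~3 is correct in outline (modulo a dropped factor of $M$ in the exponent of your Markov display, which should read $-\beta M t r+\beta M^2 t^2\|\nabla f\|_2^2/(8\pi)$; the optimization over $t$ is unaffected). However, two details as you have written them would fail.

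In Step~1 you propose to pass from $(W,\hat\mu)$ to $(W,\hatmueps)$ and control the error via assumption~(iii). This does not work: $\hatmueps$ can place mass on $\partial\Sigma_W$ or outside it, where $W$ is not $C^{1,1}$ (indeed $W=+\infty$ there), and assumption~(iii) only bounds $\nabla W$ off an exceptional set $F$ that the configuration has no reason to avoid. The correct move, and what the paper does, is to \emph{not} smear $W$ at all: keep $(W,\hat\mu)$ exact, regularize only the log--interaction, and then the mismatch appears as the term $2M^2\,(U^{\mu_W},\hatmueps-\hat\mu)$. This involves the bounded potential $U^{\mu_W}$ rather than $W$, and it is controlled by assumption~(ii) --- specifically the density bound $\|\ind{S_W}\Delta W\|_\infty\le CM^{A_u}$ and the Riesz-type bound~\eqref{e:Ai} --- not by assumption~(iii). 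Assumption~(iii) is used only in the Jensen lower bound on $Z$, to control the change of $(W,\mu_W)$ when the boundary charge $v\,ds$ is spread into a thin collar. Your final remark about the $v=0$ extension also misses the point. The constraint $\supp\Delta f\subset S_W$ is not there to penalize particles escaping $S_W$; it is required so that Proposition~\ref{prop:eqmeasf} gives $\mu_{W-tf}=\mu_W-\tfrac{t}{4\pi}\Delta f$ as a genuine probability measure, which fails if $\Delta f$ charges $\Sigma\setminus S_W$. When $v=0$, the paper sidesteps $\mu_{W-tf}$ entirely by keeping $\mu_W$ as the reference: instead of using nonnegativity of $D(\hatmueps-\mu_W,\hatmueps-\mu_W)$, it integrates by parts and applies Cauchy--Schwarz to absorb $-\tfrac{1}{\beta M}(f,\hatmueps)$, a step legitimate precisely because the boundary term vanishes in the absence of boundary charge.
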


\begin{remark} \label{rem:step}
  Proposition~\ref{prop:step} gives an estimate at scales $M^{-s}$, $s<1/4$.
  Indeed, let $f_0 \geq 0$ be a smooth function supported in the unit ball
  with $\int f_0 \, dm = 1$, and set $f_s(z) = M^{2s} f(M^{s}z)$. Then
  \begin{equation*}
    \|\Delta f_s\|_\infty = O(M^{4s}), \quad
    \|\nabla f_s\|_2 = (f_s,-\Delta f_s) = O(M^{4s}),
  \end{equation*}
  and thus the error bounds in \eqref{e:step} are
  \begin{equation*}
    M^{-1} (\log M) M^{4s}, \quad
    M^{-1/2} (\log M)^{1/2} M^{2s},
  \end{equation*}
  which go to $0$ if $s<1/4$.
\end{remark}

\subsection{Bounds on partition function}

For $\beta >0$ and $f: \C \to \R$, we define the perturbed partition function 
\begin{equation*}
  Z(f) = \int_{\C^M} \ee^{-\beta H_{M,W}(z) + \sum_{j} f(z_j)} \, m^{\otimes M}(dz).
\end{equation*}
This is simply the partition function associated to the potential $W - \frac{1}{\beta M} f$.
In the following two lemmas, upper and lower bounds are shown, from which Proposition~\ref{prop:step}
then follows easily.

\begin{lemma} \label{lem:ind-lb}
  Let $f \in C^2(\C)$ be bounded with $\supp \Delta f \subset S_W$
  and $|\Delta f| \leq \beta M \Delta W$ in $\supp \Delta f$,
  and assume that the conditions in Proposition~\ref{prop:step} hold.
  Then
  \begin{equation*}
    -\frac{1}{\beta} \log Z(f)
    \geq M^2 I_W(\mu_W) - \frac{1}{\beta} M \int f \, d\mu_W - \frac{1}{{8}\pi\beta^2} (f,-\Delta f)
    - C M \log M
    - \frac{1}{\beta} A_\zeta M \log M
  \end{equation*}
  for a constant $C > 0$ only depending on $A_u$, $A_\nabla$ and $A_v$.

  In fact, if there is no boundary charge ($v=0$), the condition that $\Delta f$ has support in $S_V$
  can be extended to the condition that $\Delta f$ has compact support in $\Sigma$.
\end{lemma}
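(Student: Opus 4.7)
The plan is to reduce the lemma to a mean-field upper bound on a slightly modified partition function, and then to prove that bound by a smearing (Onsager-type) argument combined with the effective-potential integrability assumption~(iv).

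First, one recognises that $Z(f) = Z_{M, \tilde W, \beta}$ where $\tilde W = W - (\beta M)^{-1} f$. The hypothesis $|\Delta f| \leq \beta M \, \Delta W$ on $\supp \Delta f \subset S_W$ places $(\beta M)^{-1} f$ in the setting of Proposition~\ref{prop:eqmeasf} applied with $V=W$, yielding $\mu_{\tilde W} = \mu_W - (4\pi\beta M)^{-1}\Delta f$, the energy identity
\begin{equation*}
\beta M^2 I_{\tilde W}(\mu_{\tilde W}) = \beta M^2 I_W(\mu_W) - M(f,\mu_W) - \frac{1}{8\pi\beta}(f,-\Delta f),
\end{equation*}
and, from \eqref{e:zetafeq}, the invariance $\zeta_{\tilde W} = \zeta_W$ up to an additive $O(1/M)$ constant, so that hypothesis~(iv) transfers to $\tilde W$ with the same exponent $A_\zeta$. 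Substituting the energy identity into the target inequality reduces the lemma to the unperturbed statement
\begin{equation} \label{e:reduced}
-\log Z_{M, \tilde W, \beta} \geq \beta M^2 I_{\tilde W}(\mu_{\tilde W}) - C \beta M \log M - A_\zeta M \log M.
\end{equation}

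Next, I would prove \eqref{e:reduced} via a pointwise Onsager-type lower bound on $\beta H_{M, \tilde W}$. Pick a smearing radius $\eta = M^{-c_0}$ with $c_0$ large enough to beat the exponents $A_u, A_v, A_\nabla, A_n$ in (i)--(iv), and write $\chi^{(\eta)} = (\pi\eta^2)^{-1}\ind{B(0,\eta)}$, $\hat\mu^{(\eta)} = \hat\mu * \chi^{(\eta)}$. Because $\log(1/|\cdot|)$ is superharmonic on $\C$, the mean-value inequality applied in each variable separately gives
\begin{equation*}
\sum_{j \neq k} \log\frac{1}{|z_j-z_k|} \geq M^2 \iint \log\frac{1}{|z-w|}\, \hat\mu^{(\eta)}(dz)\,\hat\mu^{(\eta)}(dw) - M\kappa_\eta,
\end{equation*}
with $\kappa_\eta = \tfrac{1}{4} + \log(1/\eta) = O(\log M)$ the self-energy of a single smeared charge. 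Writing $\tilde W = \zeta_{\tilde W} + 2F_{\tilde W} - 2U^{\mu_{\tilde W}}$ from the Euler--Lagrange identity \eqref{e:EL}, and completing the square via $D(\hat\mu^{(\eta)} - \mu_{\tilde W}, \hat\mu^{(\eta)} - \mu_{\tilde W}) \geq 0$, one obtains
\begin{equation*}
\beta H_{M, \tilde W}(z) \geq \beta M^2 I_{\tilde W}(\mu_{\tilde W}) + \beta M \sum_j \zeta_{\tilde W}(z_j) - \beta M \kappa_\eta - \mathcal{E}(z),
\end{equation*}
where $\mathcal{E}(z) = 2\beta M \sum_j \bigl[U^{\mu_{\tilde W}}(z_j) - (U^{\mu_{\tilde W}} * \chi^{(\eta)})(z_j)\bigr]$ is a pointwise nonnegative smoothing correction.

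The final step is to bound $\mathcal{E}(z) \leq C\beta M \log M$ uniformly in $z$ and to integrate. In the interior of $\Sigma_W$ the potential $U^{\mu_{\tilde W}}$ is $C^{1,1}$ with $\Delta U^{\mu_{\tilde W}} = -2\pi u$, so by Taylor expansion and the bound $\|\ind{S_W}\Delta W\|_\infty \leq C M^{A_u}$ from (ii), each interior summand is $O(\eta^2 M^{A_u})$. The boundary contributions, arising from the $O(v)$ jump of $\partial_n^- U^{\mu_{\tilde W}}$ across $\partial\Sigma_W$, are controlled by the pointwise bound $\|v\|_\infty \leq C M^{A_v}$ and the integral estimate~\eqref{e:Ai} from (ii), the normal-direction oscillation bound on $W$ in (iii), and the inward-reach condition in (iv). For $\eta$ a sufficiently small negative power of $M$, every such contribution is $O(\beta M\log M)$. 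Integrating the pointwise bound, discarding the nonnegative Coulomb square, and invoking (iv) gives
\begin{equation*}
Z_{M,\tilde W,\beta} \leq \ee^{-\beta M^2 I_{\tilde W}(\mu_{\tilde W}) + C\beta M\log M}\pB{\int_\C \ee^{-\beta M \zeta_{\tilde W}(z)}\, m(dz)}^M \leq \ee^{-\beta M^2 I_{\tilde W}(\mu_{\tilde W}) + C\beta M\log M + A_\zeta M\log M},
\end{equation*}
so taking $-\log$ yields \eqref{e:reduced}, hence the lemma. The main technical obstacle is the analysis of $\mathcal{E}(z)$ in the thin boundary layer of width $\eta$ around $\partial\Sigma_W$ when $v \not\equiv 0$: there $U^{\mu_{\tilde W}}$ is only Lipschitz (not $C^{1,1}$), so the smoothing error fails to scale as $\eta^2$, and controlling it requires the full strength of hypotheses~(ii)--(iv), most notably the integral condition \eqref{e:Ai} and the inward-reach assumption on $\partial\Sigma_W$. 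When $v \equiv 0$ this boundary analysis is absent, which is precisely why the lemma can then relax its support condition to $\supp \Delta f \subset \Sigma$.
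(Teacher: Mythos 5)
Your main-case argument follows the paper's route closely: you smear $\hat\mu$, invoke Newton's theorem to get the Onsager lower bound, complete the square against $\mu_{\tilde W}=\mu_{W-f/(\beta M)}$ (the paper's $\mu_{W,f}$), drop $D(\hat\mu^{(\eta)}-\mu_{\tilde W},\hat\mu^{(\eta)}-\mu_{\tilde W})\geq 0$, recognize $\zeta_{\tilde W}=\zeta_W$ via Proposition~\ref{prop:eqmeasf}, bound the pointwise smearing error $U^{\mu_{\tilde W}}-(U^{\mu_{\tilde W}})^{(\eta)}$ by splitting into the $u\,dm$ and $v\,ds$ parts and using hypothesis~(ii), pick $\eta$ a suitable negative power of $M$, and factor the remaining integral via hypothesis~(iv). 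That is exactly the paper's proof, up to cosmetics. One small inaccuracy: the pointwise estimate of the smearing error uses only~(ii) (via the elementary $\log(\varepsilon/(|z-w|\wedge\varepsilon))\leq\sqrt{\varepsilon/|z-w|}$ trick), not~(iii) — the normal-oscillation condition~(iii) is consumed in the upper bound (Lemma~\ref{lem:ind-ub}), where one smears $\mu_W$ rather than $\hat\mu$; listing~(iii) here misattributes its role.

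The genuine gap is the final clause of the lemma, the $v\equiv 0$ case where $\supp\Delta f$ is only required to be compactly contained in $\Sigma$ rather than in $S_W$. You dismiss it in one sentence as ``the boundary analysis is absent,'' but that does not explain why the support condition can be relaxed. The point is this: your whole argument hinges on Proposition~\ref{prop:eqmeasf} applied to the pair $(W,f/(\beta M))$, and that proposition requires $\supp\Delta f\subset S_W$ together with $|\Delta f|\leq\beta M\,\Delta W$ in order to guarantee that $\mu_{\tilde W}=\mu_W-\frac{1}{4\pi\beta M}\Delta f$ is a positive measure and satisfies the Euler--Lagrange relation. If $\Delta f\neq 0$ outside $S_W$, this candidate measure need not be positive and need not equal the true equilibrium measure of $\tilde W$, so the completion of squares against $\mu_{\tilde W}$ breaks down. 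The paper handles the relaxed case by abandoning $\mu_{\tilde W}$ entirely: one completes the square against $\mu_W$ instead, and absorbs the leftover linear term $-\frac{1}{\beta M}(f,\hat\mu^{(\varepsilon)}-\mu_W)$ into the quadratic form by integrating by parts and using Cauchy--Schwarz (via $-|ab|+|b|^2\geq -|a|^2/4$), yielding precisely the $-\frac{1}{8\pi\beta^2M^2}(f,-\Delta f)$ correction; this integration by parts is what requires $v=0$. Your proof as written does not contain this alternative argument, so the second part of the lemma is not established.
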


\begin{proof}
Let $\rho = \frac{1}{\pi} \ind{\D}$.
Since $\rho$ is a radially symmetric probability distribution, Newton's electrostatic theorem states that
\begin{equation} \label{e:Newton}
  \log \frac{1}{|z|} \geq \int \log \frac{1}{|z-w|} \, \rho(w) \, m(dw).
\end{equation}
In particular, if $\mu^{(\varepsilon)} = \rho_\varepsilon * \mu$ where $\rho_\varepsilon(z) = \varepsilon^{-2} \rho(\varepsilon^{-1}z)$ for $\rho$ as above,
\begin{equation} \label{e:Onsager0}
  D(\delta_z, \delta_w) \geq D(\delta_z^{(\varepsilon)}, \delta_w^{(\varepsilon)}).
\end{equation}
It follows that
\begin{equation*}
H_{M,W}(z) - \frac{1}{\beta} \sum_j f(z_j) \geq M^2 D(\hatmueps,\hatmueps)
- M D(\delta_0^{(\varepsilon)},\delta_0^{(\varepsilon)}) + M^2(W - \tfrac{1}{\beta M}f,\hat\mu).
\end{equation*}
Let $\mu_{W,f}$ be the equilibrium measure for the potential $W - \frac{1}{\beta M} f$, and write
\begin{align*}
D(\hatmueps,\hatmueps) + (W-\tfrac{1}{\beta M} f,\hat\mu)
&=  D(\mu_{W,f},\mu_{W,f}) + (W-\tfrac{1}{\beta M} f,\mu_{W,f}) \\
&\qquad + D(\hatmueps-\mu_{W,f},\hatmueps-\mu_{W,f}) + 2 (U^{\mu_{W,f}},\hatmueps-\hat\mu) \\
&\qquad + 2 (U^{\mu_{W,f}} + \tfrac{1}{2} ( W - \tfrac{1}{\beta M} f ),\hat\mu-\mu_{W,f}) 
\end{align*}
to get
\begin{align*}
H_{M,W}(z) - \frac{1}{\beta} \sum_j f(z_j)
&\geq M^2 \left( D(\mu_{W,f},\mu_{W,f}) + (W-\tfrac{1}{\beta M} f,\mu_{W,f}) \right)
- M D(\delta_0^{(\varepsilon)},\delta_0^{(\varepsilon)}) \\
&\qquad + M^2 \left( D(\hatmueps-\mu_{W,f},\hatmueps-\mu_{W,f}) + 2(U^{\mu_{W,f}},\hatmueps-\hat\mu) \right)\\
&\qquad + M^2 \left( 2 (U^{\mu_{W,f}} + \tfrac{1}{2} ( W - \tfrac{1}{\beta M} f ),\hat\mu-\mu_{W,f}) \right).
\end{align*}
The first term on the second line is nonnegative since $D$ is positive definite for signed measures with total measure $0$.
By Proposition~\ref{prop:eqmeasf},
the first term on the third line equals
$\int \zeta(z) \hat\mu(dz)$ (recall the definition of $\zeta$ in \eqref{e:ind-zeta-ass}),
which by the Euler--Lagrange equation 
is identically $0$ on $S_W$ and positive elsewhere. Therefore
\begin{equation} \label{e:ind-lb-H}
H_{M,W}(z) - \frac{1}{\beta} \sum_j f(z_j) \geq M^2 h 
- M D(\delta_0^{(\varepsilon)},\delta_0^{(\varepsilon)}) 
+ M^2 \int \zeta(z) \hat\mu(dz) + 2 M^2 (U^{\mu_{W,f}},\hatmueps-\hat\mu)
,
\end{equation}
where we abbreviate $h = I_{W-\frac{1}{\beta M} f}(\mu_{W-\frac{1}{\beta M} f})$.

The asymptotics of the second term in \eqref{e:ind-lb-H}
are given by $M D(\delta_0^{(\varepsilon)},\delta_0^{(\varepsilon)}) = M (\log \frac{1}{\varepsilon}+O(1))$.
To estimate the term $2 M^2 (U^{\mu_{W,f}},\hatmueps-\hat\mu)$ in \eqref{e:ind-lb-H},
we use the equality $U^{\mu_{W,f}} =  U^{\mu_{W}} + \frac{1}{2\beta M} f + c$ by Proposition~\ref{prop:eqmeasf}. 
From the explicit formula for $\rho_\varepsilon * (-\log) = l_\varepsilon$ in \eqref{e:lrdef} it follows that
\begin{align*}
|(U^{\mu_W})^{(\varepsilon)}(z) - U^{\mu_W}(z)|
&= \left| \int \left( l_\varepsilon(z-w) - \log \frac{1}{|z-w|} \right) \mu_W(dw) \right|
\leq \int \log \frac{\varepsilon}{|z-w| \wedge \varepsilon} \, \mu_W(dw) \\
& \leq \sqrt{\varepsilon} \int_{B_\varepsilon(z)} \frac{\mu_W(dw)}{\sqrt{|z-w|}}.
\end{align*}
where we used $\log t \leq \sqrt{t}$ for $t \geq 1$. 
Write $\mu_W = u \, dm + v \, ds$. Then by assumption (ii) we have
\begin{equation*}
\sqrt{\varepsilon} \int_{B_\varepsilon(z)} \frac{u(w) \, m(dw)}{\sqrt{|z-w|}} \leq C M^{A_u} \varepsilon^2,
\qquad
\sqrt{\varepsilon} \int_{B_\varepsilon(z)} \frac{v(w) \, s(dw)}{\sqrt{|z-w|}} \leq  C M^{A_v} \sqrt{\varepsilon}.
\end{equation*}
For $f \in C^2(\C)$ with $|\Delta f| \leq \beta M \Delta W \leq \beta M^{1+A_u}$, using the radial symmetry of $\rho$, we also have
\begin{equation*}
  \tfrac{1}{2\beta M} |(f^{(\varepsilon)}-f,\hat\mu)|
  \leq \tfrac{1}{\beta M} C \varepsilon^2 \|\Delta f\|_\infty
  \leq C \varepsilon^2 M^{A_u}.
\end{equation*}
By the associative property of convolution, for any two measures $\mu$ and $\nu$, we have
$(U^\mu,\nu^{(\varepsilon)}) = ((U^{\mu})^{(\varepsilon)},\nu)$, from which we conclude that
\begin{equation*}
|(U^{\mu_{W,f}},\hatmueps-\hat\mu)|
\leq
C \varepsilon^2 M^{A_u} + C \sqrt{\varepsilon} M^{A_v} .
\end{equation*}
We may thus take $\varepsilon = M^{-\max(A_u/2,2A_v)}$ so that for $M$ large \eqref{e:ind-lb-H} reads
\begin{equation} \label{e:ind-lb-H2}
H_{M,W}(z) - \frac{1}{\beta} \sum_j f(z_j) \geq M^2 h 
+ M^2 \int \zeta(z) \hat\mu(dz) - C M \log M,
\end{equation}

By assumption (iv) of Proposition~\ref{prop:step} and \eqref{e:ind-lb-H2},
\begin{align*}
Z(f)
&= \int_{\C^M} \ee^{-\beta H_{M,W}(z) + \sum_{j} f(z_j)} \, m^{\otimes M}(dz)\\
& \leq \int_{\C^M} \ee^{-\beta \left( M^2 h 
    + 2 M^2 \int \zeta(z) \hat\mu(dz) - C M \log M \right)  } \, m^{\otimes M}(dz) \\
& = \ee^{-\beta \left( M^2 h 
    - C M \log M \right)} \left( \int_\C \ee^{-\beta M \zeta(z)} \, m(dz) \right)^M \\
& \leq \ee^{-\beta \left( M^2 h 
    - C M \log M \right)} M^{M A_\zeta}.
\end{align*}
Finally, by Proposition~\ref{prop:eqmeasf},
$
h \geq I_W(\mu_W) - \tfrac{1}{\beta M} (f,\mu_W) - \frac{1}{{8} \pi \beta^2 M^2}(f,-\Delta f)
$,
which gives the desired estimate.

In the case that $v=0$ the estimates can be extended to $f$ with $\Delta f$ compactly supported in $\Sigma$
as follows. We replace $\mu_{W,f}$ by $\mu_W$ in all estimates and 
instead of $D(\hat \mu^{(\varepsilon)} - \mu_{W,f}, \hat \mu^{(\varepsilon)} - \mu_{W,f}) \geq 0$ we use
integration by parts, the Cauchy--Schwarz inequality and
$-|ab|+|b|^2 \geq -|a|^2/4$ to get
\begin{align*}
  - \tfrac{1}{\beta M}(f, \hat \mu^{(\varepsilon)}) + D(\hat \mu^{(\varepsilon)} - \mu_W, \hat \mu^{(\varepsilon)} - \mu_W)
  &= -\tfrac{1}{\beta M} (f, \mu_W) + \tfrac{1}{2\pi} (\tfrac{1}{\beta M} \nabla f + \nabla U^{\hat\mu^{(\varepsilon)}-\mu_W}, \nabla U^{\hat\mu^{(\varepsilon)}-\mu_W})
  \\
  &\geq -\tfrac{1}{\beta M} (f, \mu_W) - \tfrac{1}{8\pi \beta^2 M^2} (f,-\Delta f).
\end{align*}
The integration by parts is justified if $v=0$.
This leads to the same bound as previously without using that $\supp \Delta f  \subset S_W$.
\end{proof}

\begin{lemma} \label{lem:ind-ub}
Assume the conditions in Proposition \ref{prop:step} hold. Then
\begin{equation} \label{e:Z-ub}
-\frac{1}{\beta} \log Z(0) \leq M^2 I_W(\mu_W) + C \frac{1}{\beta} M \log M
\end{equation}
for a constant $C > 0$ that depends only on the constants $A$.
\end{lemma}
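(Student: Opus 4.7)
The plan is to establish the matching upper bound by a standard Jensen's inequality argument with a carefully chosen trial measure, namely a smoothed version of the equilibrium measure $\mu_W$. This is complementary to the lower bound in Lemma~\ref{lem:ind-lb}.

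Concretely, for $\varepsilon > 0$ to be chosen, let $\mu = \mu_W^{(\varepsilon)} = \rho_\varepsilon * \mu_W$ be the mollification of $\mu_W$ by the smooth disk density $\rho_\varepsilon$ as in the previous lemma, and let $\rho$ denote its density with respect to Lebesgue measure $m$. Since $\mu$ is a probability measure absolutely continuous w.r.t.\ $m$, I can write
\begin{equation*}
  Z(0)
  = \int_{\C^M} \ee^{-\beta H_{M,W}(z)} \prod_{j} \frac{1}{\rho(z_j)} \, \mu^{\otimes M}(dz),
\end{equation*}
and Jensen's inequality applied to the probability measure $\mu^{\otimes M}$ yields
\begin{equation*}
  -\frac{1}{\beta} \log Z(0)
  \leq \int H_{M,W}\,d\mu^{\otimes M} + \frac{M}{\beta} \int \rho \log \rho \, dm.
\end{equation*}

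The next step is to evaluate the first term. Using $H_{M,W}(z) = \sum_{j\neq k}\log\frac{1}{|z_j-z_k|} + M\sum_j W(z_j)$, a direct calculation gives
\begin{equation*}
  \int H_{M,W}\,d\mu^{\otimes M} = M(M-1) D(\mu,\mu) + M^2 (W,\mu).
\end{equation*}
I would then replace $\mu$ by $\mu_W$ and quantify the error. For the logarithmic energy, by the associativity of convolution, $D(\mu,\mu) = (U^{\mu_W^{(\varepsilon)}},\mu_W^{(\varepsilon)})$, and the estimate $|U^{\mu_W^{(\varepsilon)}} - U^{\mu_W}| \leq CM^{A_u}\varepsilon^2 + CM^{A_v}\sqrt{\varepsilon}$ already derived in the proof of Lemma~\ref{lem:ind-lb} (using hypothesis~(ii)) yields $|D(\mu,\mu) - D(\mu_W,\mu_W)| \leq CM^{A_u}\varepsilon^2 + CM^{A_v}\sqrt{\varepsilon}$. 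The $M(M-1)$ vs.\ $M^2$ discrepancy contributes an additional $-MD(\mu,\mu) = O(M\log(1/\varepsilon))$. For the potential term, hypothesis~(iii) together with the decomposition $d\mu_W = u\,dm + v\,ds$ and the handling of the boundary displacement via the bound on $\int|W(z-t\bar n)-W(z)|v\,ds$ gives $|(W,\mu_W^{(\varepsilon)}) - (W,\mu_W)| = O(M^{-1})$. Choosing $\varepsilon = M^{-A}$ with $A$ sufficiently large (depending on the exponents $A_u, A_v$), all the errors on this line are absorbed into $CM\log M$, so this contribution is $M^2 I_W(\mu_W) + O(M\log M)$.

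The main technical point is the entropy term $M\int \rho\log\rho\,dm$, and this is where the presence of a singular boundary part $v\,ds$ in $\mu_W$ makes things delicate. Inside the bulk, $\rho \leq \|u\|_\infty \leq CM^{A_u}$ by hypothesis~(ii), so this part contributes at most $O(\log M)$. On the $\varepsilon$-neighborhood of $\partial\Sigma_W$ where the boundary charge is smeared, the density is bounded by $C\|v\|_\infty/\varepsilon \leq CM^{A_v}\varepsilon^{-1}$, but the total $\mu$-mass of this region is only $\|v\|\cdot O(1)$, so the contribution is $O(\log(M^{A_v}\varepsilon^{-1})) = O(\log M)$ for $\varepsilon$ polynomial in $M$. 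Summing gives $\int\rho\log\rho\,dm = O(\log M)$, and therefore the entropy produces a further $O(\frac{1}{\beta}M\log M)$ error, which is of the claimed order.

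The anticipated main obstacle is precisely this entropy bound on the thickened boundary: it requires combining the pointwise bound $\|v\|_\infty \leq CM^{A_v}$ with the fact that the total mass of $v\,ds$ is bounded, and keeping track of the interplay with the choice of $\varepsilon$. Apart from that, the argument is a routine computation; the constants $A$ appearing in the hypotheses of Proposition~\ref{prop:step} are exactly tuned to make the errors $O(\frac{1}{\beta}M\log M)$ after optimizing $\varepsilon$.
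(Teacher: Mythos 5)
Your proposal identifies the right overall strategy (Jensen's inequality with a regularized trial measure, then separate estimates on the entropy, the interaction energy, and the potential energy), and the error budgeting is essentially correct; but the specific choice of trial measure in the first step does not work, and this is exactly the point the paper's construction is designed to avoid.

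You take $\mu = \rho_\varepsilon * \mu_W$, the \emph{symmetric} mollification of $\mu_W$ by the disk kernel from Lemma~\ref{lem:ind-lb}. However, $W = +\infty$ outside $\Sigma_W$, and the equilibrium measure $\mu_W$ can carry singular mass $v\,ds$ on $\partial\Sigma_W$ (indeed this is the case of interest in the iteration, where $W$ is the restricted potential on a disk $B$). Symmetric mollification pushes roughly half of the boundary charge, and also some of the bulk charge near the boundary, \emph{outside} $\Sigma_W$. Then $\int W\,d\mu = +\infty$, so $\int H_{M,W}\,d\mu^{\otimes M} = +\infty$, and Jensen's inequality gives only the vacuous bound $-\frac{1}{\beta}\log Z(0) \leq +\infty$. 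Equivalently, the integrand $\ee^{-\beta H_{M,W}}/\prod_j\rho(z_j)$ vanishes on a set of positive $\mu^{\otimes M}$-measure, which makes the log-concavity inequality degenerate. You implicitly rely on hypothesis~(iii) to control $|(W,\mu_W^{(\varepsilon)})-(W,\mu_W)|$, but that hypothesis only controls displacements of the form $z \mapsto z - t\bar n$ with $t\ge 0$, i.e.\ displacements directed \emph{inward}; a symmetric mollification produces outward displacements which (iii) does not and cannot control, precisely because $W$ is $+\infty$ there.

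The paper's trial measure is not a mollification. It keeps the absolutely continuous part $u\,dm$ untouched and replaces the singular boundary part $v\,ds$ by a strictly one-sided smearing,
\begin{equation*}
  v^{(\varepsilon)}(dz) = \int_{\supp v} v(w)\,\frac{1}{\varepsilon}\,s_{E_\varepsilon(w)}(dz)\,s(dw),
  \qquad
  E_\varepsilon(w) = \{\, w - t\bar n_w : t\in[0,\varepsilon] \,\},
\end{equation*}
so that all the new mass stays inside $\Sigma_W$ (using hypothesis~(iv) for $\varepsilon < M^{-A_n}$). With this replacement, the rest of your argument -- the $O(\log M)$ entropy bound from the density bounds $\|u\|_\infty\le CM^{A_u}$ and $\|v^{(\varepsilon)}\|_\infty\le CM^{A_v}/\varepsilon$, the interaction energy error $O(\sqrt{\varepsilon}M^{A_v})$ via \eqref{e:Ai}, the potential energy error $O(\varepsilon M^{A_\nabla} + M^{-1})$ via hypothesis~(iii), and the final choice $\varepsilon \le M^{-1-\max(2A_v,A_\nabla,A_n)}$ -- goes through exactly as you outline.
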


\begin{proof}
For $w \in \supp v$, we set $E_\varepsilon(w) = \{ w-t\bar{n}_w: t \in [0,\varepsilon] \}$.
As in assumption~(iv), we choose $\varepsilon > 0$ small enough
such that for any $w \in \supp v \subset \partial \Sigma_W$
it holds that  $E_\varepsilon(w) \subset \Sigma_W$,
and define the measure $\mu_W^{(\varepsilon)}$ by
\begin{equation*}
d \mu_W^{(\varepsilon)} = u \, dm +
dv^{(\varepsilon)}, \quad
v^{(\varepsilon)}(dz) = \int_{\supp v} v(w) \frac{1}{\varepsilon} s_{E_\varepsilon(w)}(dz) \, s(dw),
\end{equation*}
where $s_{E_\varepsilon(w)}$ denotes the arclength measure on the segment $E_\varepsilon(w)$ and $s$
the arclength measure on $\partial \Sigma$.
For $\varepsilon < M^{-A_n}$ observe that $v^{(\varepsilon)}$ is absolutely continuous with respect to $dm$ and that
its density (which by a slight abuse of notation we also denote by $v^{(\varepsilon)}$, so that $v^{(\varepsilon)}(dz) = v^{(\varepsilon)}(z) \, dm$)
is bounded by $O(\varepsilon^{-1}\|v\|_\infty$) by assumption (iv).
Denote the density of $\mu_W^{(\varepsilon)}$ at $z$ by $\mu_W^{(\varepsilon)}(z)$ and apply
Jensen's inequality to see that
\begin{align*}
Z(0) &= \int_{\C^M} \ee^{-\beta H_{M,W}(z)} \, m^{\otimes M}(dz)\\
& \geq \int_{\C^M} \ee^{-\beta H_{M,W}(z) - \sum_j \log \mu_W^{(\varepsilon)}(z_j)} \prod_j \mu_W^{(\varepsilon)}(z_j) \, m^{\otimes M}(dz) \\
& \geq \exp\left( \int_{\C^M} \left( -\beta H_{M,W}(z) - \sum_j \log \mu_W^{(\varepsilon)}(z_j) \right) \prod_j \mu_W^{(\varepsilon)}(z_j) \, m^{\otimes M}(dz) \right) \\
& = \exp\left( - \beta M(M-1) I_W(\mu_W^{(\varepsilon)}) - M \int \log \mu_W^{(\varepsilon)}(z) \, \mu_W^{(\varepsilon)}(dz) \right)
\end{align*}
Thus
\begin{equation} \label{e:ub-1}
-\log Z(0) \leq \beta M^2 I_W(\mu_W^{(\varepsilon)}) + M \int \log \mu_W^{(\varepsilon)}(z) \, \mu_W^{(\varepsilon)}(dz).
\end{equation}
By the assumptions of Proposition \ref{prop:step} and the definition of $\mu_W^{(\varepsilon)}$, the density $\mu_W^{(\varepsilon)}(\cdot)$ is bounded by $C M^{A_u} + C \tfrac{1}{\varepsilon} M^{A_v}$ and therefore
\begin{equation} \label{e:ub-entropy}
M \int \log \mu_W^{(\varepsilon)}(z) \, \mu_W^{(\varepsilon)}(dz) \leq C M + M \log \frac{1}{\varepsilon} + \max(A_u,A_v) M \log M.
\end{equation}
The energy may be estimated as follows. For the interaction energy we write
\begin{align*}
D(\mu_W^{(\varepsilon)},\mu_W^{(\varepsilon)}) - D(\mu_W,\mu_W)
= (U^{\mu_W^{(\varepsilon)}},\mu_W^{(\varepsilon)}) - (U^{\mu_W},\mu_W)
&= (U^{\mu_W^{(\varepsilon)}} - U^{\mu_W},\mu_W^{(\varepsilon)} + \mu_W)
\\
&= (U^{v^{(\varepsilon)}} - U^{v},\mu_W^{(\varepsilon)} + \mu_W).
\end{align*}
From
\begin{align*}
U^{v^{(\varepsilon)}}(z)
= \int \log \frac{1}{|z-\xi|} v^{(\varepsilon)}(d\xi)
= \int v(w) \, s(dw) \, \int_0^\varepsilon \log \frac{1}{|z-(w-t \bar{n}_w)|} \frac{dt}{\varepsilon}
\end{align*}
we then get
\begin{equation*}
U^{v^{(\varepsilon)}}(z) - U^{v}(z)
= \int v(w) \, s(dw) \, \int_0^\varepsilon \log \frac{1}{|1 + \frac{t \bar{n}_w}{z-w}|} \frac{dt}{\varepsilon}
= O(\sqrt{\varepsilon}) \int \frac{v(w) \, s(dw)}{\sqrt{|z-w|}}
= O(\sqrt{\varepsilon} M^{A_v})
,
\end{equation*}
where in the second inequality we used
$
\int_0^\varepsilon \log |1 + t\xi| \frac{dt}{\varepsilon}
\leq C\sqrt{\varepsilon |\xi|}
$,
and in the last inequality we used assumption \eqref{e:Ai}.
The potential energy is estimated using assumption (iii) by
\begin{align*}
  |(W,\mu_W^{(\varepsilon)}) - (W,\mu_W)|
  &\leq \int v(z) \, s(dz) \, \int_0^\varepsilon |W(z-t\bar n_z)-W(z)| \, \frac{dt}{\varepsilon}
  \\
  &\leq \varepsilon \|\nabla W\|_{\infty,S_W \cap F^c} + \sup_{t\in[0,\varepsilon]} \int_{F \cap \partial\Sigma_W} |W(z-t\bar n_z)-W(z)| \, v(z) \, s(dz)
  \\
  &\leq \varepsilon M^{A_\nabla} + CM^{-1}
  ,
\end{align*}
where in the last step we also used $\varepsilon < M^{-A_n}$.
To sum up, we have
\begin{equation}
\label{e:ub-energy}
| I_W(\mu_W^{(\varepsilon)}) - I_W(\mu_W)|
\leq C \sqrt{\varepsilon} M^{A_v} +  C \varepsilon M^{A_\nabla} + CM^{-1},
\end{equation}
and from \eqref{e:ub-entropy} and \eqref{e:ub-energy} it is then clear that taking $\varepsilon \leq M^{-1-\max(2A_v,A_\nabla,A_n)}$ gives \eqref{e:Z-ub}.
\end{proof}

\subsection{Proof of Proposition~\ref{prop:step}}

\begin{proof}[Proof of Proposition~\ref{prop:step}]
By Lemmas~\ref{lem:ind-lb}--\ref{lem:ind-ub}, for any $f \in C^2(\C)$
that satisfies the assumptions of Lemma~\ref{lem:ind-lb}, we have
\begin{align}
  \E_{M,W,\beta}(\ee^{\sum_j f(z_j) - M \int f \, d\mu_W})
  = \frac{Z(f)}{Z(0)} \ee^{- M \int f \, d\mu_W}
  &\leq \ee^{\frac{1}{8 \pi \beta} (f,-\Delta f) +  C \beta M \log M + A_\zeta M \log M}
  \nonumber \\
  \label{e:momest}
  &\leq \ee^{ \frac{1}{8 \pi \beta} (f,-\Delta f) + C (1+\beta) M \log M},
\end{align}
where $\E_{M,W,\beta}$ is the expectation associated to $P_{M,W,\beta}$.
Given $f$ as in the assumption of Proposition~\ref{prop:step}, we apply the above estimate with $f$ replaced by $g = s^{-1} f$, where
\begin{equation*}
  s = \frac{1}{\alpha \beta M} \|\Delta f\|_\infty + (\beta^2 M \log M)^{-\frac12} \|\nabla f\|_2.
\end{equation*}
Clearly, $|\Delta g| \leq \alpha \beta  M \leq \beta M \Delta W$ in $S_W$, so $g$ satisfies the assumption of Lemma~\ref{lem:ind-lb}. Moreover,
\begin{equation*}
\frac{1}{\beta} (g,-\Delta g)
= \frac{1}{\beta} s^{-2}(f,-\Delta f)
= \frac{1}{\beta} s^{-2}\|\nabla f\|_2^2
\leq \beta M \log M,
\end{equation*}
where in the second equality we used \eqref{e:Deltafint}.
By Markov's inequality, for any $\xi \geq 1+1/\beta$, this gives
\begin{align*}
  &P_{M,W,\beta}\left(\left|\sum_j f(z_j) -M\int f \, d\mu_W\right| \geq 3 C \xi \beta s M \log M\right)
  \\
  &\qquad \leq \ee^{ \frac{1}{8 \pi \beta s^2} (f,-\Delta f) + C (1+\beta) M\log M - 3 C \xi\beta M \log M }
  \leq \ee^{-\xi \beta M \log M}.
\end{align*}
Together with the definition of $s$, we obtain that with probability at least $1-\ee^{-\xi \beta M \log M}$,
\begin{equation*}
  \left|\frac{1}{M} \sum_j f(z_j) - \int f \, d\mu_W\right|
  \leq 3 C \xi \left(
    \frac{M \log M}{\alpha M^2} \|\Delta f\|_\infty +
    \left(\frac{M \log M}{M^2} \right)^{1/2} \|\nabla f\|_2
  \right)
  ,
\end{equation*}
as claimed. This completes the proof of Proposition~\ref{prop:step}.
\end{proof}

\section{Conditional measure}
\label{sec:condmeas}

\begin{figure}
\begin{center}
\input{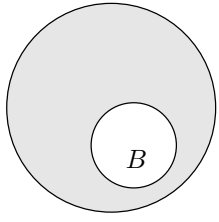}
\end{center}
\caption{We condition on the particles outside $B$. \label{fig:B0}}
\end{figure}

\subsection{Conditional measure}

Let $B \subset S_V$ be compact.
For $z \in \C^N$,
we denote by $M = M(z)$ the number of particles in $z$ inside $B$,
by $\tilde z = (\tilde z_1, \dots, \tilde z_M)$ the particles contained in $B$
(ordered in an arbitrary way),
and by $\hat z = (\hat z_1, \dots, \hat z_{N-M})$ the particles outside $B$
(again ordered arbitrarily).
Then
\begin{align*}
  H_{N,V}(z)
  &= \sum_{j \neq k} \log \frac{1}{|\tilde z_j-\tilde z_k|}
  + N \sum_{j} \pB{ V(\tilde z_j) - V_o(\tilde z_j| \hat z) }
  + E(\hat z),
\end{align*}
with
\begin{equation*}
  V_o(w|\hat z) = -\frac{2}{N} \sum_{k} \log \frac{1}{|w-\hat z_k|},
                  \qquad
  E(\hat z) = \sum_{j \neq k}  \log \frac{1}{|\hat z_j-\hat z_k|} + N \sum_{j} V(\hat z_j)
  .
\end{equation*}
Thus $E(\hat z)$ is the contribution to the energy of the particles outside $B$
and $V_o$ the interaction energy between the inside and the outside particles.
Moreover, for $\hat z \in (\C \setminus B)^{N-M}$ and $z \in \C$, we set
\begin{align}
  \label{e:Wdef}
  W(w|\hat z) &= \begin{cases}
    \frac{N}{M} (V(w) - V_o(w|\hat z)) & (w \in B),\\
    +\infty& (w \not\in B),
  \end{cases}
  \\
  \label{e:Pconddef}
  P_{N,V,\beta}(dw | \hat z) &=  P_{M(\hat z),W( \cdot | \hat z),\beta}(d w).
\end{align}

\begin{definition}
We say that a function $F: \C^N \to \R$ is \emph{symmetric} if it is invariant
under any permutation of its arguments.
For $\tilde z = (\tilde z_1, \dots, \tilde z_M) \in \C^M$
and $\hat z = (\hat z_1, \dots, \hat z_{N-M}) \in \C^{N-M}$,
we write $\tilde z \circ \hat z$ for the vector
$\tilde z \circ \hat z = (\tilde z_1, \dots, \tilde z_M, \hat z_1, \dots, \hat z_{N-M}) \in \C^N$.
\end{definition}

\begin{proposition} \label{prop:condeqmeas-def}
For any symmetric $F : \C^N \to \R$,
\begin{equation*}
  \int F(z) P_{N,V,\beta}(dz) =
  \int F(w \circ \hat z) \, P_{N,V,\beta}(dw | \hat z) \, P_{N,V,\beta}(dz).
\end{equation*}
Thus $P_{N,V,\beta}(d\tilde z|\hat z)$ is the conditional probability
of the particles inside $B$ given those outside $B$.
\end{proposition}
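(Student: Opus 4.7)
The plan is to verify the identity by a direct computation, exploiting the symmetry of $F$ and $H_{N,V}$ together with a clean factorization of the Boltzmann weight into an ``inside'' and an ``outside'' contribution.

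First, since both $F$ and $H_{N,V}$ are invariant under permutations of their $N$ arguments, I would partition $\C^N$ according to which subset $S \subset \{1,\dots,N\}$ of coordinates lies in $B$. For each $M$, all $\binom{N}{M}$ subsets of size $M$ contribute the same integral after relabelling, so
\[
\int F(z)\, P_{N,V,\beta}(dz) = \frac{1}{Z_{N,V,\beta}} \sum_{M=0}^N \binom{N}{M} \iint_{B^M \times (\C\setminus B)^{N-M}} F(\tilde z \circ \hat z)\, \ee^{-\beta H_{N,V}(\tilde z \circ \hat z)}\, m^{\otimes M}(d\tilde z)\, m^{\otimes(N-M)}(d\hat z).
\]
The canonical choice here is to let the first $M$ coordinates play the role of $\tilde z$.

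Second, the key algebraic step is the factorization
\[
H_{N,V}(\tilde z \circ \hat z) = H_{M, W(\cdot|\hat z)}(\tilde z) + E(\hat z)
\]
for $\tilde z \in B^M$ and $\hat z \in (\C\setminus B)^{N-M}$. This follows by expanding \eqref{e:Hdef}: the pair sum splits into inside--inside, cross, and outside--outside contributions, and the cross term $-2\sum_{j,k}\log\frac{1}{|\tilde z_j - \hat z_k|}$ combined with the inside part $N\sum_j V(\tilde z_j)$ of the confining potential reassembles, by the definition \eqref{e:Wdef} of $W$, into $M\sum_j W(\tilde z_j|\hat z)$; the remaining pieces form $E(\hat z)$. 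Substituting this into the above sum and pulling $\ee^{-\beta E(\hat z)}$ outside the $\tilde z$ integral, the inside integral becomes, by \eqref{e:Pconddef},
\[
Z_{M,W(\cdot|\hat z),\beta} \int F(w \circ \hat z)\, P_{N,V,\beta}(dw|\hat z).
\]

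Third, specialising to $F \equiv 1$ identifies the marginal law of $\hat z$ under $P_{N,V,\beta}$ as
\[
\frac{1}{Z_{N,V,\beta}} \sum_{M=0}^N \binom{N}{M} \ind{\hat z \in (\C\setminus B)^{N-M}} Z_{M,W(\cdot|\hat z),\beta}\, \ee^{-\beta E(\hat z)}\, m^{\otimes(N-M)}(d\hat z).
\]
Inserting this representation back into the general expression for $\int F\, dP_{N,V,\beta}$ gives exactly $\int\int F(w \circ \hat z)\, P_{N,V,\beta}(dw|\hat z)\, P_{N,V,\beta}(dz)$, which is the claimed disintegration.

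The argument is pure bookkeeping and presents no analytic obstacle; the only point requiring care is tracking the combinatorial factor $\binom{N}{M}$ produced by symmetrising over which particles lie in $B$ and verifying that the conditional partition function $Z_{M,W(\cdot|\hat z),\beta}$ cancels consistently against the outside marginal.
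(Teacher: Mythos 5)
Your proposal is correct and follows the paper's argument almost verbatim: the binomial/symmetry decomposition of $\C^N$, the factorization $H_{N,V}(\tilde z\circ\hat z)=H_{M,W(\cdot|\hat z)}(\tilde z)+E(\hat z)$, and the observation that $\bar F(z)=\int F(w\circ\hat z)\,P_{N,V,\beta}(dw|\hat z)$ depends on $z$ only through its outside coordinates are exactly the ingredients the paper uses. One small imprecision in your third step: specializing to $F\equiv1$ only confirms that the proposed $\hat z$-marginal has total mass one; the actual identification of that expression as the marginal requires applying your general decomposition to an arbitrary $F$ depending only on $\hat z$ (for instance $\bar F$ itself), which is precisely how the paper closes the loop by reinserting $\bar F$ into the same sum.
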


\begin{proof}
  By inclusion-exclusion, for any symmetric function $F: \C^N \to \R$,
  \begin{multline} \label{e:condmeas-incexl}
      \int_{\C^N} F(z) \ee^{-\beta H_N(z)} \, m^{\otimes N}(dz)
      \\
      = \sum_{M=0}^N \binom{N}{M}  \int_{(B^c)^{N-M}} \int_{B^M} F(\tilde z \circ \hat z) \ee^{-\beta H_N(\tilde z \circ \hat z)} \, m^{\otimes M}(d\tilde z) \, m^{\otimes (N-M)}(d\hat z).
  \end{multline}
  Define
  \begin{equation*}
    U(z) =
    \begin{cases}
      \frac{N}{N-M} V(z) & (z \not\in B),\\
      +\infty& (z \in B).
    \end{cases}  
  \end{equation*}
  For $\tilde z \in B^M$ and $\hat z \in (B^c)^{N-M}$, we write
  $H(\tilde z|\hat z) = H_{M, W(\cdot|\hat z)}(\tilde z)$
  and $\hat H(\hat z) = H_{N-M, U}(\hat z)$. Then
  \begin{equation*}
    H_{N}(\tilde z \circ \hat z) = 
    H(\tilde z|\hat z) + \hat H(\hat z),
  \end{equation*}
  and therefore \eqref{e:condmeas-incexl} equals
  \begin{equation*}
    \sum_{M=0}^N \binom{N}{M}  \int_{(B^c)^{N-M}} \left( \int_{B^M} F(z) \ee^{-\beta H(\tilde z | \hat z)}  \, m^{\otimes M}(d\tilde z) \right) \ee^{- \beta \hat H(\hat z)} \, m^{\otimes (N-M)}(d\hat z).
  \end{equation*}
  Now define $\bar F : \C^N \to \R$ by
  \begin{equation*}
    \bar F(z)
    = \int F(w \circ \hat z) \, P_{N,V,\beta}(dw|\hat z),
  \end{equation*}
  where as previously $\hat z$ are the particles in $z$ outside $B$.
  Note that $\bar F$ is again symmetric and that $\bar F(z) = \bar F(\zeta \circ \hat z)$ for any $\zeta \in \C^{M(z)}$. Hence
  \begin{align*}
    \int_{B^M} F(w \circ \hat z) \ee^{-\beta H(w | \hat z)}  \, m^{\otimes M}(dw) 
    & =
    \bar F(z) \int_{B^M} \ee^{-\beta H(w | \hat z)}  \, m^{\otimes M}(dw) 
      \\
    & =
     \int_{B^M} \bar F(w \circ \hat z) \ee^{-\beta H(w | \hat z)}  \, m^{\otimes M}(dw).
  \end{align*}
  By \eqref{e:condmeas-incexl} with $\bar F$ instead of $F$, it follows that
  \begin{equation*}
    \int_{\C^N} F(z) \ee^{-\beta H_N(z)} \,m^{\otimes N}(dz)
    =
    \int_{\C^N} \bar F(z) \ee^{-\beta H_N(z)} \, m^{\otimes N}(dz)
  \end{equation*}
  as claimed.
\end{proof}

Proposition~\ref{prop:step} implies the following estimate for the conditional measure.

\begin{proposition} \label{prop:step-cond}
Assume that $\Delta V \geq \alpha_0$ on $\supp \mu_V$ for an absolute constant $\alpha_0>0$.
Fix $B \subset S_V$ compact and
$\hat z \in (B^c)^{N-M}$ such that $W = W(\,\cdot\, |\hat z)$ as defined by \eqref{e:Wdef} satisfies
assumptions (ii--iv) of Proposition~\ref{prop:step}.
Then, for any bounded $f \in C^2(\C)$ with $\supp \Delta f \subset \mathrm{int}(\supp \mu_W)$,
for $\xi \geq 1+1/\beta$, we have
\begin{equation} \label{e:step-cond}
  \left|\frac{1}{N} \sum_j f(z_j) - \int f \, d\mu_V\right|
  = O(\xi) \pa{ \frac{M \log M}{\alpha_0 N^2} \|\Delta f\|_\infty + \pa{\frac{M \log M}{N^2}}^{1/2} \|\nabla f\|_2 }
  ,
\end{equation}
with probability at least $1-\ee^{-\xi \beta M \log M}$ under $P_{N,V,\beta}(\cdot |\hat z)$.
\end{proposition}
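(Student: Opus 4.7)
The plan is to apply Proposition~\ref{prop:step} to the conditional Gibbs measure, which by \eqref{e:Pconddef} is precisely $P_{M,W,\beta}$ for $W = W(\cdot\,|\,\hat z)$, and then to convert the resulting bound on $\mu_W$ into one on $\mu_V$.

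\textbf{Step 1 (verify hypothesis~(i)).} Since $\hat z \in (B^c)^{N-M}$, the function $V_o(\cdot\,|\,\hat z)$ is harmonic on the interior of $B$, so by \eqref{e:Wdef} we have $\Delta W = \tfrac{N}{M}\Delta V$ a.e.\ on $\mathrm{int}(B) \supset \mathrm{int}(S_W)$. Combined with the hypothesis $\Delta V \geq \alpha_0$ on $S_V \supset B$, this yields hypothesis~(i) of Proposition~\ref{prop:step} with $\alpha = \tfrac{N}{M}\alpha_0$; hypotheses~(ii)--(iv) are assumed.

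\textbf{Step 2 (apply Proposition~\ref{prop:step}).} With this choice of $\alpha$, Proposition~\ref{prop:step} gives, with probability at least $1-e^{-\xi\beta M\log M}$ under $P_{M,W,\beta}$,
\[
\left|\frac{1}{M}\sum_{j=1}^M f(\tilde z_j) - \int f\,d\mu_W\right| = O(\xi)\left(\frac{\log M}{\alpha_0 N}\|\Delta f\|_\infty + \sqrt{\frac{\log M}{M}}\,\|\nabla f\|_2\right),
\]
where $\tilde z_1,\dots,\tilde z_M$ denote the random particles inside $B$. Multiplying by $M/N$ produces exactly the right-hand side of~\eqref{e:step-cond}.

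\textbf{Step 3 (conversion identity).} The remaining task is the deterministic identity
\[
\frac{M}{N}\int f\,d\mu_W + \frac{1}{N}\sum_{k=1}^{N-M} f(\hat z_k) = \int f\,d\mu_V,
\]
since adding the frozen contribution $\tfrac{1}{N}\sum_k f(\hat z_k)$ of the outside particles to the estimate of Step~2 assembles the full sum $\tfrac{1}{N}\sum_{j=1}^N f(z_j)$ on one side and $\int f\,d\mu_V$ on the other. To prove the identity, use \eqref{e:flogint}--\eqref{e:Deltafint} to write $f = c + U^{-\Delta f/(2\pi)}$ for some $c\in\R$, with $\int\Delta f\,dm = 0$. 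Fubini together with the Euler--Lagrange equation~\eqref{e:EL} for $\mu_W$, valid pointwise on $\supp\Delta f \subset \mathrm{int}(\supp\mu_W)$ by Theorem~\ref{thm:eqmeasure-regularity}, yields
\[
\int f\,d\mu_W = -\frac{1}{2\pi}\int U^{\mu_W}\Delta f\,dm + c = \frac{1}{4\pi}\int W\,\Delta f\,dm + c,
\]
the constant $F_W$ dropping by $\int\Delta f\,dm=0$; the same computation applied to $\mu_V$ (using $\supp\Delta f \subset \mathrm{int}(S_V)$) gives $\int f\,d\mu_V = \tfrac{1}{4\pi}\int V\,\Delta f\,dm + c$. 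Substituting $W = \tfrac{N}{M}(V-V_o)$ on $\mathrm{int}(B)$ and evaluating $\tfrac{1}{4\pi}\int V_o\,\Delta f\,dm = \tfrac{1}{N}\sum_k(f(\hat z_k)-c)$ via the potential representation of $f$ at each $\hat z_k$ then yields the identity after a straightforward cancellation of the constants.

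\textbf{Main obstacle.} The only non-bookkeeping ingredient is the identity in Step~3, which relies crucially on the hypothesis that $\supp\Delta f$ lies in the \emph{interior} of $\supp\mu_W$ (and hence of $S_V$), so that the Euler--Lagrange equations hold in the strong pointwise sense on $\supp\Delta f$ and not merely quasi-everywhere. With the identity in hand, the probability bound and the error terms transfer verbatim from Proposition~\ref{prop:step}.
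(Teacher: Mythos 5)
Your proof is correct and follows the same route as the paper: verify assumption~(i) of Proposition~\ref{prop:step} with $\alpha = (N/M)\alpha_0$ (using harmonicity of $V_o$ inside $B$), apply that proposition, and rescale by $M/N$. The paper's own argument is extremely terse at the conversion step, simply noting $\mu_W = (N/M)\mu_V$ in the interior of $\supp\mu_W$ and asserting that the claim follows; your Step~3 makes this precise by deriving, from the logarithmic-potential representation of $f$, Fubini, and the pointwise Euler--Lagrange equations on $\supp\Delta f$, the exact identity $\tfrac{M}{N}\int f\,d\mu_W + \tfrac1N\sum_k f(\hat z_k) = \int f\,d\mu_V$, which is implicitly needed since $\mu_W$ and $\mu_V$ differ near $\partial B$.
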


\begin{proof}
  Since $W=+\infty$ outside $B$, we have $S_W \subset B$.
  Since $V_o$ is harmonic in $B$, it follows that $\Delta W = (N/M) \Delta V$ on $S_W$.
  In particular, $\Delta V \geq \alpha_0$ on $\supp \mu_V$ implies that
  assumption~(i) of Proposition~\ref{prop:step}
  holds with $\alpha = (N/M)\alpha_0$,
  and in the interior of the support of $\mu_W$, we have
  \begin{equation*}
    \mu_W(dw) = \frac{N}{M} \mu_V(dw)
    .
  \end{equation*}
  Since the left-hand side of \eqref{e:step-cond} is normalized
  by $N$ instead of $M$, and since
  \begin{equation*}
    \frac{M}{N} \frac{M \log M}{\alpha M^2} = \frac{M \log M}{\alpha_0 N^2}, \qquad
    \frac{M}{N} \pa{\frac{M\log M}{M^2}}^{1/2}
    =
    \pa{\frac{M\log M}{N^2}}^{1/2} ,
  \end{equation*}
  the claim follows from Proposition~\ref{prop:step}.
\end{proof}

\begin{remark}
  Assume $M = N^{1-2t}$ for some $0\leq t < \frac12$.
  Then Proposition~\ref{prop:step-cond} is an estimate at scales $N^{-s}$, $s<\frac14 + \frac12 t$.
  Indeed, with $f=f_s$ as in Remark~\ref{rem:step}, the error bounds are
  \begin{equation*}
    N^{-1-2t} N^{4s}, \quad  (N^{-1-2t} N^{4s})^{\frac12},
  \end{equation*}
  which go to $0$ if $s < \frac14 + \frac12 t$.
\end{remark}

\subsection{Approximate potential}
\label{sec:Wdef}

As a preliminary step to the inductive verification of the assumptions of Proposition~\ref{prop:step-cond},
we now write the potential $W$ defined in \eqref{e:Wdef} in a suitable form.

Fix any $\psi \in C_c^\infty$ with $\psi = 0$ on $B$.
(In Section~\ref{sec:R}, $\psi$ will be chosen as a bump function increasing from $0$
to $1$ in a small annulus outside $B$.)
Then
\begin{equation*}
  V_o(z|\hat z) =
  -\frac{2}{N} \sum_{k} \ind{B^c}(\hat z_k) (1-\psi(\hat z_k)) \log \frac{1}{|z-\hat z_k|}
  -\frac{2}{N} \sum_{k} \psi(\hat z_k) \log \frac{1}{|z-\hat z_k|}.
\end{equation*}
Define
\begin{align} \label{e:nudef}
  \nu(dz | \hat z) &= \frac{1}{M} \sum_k \ind{B^c}(\hat z_k) (1-\psi(\hat z_k)) \delta_{\hat z_k}(dz)
  ,
  \\
  \label{e:Rdef}
  R(z|\hat z) &= 
  \frac{1}{M} \sum_{k} \psi(\hat z_k) \log \frac{1}{|z-\hat z_k|}
  - \frac{N}{M} \int \psi(w) \log \frac{1}{|z-w|} \, \mu_V(dw)
  ,
  \\
  \label{e:Rprimedef}
  R'(z) &=  \frac{N}{M} \int \ind{B^c}(w)(1-\psi(w)) \log \frac{1}{|z-w|} \, \mu_V(dw)
  .
\end{align}
{Thus, with $\psi$ chosen as in Section~\ref{sec:R},
$U^\nu(\cdot |\hat z)$ is essentially the potential of the charges $\hat z$ near $B$,
$R(\cdot)$ is the potential of the equilibrium measure $\mu_V$ near $B$,
and $R(\cdot |\hat z)$ is the potential generated by the charges and the equilibrium measure far away from $B$.
Then
\begin{equation} \label{e:Vorewritten}
  V_o(z | \hat z) = - 2\int_{B^c} \log\frac{1}{|z-w|} \,\mu_V(dw) - \frac{M}{N} \pB{ 2U^{\nu}(z|\hat z)  + 2R(z|\hat z) - 2R'(z)},
\end{equation}
and therefore, with the definition \eqref{e:Wdef},
\begin{equation} \label{e:Wredef}
  W(z|\hat z) = \begin{cases}
    \frac{N}{M} \left(V(z) + 2 \int_{B^c} \log\frac{1}{|z-w|} \,\mu_V(dw) \right) + 2U^{\nu}(z|\hat z) + 2R(z|\hat z) - 2 R'(z) & (z \in B)
    \\
    +\infty & (z \not\in B).
  \end{cases}
\end{equation}
Finally, we set
\begin{equation} \label{e:taudef}
  \tau = \tau(\hat z) =  \frac{N}{M} \mu_V(B) .
\end{equation}
The main contribution to $W(z|\hat z)$ will be given by
\begin{equation} \label{e:barWdef}
  \bar W (z) = \begin{cases}
    \frac{1}{\mu_V(B)}  \left(V(z) + 2 \int_{B^c} \log\frac{1}{|z-w|} \,\mu_V(dw) \right) & (z \in B)\\
    +\infty & (z \not\in B).
  \end{cases}
\end{equation}
By Proposition~\ref{prop:condeqmeas}, the equilibrium measure of the potential $\bar W$
is explicitly given by a rescaling of the restriction of $\mu_V$ to $B$.
Our final goal is to show that the contribution of $W(z|\hat z)-\tau \bar W(z)$ can
be controlled, for most $\hat z$ distributed under the original measure.

\section{Proof of Theorem~\ref{thm:locallaw}}
\label{sec:pf}

Fix $\delta \in (0,\frac12)$. We will show that Theorem~\ref{thm:locallaw} holds for $s \in (0,\frac12-\delta]$,
with the implicit constant depending on $s$ only through the choice of $\delta$.
Our strategy is to apply Proposition~\ref{prop:step-cond} inductively, with $M$ approximately given by $\bar M_j=N^{1-2s_j}$
for a deterministic sequence of scales $0=s_0 < s_1 < \cdots < s_n = \frac12 - \delta$.
More precisely, our goal is to show that if the conclusions of Proposition~\ref{prop:step-cond}
hold for $B = B(z_0,\frac12 N^{-s_j})$ for some $j$, then we can verify them for $j+1$ with
\begin{equation} \label{e:sjplus1}
  s_{j+1} = \frac14 + \frac12 s_j - \varepsilon
  .
\end{equation}
The solution to this recursion is
\begin{equation} \label{e:sjsln}
  s_j = \pa{\frac14-\varepsilon} \sum_{k=0}^{j-1} \frac{1}{2^k}
  .
\end{equation}
We may choose $n < \infty$ and $\varepsilon > 0$ so that $s_n = \frac12 - \delta$.
Throughout the following, we assume that $n$ and $\varepsilon>0$ have been fixed this way.

From now on, without loss of generality, we assume $z_0=0$ in the statement of Theorem~\ref{thm:locallaw},
by replacing $V(z)$ by $V(z-z_0)$. 
In particular, then $0 \in \mathrm{int}(\supp \mu_V)$.
Moreover, we write $B_s$ for the disk centered at $0$ with radius $N^{-s}$ and $B_s^\circ \subset B_s$ for the disk
centered at $0$ with radius $\frac12 N^{-s}$.

The following proposition is the core of the proof of Theorem~\ref{thm:locallaw}.
In this statement and throughout the remainder of this section,
we say that an event holds \emph{with $t$-high probability}, abbreviated \emph{$t$-HP},
if it holds with probability at least $1-\ee^{-(1+\beta)N^{1-2t}+O(\log N)}$,
and we will use tacitly that intersections of $N^{O(1)}$ many events that each hold with $t$-HP
also hold with $t$-HP.

\medskip
\noindent \textbf{Assumption (A$_{\mathbf{t}}$).}
For any bounded $f \in C^2(\C)$ with $\supp \Delta f \subset B_t^\circ \cap S_V$,
with $t$-HP,
we have
\begin{equation} \label{e:step-prev}
  \left|\frac{1}{N} \sum_j f(z_j) - \int f \, d\mu_V\right|
  = O\pbb{\pbb{1+\frac{1}{\beta}}\log N} \pbb{ N^{-1-2t} \|\Delta f\|_\infty + N^{-\frac12-t} \|\nabla f\|_2
  }
  .
\end{equation}

\begin{proposition} \label{prop:ind}
For arbitrary $\varepsilon>0$, (A$_t$) implies (A$_s$) for any $0 {\cob \leq} t \leq s \leq \frac14 + \frac12 t - \varepsilon$,
with the implicit constant in \eqref{e:step-prev} depending only on $\varepsilon$.
\end{proposition}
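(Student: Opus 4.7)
The plan is to apply Proposition~\ref{prop:step-cond} to the conditional law of the particles inside $B = B_s$ given the positions of the external charges $\hat z$, using (A$_t$) to verify its hypotheses on an event of sufficiently high probability. Since $B_s \subset B_t^\circ$ when $s > t$ (the case $s = t$ being tautological), test functions supported in or near $B_s$ are admissible in the inductive hypothesis; this is the mechanism by which scale-$t$ control transfers to the conditional analysis on the smaller scale $N^{-s}$.

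I would first apply (A$_t$) to a smooth approximation of $\ind{B_s}$ supported in $B_t^\circ$ to show that the number $M = M(\hat z)$ of particles in $B$ concentrates around $N\mu_V(B) \sim N^{1-2s}$ with additive fluctuations of size $N^{1-2s-\delta'}$ for some $\delta' > 0$ depending on $\varepsilon$, so that $\tau = N\mu_V(B)/M = 1 + O(N^{-\delta'})$ in the decomposition \eqref{e:Wredef} of $W$ as $\tau \bar W + 2U^\nu + 2R - 2R'$. Choosing the cutoff $\psi$ in \eqref{e:nudef}--\eqref{e:Rprimedef} appropriately (so that the net positive measure $\nu$ of far-away charges is effectively small after the equilibrium subtraction provided by $R'$), I would then use (A$_t$) applied to the family of test functions $w \mapsto \psi(w)\log\frac{1}{|z-w|}$ (supported in $B_t^\circ$, with $|z-w|$ bounded below on the support for $z \in B_s^\circ$) to obtain uniform-in-$z$ bounds on $R(\cdot|\hat z)$ and $\nabla R(\cdot|\hat z)$ on $B_s^\circ$, on an event of $s$-HP. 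A net argument of polynomial size in $N$, combined with the Lipschitz estimate on $R$ itself, upgrades pointwise control to uniform control.

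After rescaling $B$ to the unit disk, the hypotheses of Propositions~\ref{prop:innerradius} and \ref{prop:bddensity} are verified with perturbation parameters of order $N^{-\delta''}$, yielding $\kappa = o(1)$ and hence $\supp \mu_W \supset B_s^\circ$ (the region where $\Delta f$ is assumed to be supported), together with a uniform bound on the boundary charge of $\mu_W$. The remaining assumptions of Proposition~\ref{prop:step} are then checked directly: (i) $\Delta W = (N/M)\Delta V \geq (N/M)\alpha_0$ on $B$, since $V_o$ is harmonic inside $B$; (ii) follows from the boundary charge bound and the fact that $\supp v \subset \partial B$ has circumference $O(N^{-s})$; (iii) from the gradient bounds on $U^\nu$ and $R$, with the exceptional set $F$ consisting of small disks around the external charges contributing to $\nu$; and (iv) from the Euler--Lagrange inequality $\zeta \geq 0$ combined with the confinement $W = +\infty$ outside the bounded region $B$. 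Proposition~\ref{prop:step-cond} applied with $\xi = 1 + \beta^{-1}$ then yields the estimate \eqref{e:step-prev} at scale $s$ with conditional probability at least $1 - \ee^{-(1+\beta) N^{1-2s+o(1)}}$; combining with the $s$-HP event for $\hat z$ via Proposition~\ref{prop:condeqmeas-def} gives (A$_s$).

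The main obstacle, and the technical heart of the inductive step, is the uniform-in-$z$ control of the random perturbation $R(\cdot|\hat z)$ on $B_s^\circ$ via (A$_t$): the $\|\nabla\cdot\|_2$ and $\|\Delta\cdot\|_\infty$ norms of the test functions $\psi(w)\log\frac{1}{|z-w|}$ must scale compatibly with (A$_t$) while keeping the support inside $B_t^\circ$. The recursion $s \leq \frac14 + \frac12 t - \varepsilon$ provides exactly the slack needed so that the resulting error produces a perturbation small enough for Proposition~\ref{prop:innerradius} to yield $\kappa < \frac12$ after rescaling to the unit disk.
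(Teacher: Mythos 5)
Your proposal follows essentially the same route as the paper's proof: condition on the external charges, use (A$_t$) to establish the analogues of Lemmas~\ref{lem:taunubd}--\ref{lem:Rprimebd} (concentration of $M$, smallness of $\nu$ and of the gradients of $R$, $R'$), invoke Propositions~\ref{prop:innerradius} and \ref{prop:bddensity} to ensure $S_W \supset B_s^\circ$ with a bounded boundary charge, verify assumptions (i)--(iv) of Proposition~\ref{prop:step}, and conclude via Proposition~\ref{prop:step-cond} and the union-bound decomposition \eqref{e:Pgoodbad}. Your only blemish is a mischaracterization of $\nu$ as ``far-away charges'' (it is the measure of charges in the thin annulus near $\partial B$, controlled directly by (A$_t$), not via any cancellation against $R'$), but this does not affect the logic.
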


The proof of the Proposition~\ref{prop:ind} is given in Section~\ref{sec:R}.
Assuming the proposition, the proof of Theorem~\ref{thm:locallaw} is completed easily,
as follows.

\begin{proof}[Proof of Theorem~\ref{thm:locallaw}]
As discussed previously, without loss of generality we assume that $z_0=0$.
As discussed around \eqref{e:sjplus1}--\eqref{e:sjsln},
given $\delta>0$ as above, we choose $\varepsilon>0$ sufficiently small that there exists $n<\infty$
such that $s_n = \frac12 -\delta$, with the sequence $(s_j)$ defined by $s_0=0$ and \eqref{e:sjplus1}.

We will apply Proposition~\ref{prop:ind} inductively.
For this, we first verify (A$_t$) for $t={\cob s_0}$.
This follows from Proposition~\ref{prop:step} applied to $W=V$ and $M=N$, for which we verify the assumptions now.
Assumptions~(i)--(iii) of Proposition~\ref{prop:step} follow directly from the assumptions of the theorem,
with $A_u = A_v= A_\nabla = A_n = 0$ and $F=\emptyset$, using that $v=0$ since $V$ is $C^{1,1}$ on a neighbourhood of $S_V$.
Therefore it suffices to verify that (iv) holds.
By Theorem~\ref{thm:eqmeasure}, or more precisely by the Euler--Lagrange equations \eqref{e:EL} and since $\mu_V$ has compact support,
we have
\begin{equation*}
  \zeta \geq 0,\qquad
  \zeta(z) \sim 2\log \frac{1}{|z|} + V(z), \quad |z|\to\infty.
\end{equation*}
By assumption \eqref{e:Vgrowth},
we thus have $\zeta(z) \geq \varepsilon \log |z|$ for sufficiently large $|z|$,
and therefore with $N$ large enough (depending on $\beta$ and $\varepsilon$ in \eqref{e:Vgrowth})
assumption~(iv) of Proposition~\ref{prop:step} follows with $A_\zeta = 0$.
Thus the assumptions of Proposition~\ref{prop:step} are verified.
It follows that (A$_{\cob s_0}$) holds.

By induction, Proposition~\ref{prop:ind} implies that (A$_t$) holds for all $t=s_j$ for all $j \leq n$,
in particular for $j=n$, and the claim follows immediately.
\end{proof}

The remainder of this section is devoted to the proof of Proposition~\ref{prop:ind}.

\subsection{Proof of Proposition~\ref{prop:ind}}
\label{sec:R}

To prove Proposition~\ref{prop:ind}, we first establish a sequence of lemmas.
We abbreviate $a=N^{-c\varepsilon}$ for a small constant $c>0$ chosen in the proofs of the lemmas,
and we fix $\psi \in C_c^\infty$ such that $\psi(z) = 0$ for $\dist(z,B) \leq aN^{-s}$
and $\psi(z) = 1$ for $\dist(z,B) \geq 2aN^{-s}$.
The following three lemmas apply to the definitions of $\tau$, $\nu(dz) = \nu(dz|\hat z)$, $\hat R(z) = R(z|\hat z)$, and $R'(z)$
from \eqref{e:taudef}, \eqref{e:nudef}, \eqref{e:Rdef}, and \eqref{e:Rprimedef}, respectively,
with this choice of $\psi$.

Sometimes we will omit the scale index $s$ or $t$, in which case it is implicitly understood to be $s$; for example,
we abbreviate $B=B_s$ and $M = M_s$.

\begin{lemma} \label{lem:taunubd}
Assume (A$_t$). Then, with $t$-HP, we have
\begin{equation} \label{e:Mbd}
\tau = 1+O(N^{-c\varepsilon}),
\qquad
\nu(\C) = O(N^{-c\varepsilon}).
\end{equation}
\end{lemma}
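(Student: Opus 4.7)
The plan is to deduce both bounds from (A$_t$) by testing it against carefully chosen smooth functions at a scale $\delta>0$ to be selected.

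For $\tau$, I would sandwich $\ind{B}$ between smooth functions $f_-\leq\ind{B}\leq f_+$, for instance $f_\pm=\ind{B_\pm}*\phi_\delta$, where $B_\pm$ are concentric disks of radius $N^{-s}\mp\delta$ and $\phi_\delta$ is a radial mollifier at scale $\delta$. A routine computation gives $\|\Delta f_\pm\|_\infty=O(\delta^{-2})$, and since $\nabla f_\pm$ is supported in a strip of area $O(N^{-s}\delta)$ with $\|\nabla f_\pm\|_\infty=O(\delta^{-1})$, also $\|\nabla f_\pm\|_2=O(\sqrt{N^{-s}/\delta})$. For $\delta$ small and $s>t$, $\supp\Delta f_\pm\subset B_t^\circ\cap S_V$, so (A$_t$) applies. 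Combining the two sides and dividing by $\mu_V(B)\asymp N^{-2s}$ (which follows from $\alpha_0\leq\Delta V\leq\alpha_0^{-1}$ near $0\in\mathrm{int}(\supp\mu_V)$) gives, with $t$-HP,
\begin{equation*}
  |\tau-1| = O(N^s\delta) + O\pb{(1+1/\beta)\log N}\pb{ N^{2s-1-2t}\delta^{-2} + N^{3s/2-1/2-t}\delta^{-1/2} }.
\end{equation*}
Taking $\delta=N^{-s-c_0\varepsilon}$ with $c_0\in(0,2)$ small converts the three terms into $N^{-c_0\varepsilon}$, $N^{4s-1-2t+2c_0\varepsilon}$, $N^{2s-1/2-t+c_0\varepsilon/2}$, all strictly negative by the hypothesis $s\leq\tfrac14+\tfrac12 t-\varepsilon$.

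For $\nu(\C)$, observe that $(1-\psi)\ind{B^c}$ is supported in the annulus $A=\{w\in B^c:\dist(w,B)\leq 2aN^{-s}\}$, whence $M\nu(\C)\leq|\{j:z_j\in A\}|$. I would apply (A$_t$) to a smooth bump $g$ with $\ind{A}\leq g\leq 1$ supported in an annulus of width $O(aN^{-s})$; the same area-of-support argument gives $\|\Delta g\|_\infty=O(a^{-2}N^{2s})$ and $\|\nabla g\|_2=O(a^{-1/2})$. Since $\int g\,d\mu_V=O(aN^{-2s})$ and $M\asymp N^{1-2s}$ from the first part, this yields, with $t$-HP,
\begin{equation*}
  \nu(\C) = O(a) + O\pb{(1+1/\beta)\log N}\pb{ a^{-2}N^{4s-1-2t} + a^{-1/2}N^{2s-1/2-t} }.
\end{equation*}
With $a=N^{-c_0\varepsilon}$ and $c_0\in(0,2)$, the same bookkeeping delivers $\nu(\C)=O(N^{-c\varepsilon})$ with $t$-HP.

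The main obstacle — and the reason the recursion \eqref{e:sjplus1} takes its precise form — is that a single scale parameter must simultaneously balance three competing errors. The constraint $s\leq\tfrac14+\tfrac12 t-\varepsilon$ is exactly what forces the worst two exponents $4s-1-2t$ and $2s-\tfrac12-t$ to be at most $-4\varepsilon$ and $-2\varepsilon$ respectively, leaving just enough margin to absorb the $N^{c_0\varepsilon}$ loss from $\delta^{-2}$ or $a^{-2}$. The absence of any slack here dictates the maximum possible gain per induction step, giving the rate \eqref{e:sjsln}.
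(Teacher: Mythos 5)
Your proposal is correct and follows essentially the same route as the paper: sandwich $\mathbf{1}_B$ between two smooth bump functions supported on slightly shrunken/enlarged disks, apply (A$_t$) to each, and then balance the three error terms (overhang of the mollification annulus, the $\|\Delta f\|_\infty$ term, and the $\|\nabla f\|_2$ term) by a near-optimal choice of the transition width; the paper uses abstract cutoffs $\chi_\pm$ with derivative bounds $\|\nabla^k\chi_\pm\|_\infty=O(N^{ks}\eta^{-k})$ and picks $\eta=N^{-\varepsilon}$ (equivalently your $\delta=N^{-s-\varepsilon}$), and handles $\nu(\C)$ by an analogous bump on the annulus $A$, exactly as you spell out. (Minor sign slip: you want $B_\pm$ of radius $N^{-s}\pm\delta$, not $N^{-s}\mp\delta$, so that the mollified $f_-$ is supported inside $B$ and $f_+$ equals $1$ on $B$.)
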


\begin{proof}
  We show the first bound of \eqref{e:Mbd}; the second bound is analogous.
  Fix $\eta>0$ and define
  \begin{equation*}
    B_{-} = \{ \dist(z, B^c) \geq \eta N^{-s} \}, \qquad
    B_{+} = \{ \dist(z, B) \leq \eta N^{-s} \}.
  \end{equation*}
  Let $\chi_\pm$ be smooth cutoff functions with
  \begin{equation*}
    \chi_+|_{B} = 1, \quad \chi_+|_{B_+^c} = 0, \qquad
    \chi_-|_{B^c} = 0, \quad \chi_-|_{B_-} = 1,
  \end{equation*}
  obeying $\|\nabla^k \chi_\pm\|_\infty = O(N^{k s}/\eta^k)$ for $k=0,1,2$.
  In particular, we then have
  \begin{equation*}
    \|\Delta \chi_\pm\|_\infty = O(N^{2s}/\eta^2), \quad (\chi_{\pm},-\Delta \chi_{\pm}) = O(\eta N^{-2s} N^{2s}/\eta^2) = O(1/\eta),
  \end{equation*}
  where in the second bound we used that $\Delta \chi_\pm = 0$ except on an annulus of area $O(\eta N^{-2s})$.
  Using that $s \leq \frac14 + \frac12 t -\varepsilon$, it follows that $-1-2t \leq -4s -4\varepsilon$ and thus
  \begin{equation*}
    N^{-1-2t} \|\Delta \chi_\pm\|_\infty = O(N^{-2s-4\varepsilon}\eta^{-2}), \quad
    N^{-1-2t} (\chi_\pm,-\Delta\chi_\pm) = O(N^{-4s-4\varepsilon}\eta^{-1}).
  \end{equation*}
  By \eqref{e:step-prev} therefore, with $t$-HP,
  \begin{align*}
    \frac{1}{N} \sum_j \chi_\pm(z_j)
    &= \int \chi_\pm \, d\mu_V + O(N^{-2s})(N^{-3\varepsilon}\eta^{-2}+N^{-\varepsilon}\eta^{-1/2})
    \nonumber\\
    &= \mu_V(B)\p{ 1+O(\eta + N^{-3\varepsilon}\eta^{-2} + N^{-\varepsilon}\eta^{-1/2})}
    = \mu_V(B)\p{1+ O(N^{-c\varepsilon})}
    ,
  \end{align*}
  where in the second equality we used
  \begin{equation*}
    \int \chi_\pm \, d\mu_V = \mu_V(B) + O(\eta N^{-2s}), \quad \mu(B) \asymp N^{-2s},
  \end{equation*}
  and in the last inequality we set $\eta = N^{-\varepsilon}$.

  Since $\sum_j \chi_-(z_j) \leq M \leq \sum_j \chi_+(z_j)$ we thus have $M = N \mu_V(B)(1+O(N^{-c\varepsilon}))$,
  which shows the first bound of \eqref{e:Mbd}.
  An analogous argument shows the second bound.
\end{proof}

\begin{lemma} \label{lem:Rhatbd}
Assume (A$_t$). Then, with $t$-HP, we have
\begin{equation*}
  N^{-s}\|\nabla \hat R\|_{L^\infty(B)}
  = O(N^{-c\varepsilon})
  .
\end{equation*}
\end{lemma}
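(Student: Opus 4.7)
The plan is to apply the inductive hypothesis (A$_t$) to the test function $f_z(w):=\psi(w)(z-w)/|z-w|^2$ (a vector field, each scalar coordinate treated separately) for each fixed $z\in B$. Because $\psi\equiv 0$ on $B$, the $k$-sum in \eqref{e:Rdef} is unchanged if extended over all $N$ particles, and differentiating in $z$ produces the identity
\[
\nabla \hat R(z) \,=\, -\frac{N}{M}\qbb{\frac{1}{N}\sum_{j=1}^N f_z(z_j) \,-\, \int f_z\,d\mu_V}.
\]
This reduces the lemma to controlling the empirical-vs-equilibrium difference for $f_z$, multiplied by $N/M\asymp N^{2s}$ from Lemma~\ref{lem:taunubd}.

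To apply (A$_t$) one needs $\supp\Delta f_z\subset B_t^\circ\cap S_V$. The vector field $(z-w)/|z-w|^2 = -\nabla_w\log|z-w|$ is harmonic in $w$ for $w\neq z$, and $\psi(z)=0$ kills the distributional singularity at $w=z$, so
\[
\Delta_w f_z(w) \,=\, \Delta\psi(w)\,\frac{z-w}{|z-w|^2} + 2\nabla\psi(w)\cdot\nabla_w\pbb{\frac{z-w}{|z-w|^2}},
\]
which is supported in $\supp\nabla\psi\subset\{aN^{-s}\leq\dist(w,B)\leq 2aN^{-s}\}$. The induction step enforces a uniform positive gap $s-t\geq \frac14-\frac12 t-\varepsilon>0$ (since $t\leq \frac12-2\varepsilon$), so for $N$ large this thin annulus sits inside $B_t^\circ$, and inside $S_V$ because $0\in\mathrm{int}(\supp\mu_V)$. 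Using $|\nabla^k\psi|\lesssim(aN^{-s})^{-k}$ and $|z-w|\geq aN^{-s}$ on $\supp\psi$, a direct computation yields $\|\Delta f_z\|_\infty\lesssim a^{-3}N^{3s}$ and $\|\nabla f_z\|_2\lesssim a^{-3/2}N^s$, where the latter is dominated by the annular piece $\|(\nabla\psi)\, K(z-\cdot)\|_2^2\lesssim aN^{-2s}\cdot(a^{-2}N^{2s})^2$ (with the bulk piece $\int_{|u|\geq aN^{-s}}|u|^{-4}\,dm(u)\lesssim a^{-2}N^{2s}$ being smaller).

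Plugging these norms into (A$_t$) and using the consequences $3s-1-2t\leq -\frac14-\frac12 t-3\varepsilon$ and $s-\frac12-t\leq -\frac14-\frac12 t-\varepsilon$ of $s\leq \frac14+\frac12 t-\varepsilon$, then multiplying by $N/M\asymp N^{2s}$, yields the pointwise bound $|\nabla\hat R(z)|\lesssim N^{s-c\varepsilon}$ on the $t$-HP event, provided $c$ is chosen small enough to absorb the $a^{-3},a^{-3/2}$ factors (the binding constraint $c\leq 4/5$ comes from the $\|\nabla f_z\|_2$ term). Upgrading to the $L^\infty(B)$ bound is done by a standard net argument: apply the pointwise estimate at each point of a polynomial-density net in $B$ via a union bound, and extend to all of $B$ using the deterministic Lipschitz bound $\|\nabla^2\hat R\|_{L^\infty(B)}\lesssim a^{-2}N^{4s}$ available on the same event (again from $|z-w|\geq aN^{-s}$ on $\supp\psi$). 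The main difficulty is the support verification in the second step: showing $\supp\Delta f_z\subset B_t^\circ$ is precisely what forces the gap $s-t$ to remain uniformly positive, and therefore dictates the admissible step size $s\leq \frac14+\frac12 t-\varepsilon$ in the whole induction; balancing $c$ against both (A$_t$)-error terms is then a short but careful calculation.
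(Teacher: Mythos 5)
Your proposal is correct and follows essentially the same route as the paper: the paper's test function $N^{-s}\nabla(\psi(w)\log\frac{1}{|z-w|})$ is, up to sign and a harmless $N^{-s}$ normalization, your $f_z(w)=\psi(w)(z-w)/|z-w|^2$, and the localization of $\Delta f_z$ to the thin annulus, the norm estimates $\|\Delta f_z\|_\infty\lesssim a^{-3}N^{3s}$ and $\|\nabla f_z\|_2\lesssim a^{-3/2}N^{s}$, and the net-plus-deterministic-Lipschitz argument upgrading the pointwise bound to $L^\infty(B)$ are all as in the paper. One small expository nuance (not affecting the mathematics): the admissible step $s\leq\frac14+\frac12 t-\varepsilon$ is dictated by the size of the error terms in (A$_t$) after multiplying by $N/M\asymp N^{2s}$, not by the inclusion $\supp\Delta f_z\subset B_t^\circ$, which merely requires the gap $s-t$ to be positive for $N$ large --- a lower bound on $s$ that is automatically satisfied along the recursion.
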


\begin{proof}
  First, we show that it suffices to prove that for any fixed $z \in B$ 
  we have
  \begin{equation} \label{e:nablaRbdz}
    |\nabla \hat R(z|\hat z)| = O(N^{s-c\varepsilon}),
  \end{equation}
  with $t$-HP.
  By a union bound, this bound then indeed holds for all $z \in B \cap N^{-3}\Z^2$ with $t$-HP.
  For $|z-w| \geq N^{-\frac12}$, 
  we have
  \begin{equation*}
    \absa{ \nabla^{2} \log \frac{1}{|z-w|} } = O(N).
  \end{equation*}
  Since $\hat z$ and the support of $\psi$ are separated by distance
  at least $N^{-\frac12}$ we obtain, using \eqref{e:Mbd}, that with $t$-HP,
  \begin{equation*}
    \|\nabla^{2} R(\,\cdot\,|\hat z)\|_{L^\infty(B)} \leq \frac{N}{M} O(N) = O(N^2).
  \end{equation*}
  It follows that, with $t$-HP,
  \begin{align*}
    \|\nabla R(\,\cdot\,|\hat z)\|_{L^\infty(B)}
    &\leq \max_{z \in B \cap N^{-4}\Z^2} |\nabla R(z|\hat z)| + N^{-3} \|\nabla^{2} R(\,\cdot\, |\hat z)\|_{L^\infty(B)}
    \nonumber\\
    &\leq O(N^{s-c\varepsilon}) + O(N^{-1})
    = O(N^{s-c\varepsilon}),
  \end{align*}
  as needed.

  For the remainder of the proof, we fix some $z \in B$, and show that \eqref{e:nablaRbdz}
  does indeed hold with $t$-HP.
  For $w \in \C$, we set $f(w) = N^{-s} \nabla(\psi(w) \log \frac{1}{|z-w|})$
  and define the annulus $A = \{ w : \dist(w,B) \in [aN^{-s},2aN^{-s}] \}$,
  with $a$ as defined previously.
  Then the following properties hold:
  \begin{equation*}
    A \subset B_t^\circ, \quad m(A)= O(aN^{-2s}), 
  \end{equation*}
  and
  \begin{equation*}  
    \Delta f=0 \text{ on $A^c$},
    \quad
    \sup_A |f| = O(|\log a|/a),
    \quad
    \sup_A |\Delta f| = O({N^{2s}}/a^3). 
  \end{equation*}
  Here and below all estimates for $f$ are component-wise (in the obvious way).
  For $s \leq \frac14 + \frac12 t -\varepsilon$, it follows that $-1-2t \leq -4s -4\varepsilon$ and thus
  \begin{align*}
    N^{-1-2t} (f,-\Delta f) &= N^{-1-2t} O(N^{-2s} N^{2s}|\log a|/a^{2}) 
    = O(N^{-4s-4\varepsilon}|\log a|/a^2)
    \\
    N^{-1-2t} \|\Delta f\|_\infty &= N^{-1-2t} O(N^{2s}/a^3) 
    = O(N^{-2s-4\varepsilon}/a^3).
  \end{align*}
  Since $f$ satisfies the assumptions of (A$_t$), with $a=N^{c\varepsilon}$ for some small $c>0$,
  we obtain from \eqref{e:step-prev} that, with $t$-HP,
  \begin{equation*}
    \frac{1}{N} \sum_j f(z_j) = \int f(w) \mu_V(dw) + O(N^{-2s-c\varepsilon}).
  \end{equation*}
  Since $N/M = O(N^{2s})$ with $t$-HP, by Lemma~\ref{lem:taunubd},
  we therefore find that the claim
  $N^{-s} \nabla \hat R(z) = \frac{1}{M} \sum_j f(z_j) - \frac{N}{M}\int f \, d\mu_V = O(N^{-c\varepsilon})$
  holds, with $t$-HP.
\end{proof}

\begin{lemma} \label{lem:Rprimebd}
  Assume (A$_t$). Then, with $t$-HP, we have
  \begin{equation*}
    N^{-s}\|\nabla R'\|_{L^\infty(B)} = O(N^{-c\varepsilon}).
  \end{equation*}
\end{lemma}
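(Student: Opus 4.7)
The plan is to exploit the fact that, unlike $\hat R$ and $\nu$, the function $R'$ depends on the configuration $\hat z$ only through the scalar prefactor $N/M$: writing $R'(z) = \tfrac{N}{M}\tilde R'(z)$ with
\[
  \tilde R'(z) = \int_A \log\frac{1}{|z-w|}\,\mu_V(dw),
\]
where $A = B^c \cap \{w : \dist(w,B) \leq 2aN^{-s}\}$ is a thin annulus of width $O(aN^{-s})$ outside $\partial B$, the integrand is entirely deterministic. Since Lemma~\ref{lem:taunubd} already gives $N/M = (1+O(N^{-c\varepsilon}))/\mu_V(B) = O(N^{2s})$ with $t$-HP (using $\mu_V(B) \asymp N^{-2s}$), it suffices to establish the \emph{deterministic} estimate $\|\nabla \tilde R'\|_{L^\infty(B)} = O(aN^{-s}\log N)$; combined with the bound on $N/M$ this yields $N^{-s}\|\nabla R'\|_{L^\infty(B)} = O(a\log N)$, which is $O(N^{-c'\varepsilon})$ for some $c' < c$ after absorbing the logarithm into a slight weakening of the exponent.

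Differentiating under the integral sign,
\[
  \nabla \tilde R'(z) = -\int_A \frac{z-w}{|z-w|^2}\,\mu_V(dw),
\]
and since the hypothesis $\Delta V \leq \alpha_0^{-1}$ on $S_V$ together with $\mu_V = \tfrac{1}{4\pi}\Delta V\,dm$ on $\mathrm{int}(S_V)$ makes the density of $\mu_V$ uniformly bounded on $A \cap S_V$ (and $\mu_V$ vanishes on $A \setminus S_V$), the task reduces to bounding
\[
  I(z) := \int_A \frac{dm(w)}{|z-w|}, \qquad z \in B,
\]
uniformly in $z$, with target $I(z) = O(aN^{-s}\log N)$.

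The main obstacle is controlling $I(z)$ when $z$ lies close to $\partial B$, since then the singularity of $1/|z-w|$ at $w=z$ is approached by points of $A$. To handle this I would pass to polar coordinates $w - z = re^{\ii\theta}$ centered at $z$. Using the curvature of $\partial B$, a short computation shows that $A$ is contained in a region $\{r \leq r_{\max}(\theta)\}$ with $r_{\max}(\theta) = O(aN^{-s}/|\cos\theta|)$ whenever $|\cos\theta| \geq \sqrt{aN^{-s}}$ and $r_{\max}(\theta) = O(\sqrt{aN^{-s}})$ in the complementary window of angular size $\sqrt{aN^{-s}}$ around the tangent directions $\theta = \pm\pi/2$. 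The polar Jacobian $r\,dr\,d\theta$ cancels the $1/r$ singularity, and one obtains
\[
  I(z) = \int r_{\max}(\theta)\,d\theta = O(aN^{-s}\log N),
\]
the logarithm arising from $\int \sec\theta\,d\theta$ truncated at $|\theta \mp \pi/2| = \sqrt{aN^{-s}}$. Inserting this into the two displays above completes the proof.
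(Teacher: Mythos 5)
Your argument is correct in its essentials and reaches the same reduction as the paper, diverging only at the final deterministic estimate and doing so with a genuinely different (and sharper) calculation. You set up everything the same way: factor $R' = (N/M)\tilde R'$, invoke Lemma~\ref{lem:taunubd} for $N/M = O(N^{2s})$ with $t$-HP, differentiate under the integral sign, and use $\Delta V \leq \alpha_0^{-1}$ to bound the density of $\mu_V$, reducing the problem to the deterministic bound on $I(z) = \int_A dm(w)/|z-w|$ over the annulus $A$ of width $O(aN^{-s})$ hugging $\partial B$. The paper then uses the simplest possible two-scale split:
\begin{equation*}
  I(z) \leq \int_{A \cap B_\eta(z)} \frac{m(dw)}{|z-w|} + \frac{1}{\eta}\, m(A) = O(\eta) + O(aN^{-2s}/\eta),
\end{equation*}
optimized at $\eta = \sqrt{a}\,N^{-s}$ to give $I(z) = O(\sqrt{a}\,N^{-s})$. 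You instead pass to polar coordinates at $z$ and exploit the thinness of $A$: the radial slice of $A$ has length $O(aN^{-s}/|\cos\theta|)$ away from the tangent directions and $O(\sqrt{a}\,N^{-s})$ in the angular window of width $O(\sqrt{a})$ around $\theta = \pm\pi/2$, and the polar Jacobian kills the singularity, yielding $I(z) = O(aN^{-s}\log(1/a))$. Both bounds give the claimed $N^{-s}\|\nabla R'\|_\infty = O(N^{-c'\varepsilon})$; the paper's version is shorter and uses nothing about $A$ beyond its area, while yours trades a $\sqrt{a}$ loss for a polylogarithmic one, which is a genuine improvement (not needed here, but cleaner).

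One slip to fix: you write the angular threshold as $|\cos\theta| \geq \sqrt{aN^{-s}}$ and the tangent-window cap as $r_{\max}(\theta) = O(\sqrt{aN^{-s}})$, but $\cos\theta$ is dimensionless while $r_{\max}$ is a length, so the same expression cannot serve as both. The intended quantities are $|\cos\theta| \geq \sqrt{a}$ (equivalently, an angular window of size $O(\sqrt{a})$ around $\pm\pi/2$) and $r_{\max}(\theta) = O(\sqrt{a}\,N^{-s})$ inside that window. With these corrections the integration gives exactly $O(aN^{-s}\log(1/a)) = O(aN^{-s}\log N)$ as you state, and the conclusion stands. You should also note, as you do implicitly, that $z$ on $\partial B$ is the worst case: for $z$ deeper inside $B$ the inner endpoint $r_{\min}(\theta)$ of each radial slice only increases, shrinking the slice lengths, so the same bound holds uniformly over $B$.
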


\begin{proof}
  By \eqref{e:Rprimedef} and differentiation under the integral,
  \begin{equation*}
    \nabla R'(z) = \frac{N}{M} \int \ind{B^c}(w)(1-\psi(w)) \frac{w-z}{|w-z|^2} \frac{1}{4\pi}\Delta V(w) \, m(dw)
    .
  \end{equation*}
  Let $A = z+\{w : w \in [aN^{-s},2aN^{-s}] \}$ be the 
  support of $w \mapsto \ind{B^c}(z+w)(1-\psi(z+w))$. Then,
  since $\ind{B^c}(w)(1-\psi(w)) \leq 1$,
  $\|\Delta V\|_\infty \leq \alpha_0^{-1} = O(1)$,
  and since $N/M = O(N^{2s})$ with $t$-HP, by Lemma~\ref{lem:taunubd},
  we have 
  \begin{equation*}
    N^{-s}|\nabla R'(z)| = 
    O\p{N^{s}}
    \inf_{\eta>0}\pbb{
      \int_{A \cap B_\eta(0)} \frac{m(dw)}{|w|}
      + \frac{1}{\eta} \int_{A\cap B_\eta(0)^c} \ind{B^c}(w+z)(1-\psi(w+z)) \, m(dw) 
    } .
  \end{equation*}
  Since
  \begin{equation*}
    \int_{B_\eta(0)} \frac{m(dw)}{|w|} = O(\eta), \qquad
    m(A) = O(aN^{-2s}), 
  \end{equation*}
  we obtain
  \begin{equation*}
    N^{-s}|\nabla R'(z)| 
    =
    O(N^{s})
    \inf_{\eta>0}\p{ \eta + \eta^{-1}aN^{-2s} } 
    = O(\sqrt{a}) = O(N^{-c\varepsilon}),
  \end{equation*}
  where in the second equality we choose $\eta = \sqrt{a} N^{-s}$.
\end{proof}

Using Lemmas~\ref{lem:taunubd}--\ref{lem:Rprimebd} and Proposition~\ref{prop:step-cond},
we now complete the proof of Proposition~\ref{prop:ind}.

\begin{proof}[Proof of Proposition~\ref{prop:ind}]
Fix $t$ and assume (A$_t$). 
We show that then, with $t$-HP, 
$W = W(\,\cdot\,|\hat z)$ defined in \eqref{e:Wdef},
together with $V=\bar W$ given by \eqref{e:barWdef},
obeys the assumptions of Proposition~\ref{prop:step-cond}.
To verify this, we condition on $\hat z$ such that the conclusions of Lemmas~\ref{lem:taunubd}--\ref{lem:Rprimebd} hold;
this event has $t$-HP as needed.
On this event, the number of particles in $B_s$ satisfies
$M_s = N \mu_V(B_s) \tau \in [\frac12 N \mu_V(B_s), 2 N \mu_V(B_s)]$, and we have
the estimates
\begin{equation} \label{e:nuRbd}
  \|\nu\| = O(N^{-c\varepsilon}),\qquad
  N^{-s} \|\partial_n^- (R(\cdot|\hat z) +R')\|_{\infty,\partial B} = O(N^{-c\varepsilon}),
  \qquad \tau = 1+O(N^{-c\varepsilon}).
\end{equation}

These estimates, as well as $\Delta \bar W = \mu_V(B)^{-1} \Delta V \geq c \alpha_0 N^{2s}$ in $B$,
imply that $\kappa = O(N^{-s-c\varepsilon})$ in \eqref{e:kappabd}.
Thus Proposition~\ref{prop:innerradius} implies 
\begin{equation*}
  S_W \supset B_s^\circ.
\end{equation*}
Similarly, the assumptions of Proposition~\ref{prop:bddensity} are satisfied and
it follows that
there exist $u \in L^\infty(B)$ and $v \in L^\infty(\partial B)$
such that $d\mu_W = u \, dm + v\, ds$,
where $dm$ is the Lebesgue measure on $\C$
and $ds$ the surface measure on $\partial B$, satisfying
\begin{equation*}
  u = \ind{S_W} \frac{N}{4\pi M} \Delta V, \quad N^{-s} \|v\|_{\infty,\partial B} = O(1).
\end{equation*}
From the bound on $v$ and since $B$ is disk of radius $N^{-s}$, the left-hand side of \eqref{e:Ai} is
bounded by
\begin{equation}
  \|v\|_\infty \sup_{w\in\C} \int_{\partial B} \frac{s(dz)}{\sqrt{|z-w|}} = O(N^{s/2}). 
\end{equation}
Thus assumption (ii) in Proposition~\ref{prop:step-cond} (and Proposition~\ref{prop:step}) is verified
with $A_u = 2s/(1-2s) \leq 1/(2\delta)$ and $A_v = s/(1-2s) \leq 1/(4\delta)$,
where $\delta$ is the constant fixed in Theorem~\ref{thm:locallaw}, and where we used that $M \asymp N^{1-2s}$.

Next, we verify assumption (iii). To this end, define the (random) set $F = \cup B_\eta(\hat z_j) \cap B$ with the choice $\eta=N^{-4}$
and where the union ranges over all charges $\hat z_j \not\in B$.
Then for $w \not\in F$ we have
\begin{equation*}
  |\nabla W(w | \hat z)|
  \leq \frac{N}{M} |\nabla V(w)| + \frac{2}{M} \sum_{j} \frac{1}{|w-\hat z_j|}
  = O\Big( \frac{N}{\eta M}\Big)
  = O(M^{A_\nabla}),
\end{equation*}
with $A_\nabla = 5/(1-2s) - 1 \leq 5/(2\delta)-1$, where we used $N = O(M^{1/(1-2s)})$ as above.
Moreover, 
for $t \in [0,1]$ small enough that $w-t\bar n \in B$ for all $w \in \partial B$,
\begin{align*}
  &\int_{F} |W(w -t \bar n | \hat z) - W(w|\hat z)| \, v(w) \, s(dw)
  \\
  &\qquad
  \leq \int_F \pbb{ \frac{N}{M} |V(w-t\bar n)-V(w)|
  + \frac{2}{M} \sum_{j} \absa{\log \absa{1-\frac{t\bar n}{w-\hat z_j}}} } \, v(w) \, s(dw)
  \\
  &\qquad
  = O\bigg(\frac{N}{M} \|v\|_\infty\bigg)\bigg( s(F) \, t\, \|\nabla V\|_{\infty,B} + \sup_{\dist(z,B) \leq 2} \int_F |\log |w-z|| \, s(dw) \bigg)
  \\
  &\qquad
  = O(N^{1+3s} \eta \log \eta) = O(N^{-1}) = O(M^{-1}),
\end{align*}
using that 
$\|v\|_\infty = O(N^s)$, $N/M = O(N^{2s})$,
$s(F) \leq \sum_j s(B_\eta(\hat z_j)) = O(N\eta)$,
and similarly
\begin{equation*}
\sup_{\dist(z,B)\leq 2} \int_F |\log|w-z|| \, s(dw)
\leq \sum_{j} \sup_{\dist(z,B) \leq 2} \int_{B_\eta(\hat z_j)} |\log|w-z|| \, s(dw) = O(N\eta\log\eta).
\end{equation*}
Moreover, since $B$ is a disk of radius $N^{-s} \geq cM^{-s/(1-2s)} \geq cM^{-1/(4\delta)}$,
the condition assumed on $t$ is satisfied for $t \in [0,M^{-A_n}]$ with  $A_n = 1/(4\delta)+1$.

To verify assumption (iv), we recall that $\zeta \geq 0$ and $\zeta = +\infty$ outside $B$, and therefore
\begin{equation*}
  \int \ee^{-\beta N\zeta} \, dm \leq m(B) \leq 1,
\end{equation*}
so the first condition holds with $A_\zeta = 0$,
and the second condition holds with $A_n$ as previously.

Thus we have shown that $W(\,\cdot\,|\hat z)$ indeed obeys the assumptions of Proposition~\ref{prop:step-cond}
for a set of $\hat z$ that has $t$-HP; we denote the event of such $\hat z$ by $\Omega$.
For any such $\Omega$, Proposition~\ref{prop:condeqmeas-def} applied with an indicator function
implies
\begin{multline} \label{e:Pgoodbad}
  P_{N,V,\beta}\left(\left|\frac{1}{N} \sum_j f(z_j) - \int f(z) \, \mu_V(dz)\right| \geq \kappa \right)
  \leq
  P_{N,V,\beta}(\Omega^c) 
  \\
  + P_{N,V,\beta}\left(\left|\frac{1}{N} \sum_j f(z_j) - \int f(z) \,
      \mu_V(dz)\right| \geq \kappa \Bigg| \Omega \right)
  .
\end{multline}
Since $\Omega$ has $t$-HP, the first probability on the right-hand side is at most
\begin{equation*}
\ee^{-N^{1-2t} + O(\log N)} \leq \ee^{-2N^{1-2s}}.
\end{equation*}
By Proposition~\ref{prop:step-cond} with 
$\xi = 1+1/\beta$,
for $s$ as in the statement of the proposition,
the second probability in \eqref{e:Pgoodbad} is at most
$\ee^{-\xi \beta M \log M} \leq \ee^{-2(1+\beta) N^{1-2s}}$.
Since
\begin{equation*}
\ee^{-2(1+\beta) N^{1-2s}}
+ \ee^{-2(1+\beta) N^{1-2s}}
\leq \ee^{-(1+\beta) N^{1-2s}}
\end{equation*}
we conclude that (A$_s$) holds. This completes the proof.
\end{proof}

\section{Proof of Theorem~\ref{thm:rigidity}}
\label{sec:rigidity}

In this section we generally assume that the potential $V$ satisfies the assumptions of Theorem~\ref{thm:rigidity},
namely that $V \in C^4$ and that it has the growth \eqref{e:Vgrowth} at infinity.
Moreover, as in the statement of Theorems~\ref{thm:rigidity},
choose $s \in (0,\frac12)$, $z_0$ in the interior of $S_V$,
and suppose that $f \in C^4_c$ is supported in $B(z_0, \frac12 N^{-s})$.

Let
\begin{equation*}
  F_{N,V,\beta}(f)
  = \log \E_{N,V,\beta} (\ee^{X_f}),
  \qquad
  X_f = \sum_j f(z_j) - N \int f \, d\mu_V
\end{equation*}
denote the cumulant generating function $F_{N,V,\beta}$ of the centered linear statistic $X_f$
associated to $f$, where as previously $\E_{N,V,\beta}$ is the expectation associated to $P_{N,V,\beta}$.
Further define
\begin{equation*}
  \|f\|_{k,t} =  \sum_{l=1}^k t^l \|\nabla^l f\|_\infty.
\end{equation*}
Our goal is to prove that,
for any $\varepsilon>0$, for all $f \in C^4_c(\C)$ with support contained in $B(z_0, \frac12 N^{-s})$ and
$\|f\|_{4,N^{-s}}\leq \beta \alpha_0$, where $\alpha_0$ is fixed in the assumptions of Theorems~\ref{thm:locallaw}--\ref{thm:rigidity},
we have
\begin{equation} \label{e:FNbd}
F_{N,V,\beta}(f)
=
O(\beta N^{\varepsilon})
.
\end{equation}
For this, it suffices to show that 
$\ddp{}{t} F_{N,V,\beta}(tf)$ is bounded by the right-hand side of \eqref{e:FNbd}
uniformly in $t \in [0,1]$.
Since $F_{N,V,\beta}(0)=0$ the claim will then follow by integration over $t\in [0,1]$.

\subsection{Loop equation}

To bound the derivative $\ddp{}{t} F_{N,V,\beta}(tf)$ we use the \emph{loop equation} and estimate the arising
error terms using Theorem~\ref{thm:locallaw}.
For the following, we recall the notation
$\partial = \frac12 (\partial_x - \ii \partial_y)$ and $\bar\partial = \frac12 (\partial_x + \ii \partial_y)$,
and that $\partial\bar\partial  = \frac{1}{4} \Delta$.

\begin{proposition} \label{prop:loop}
For any potential $V \in C^1$ with sufficient growth at infinity,
and any test function $h\in C^1_c$, we have the \emph{loop equation}
\begin{equation} \label{e:loop}
  \E_{N,V,\beta} \pa{
    \half \sum_{j\neq k} \frac{h(z_j)-h(z_k)}{z_j-z_k} +
    \frac{1}{\beta} \sum_j \partial h(z_j)
    - N \sum_j h(z_j) \partial V(z_j)
  } = 0.
\end{equation}
\end{proposition}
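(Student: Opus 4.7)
The plan is to derive \eqref{e:loop} as an integration-by-parts (Ward) identity, by differentiating the Gibbs density $\ee^{-\beta H_{N,V}(z)}$ with respect to the position of a single particle $z_j$ using the Wirtinger operator $\partial = \partial_{z_j}$. Starting from the distributional identity $\int_\C \partial_{z_j} g\, dm(z_j) = 0$ (valid for integrable $g$ with sufficient decay, by Stokes' theorem), I would apply it to $g(z_j) = h(z_j)\,\ee^{-\beta H_{N,V}(z)}$ and integrate over the remaining variables to obtain
\begin{equation*}
  \E_{N,V,\beta}\qb{\partial h(z_j)} = \beta\, \E_{N,V,\beta}\qb{h(z_j)\, \partial_{z_j} H_{N,V}(z)}.
\end{equation*}

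A direct computation using $\log\frac{1}{|w|} = -\half\log(w\bar w)$ and the fact that only the pair terms involving $z_j$ contribute yields
\begin{equation*}
  \partial_{z_j} H_{N,V}(z) = -\sum_{k\neq j} \frac{1}{z_j-z_k} + N\,\partial V(z_j).
\end{equation*}
Substituting this into the previous display, summing over $j$, dividing by $\beta$, and antisymmetrizing the double sum via $j \leftrightarrow k$ (which produces $\sum_{j\neq k}\frac{h(z_j)}{z_j-z_k} = \half\sum_{j\neq k}\frac{h(z_j)-h(z_k)}{z_j-z_k}$), I recover exactly \eqref{e:loop}.

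The main technical point is justifying the integration by parts in the presence of the logarithmic singularities of $H_{N,V}$ on each diagonal $\{z_j = z_k\}$. Near a collision, $\ee^{-\beta H}$ vanishes like $|z_j-z_k|^\beta$ while $\partial_{z_j} H$ blows up only like $|z_j-z_k|^{-1}$, so each integrand encountered is locally integrable; I would regularize by excising $\epsilon$-disks around each diagonal and applying Stokes' theorem to the regularized domain. The boundary contributions over the corresponding circles of radius $\epsilon$ are of size $\epsilon \cdot \epsilon^\beta$ and vanish as $\epsilon \to 0$. Because $h$ is compactly supported there is no boundary term at infinity in $z_j$, while the growth hypothesis on $V$ guarantees integrability in the remaining variables so Fubini applies. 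This regularization argument is the only step requiring genuine care; the rest of the proof is a symmetrization identity.
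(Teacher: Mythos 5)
Your proof is correct and follows the same route as the paper's: integration by parts in $z_j$ (Stokes/Ward), the explicit computation of $\partial_{z_j} H_{N,V}$, summation over $j$, and the $j \leftrightarrow k$ symmetrization; the paper merely states the identity $\E(\partial h(z_j)) = \beta\,\E(h(z_j)\partial_{z_j}H)$ without spelling out the regularization. Your approach is just more explicit about justifying the boundary-term vanishing near collisions — the only tiny slip there is that since $\sum_{j\neq k}$ runs over ordered pairs, $\ee^{-\beta H}$ vanishes like $|z_j-z_k|^{2\beta}$ rather than $|z_j-z_k|^{\beta}$, which of course only strengthens the conclusion.
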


\begin{proof}
By integration by parts, we have
\begin{equation*}
  \E_{N,V,\beta}\pa{\partial h(z_j)}
  = \beta \E_{N,V,\beta} \p{ h(z_j) \partial_{z_j} H(z) }
  = \beta \E_{N,V,\beta} \pa{ h(z_j) \pa{ \sum_{k: k \neq j}\frac{-1}{z_j-z_k} + N\partial V(z_j)} }
  ,
\end{equation*}
and \eqref{e:loop} follows immediately from this equation by summation over $j$.
\end{proof}

The following lemma rewrites the Euler--Lagrange equation \eqref{e:EL}
in a form useful for the reformulation of the loop equation.

\begin{proposition} \label{prop:ELff}
For any sufficiently smooth $f$ supported in the interior of $S_V$, we have
\begin{equation} \label{e:ELff}
  \half \iint \frac{f(z)-f(w)}{z-w} \mu_V(dz) \, \mu_V(dw) - \int f(z) \partial V(z) \mu_V(dz) = 0
\end{equation}
and $K_V(\frac{\bar\partial f}{\Delta V}) = \frac{1}{4} f$ where
\begin{equation}
  K_Vf(z)
  = -\int \frac{f(z)-f(w)}{z-w} \; \mu_V(dw) + \partial V(z) f(z).
\end{equation}
\end{proposition}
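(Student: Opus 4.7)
The plan is to use the Euler--Lagrange equation as the master identity and differentiate it in the complex sense. Since $V \in C^{1,1}$ on a neighbourhood of $S_V$, Theorem~\ref{thm:eqmeasure-regularity}(iii) implies that $U^{\mu_V}$ is also $C^{1,1}$ there, so the equality $U^{\mu_V} + \tfrac{1}{2}V = F_V$ on $\mathrm{int}\,S_V$ may be differentiated pointwise. Applying $\partial = \tfrac{1}{2}(\partial_x - \ii\partial_y)$ and using $\partial \log|z-w| = \tfrac{1}{2(z-w)}$ yields the master identity
\begin{equation} \label{e:ELdiff}
\int \frac{\mu_V(dw)}{z-w} = \partial V(z), \qquad z \in \mathrm{int}\,S_V.
\end{equation}

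For the first assertion, I would split the double integral and use the symmetry $z \leftrightarrow w$ to write
\begin{equation*}
\tfrac{1}{2}\iint \frac{f(z)-f(w)}{z-w}\,\mu_V(dz)\,\mu_V(dw) = \int f(z)\left(\int \frac{\mu_V(dw)}{z-w}\right)\mu_V(dz).
\end{equation*}
Since $\supp f \subset \mathrm{int}\,S_V$, the inner integral may be replaced by $\partial V(z)$ via \eqref{e:ELdiff}, giving exactly \eqref{e:ELff}.

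For the second assertion, set $g = \bar\partial f / \Delta V$; this is smooth and compactly supported in $\mathrm{int}\,S_V$ since $\Delta V \geq \alpha_0 > 0$ on $S_V$. Splitting
\begin{equation*}
K_V g(z) = \partial V(z)\,g(z) - g(z)\int \frac{\mu_V(dw)}{z-w} + \int \frac{g(w)}{z-w}\,\mu_V(dw),
\end{equation*}
the first two terms cancel on $\mathrm{int}\,S_V$ by \eqref{e:ELdiff}. Using that $\mu_V = \tfrac{1}{4\pi}\Delta V \, dm$ on $\mathrm{int}\,S_V$ (Theorem~\ref{thm:eqmeasure-regularity}(ii)) and that $\bar\partial f$ is supported in this set, we have $g(w)\,\mu_V(dw) = \tfrac{1}{4\pi}\bar\partial f(w)\,dm(w)$, so the Cauchy--Pompeiu formula $f(z) = \tfrac{1}{\pi}\int (z-w)^{-1}\bar\partial f(w)\,dm(w)$ for smooth compactly supported $f$ gives $K_V g(z) = \tfrac{1}{4}f(z)$.

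The computation itself is routine; the only subtlety lies in justifying the pointwise differentiation leading to \eqref{e:ELdiff}, since the original Euler--Lagrange equation holds only quasi-everywhere. This is handled by the $C^{1,1}$ regularity of both $V$ and $U^{\mu_V}$ on a neighbourhood of $S_V$, which upgrades the quasi-everywhere identity to a genuine pointwise equality of continuously differentiable functions on the interior, where all integrals converge absolutely thanks to the support of $f$ (and of $\bar\partial f$) being contained in $\mathrm{int}\,S_V$.
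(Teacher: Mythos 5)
Your proposal is correct and follows essentially the same route as the paper: differentiate the Euler--Lagrange identity $U^{\mu_V} + \tfrac12 V = F_V$ on $\mathrm{int}\,S_V$ (using $\partial \log\frac{1}{|z|} = -\frac{1}{2z}$) to obtain $\int \frac{\mu_V(dw)}{z-w} = \partial V(z)$, then use symmetry for the first identity and, for the second, substitute $\mu_V = \frac{1}{4\pi}\Delta V\,dm$ on $\mathrm{int}\,S_V$ to reduce to the Cauchy--Pompeiu formula. The paper phrases the last step via $\partial\bar\partial = \frac14\Delta$ and the logarithmic fundamental solution rather than citing Cauchy--Pompeiu by name, and leaves the $C^{1,1}$ regularity justification for pointwise differentiation implicit, but these are only cosmetic differences from your argument.
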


\begin{proof}
By the Euler-Lagrange equation \eqref{e:EL},
$U^{\mu_V} + \frac12 V$ is constant in the support of $\mu_V$.
Thus, using $\partial \log 1/|z| = -1/(2z)$,
for $z$ in the support of $\mu_V$ we have
\begin{equation}
  - \int \frac{1}{z-w} \; \mu_V(dw) = 2 \partial U^{\mu_V}(z) = - \partial V(z).
\end{equation}
This implies \eqref{e:ELff}, and since $\mu_V = \frac{1}{4\pi} \Delta V$ on its support,
also by \eqref{e:EL}, 
and since $\partial\bar\partial = \frac{1}{4} \Delta$, also
\begin{equation}
  K_V \pa{ \frac{\bar\partial f}{\Delta V}}(z)
  = \frac{1}{4\pi} \int \frac{\bar\partial f(w)}{z-w} \, m(dw)
  = \frac{1}{2\pi} \int  \partial \bar\partial f(w) \log |z-w| \, m(dw)
  = \frac{1}{4} f(z),
\end{equation}
as claimed.
\end{proof}

For the following, we recall that $\hat \mu = \frac{1}{N} \sum_j \delta_{z_j}$ denotes
the empirical measure, and define $\tilde \mu_V = \hat \mu - \mu_V$ to be the difference
between the empirical and equilibrium measures. Given $f$, we moreover abbreviate
\begin{equation} \label{e:hdef}
  h = \frac{4\bar\partial f}{\Delta V}
\end{equation}
throughout the remainder of Section~\ref{sec:rigidity}.

\begin{proposition} \label{prop:loopF}
For any sufficiently smooth $f$ supported in the interior of $S_V$, we have
\begin{multline} \label{e:loopF}
  \ddp{}{t} F_{N,V,\beta}(tf)
  =
  \E_{N,V-tf/(\beta N),\beta} \bigg(
  \frac{1}{\beta} \int \partial h \, d\hat\mu
  + \frac{t}{\beta} \int h \partial f \, d\hat\mu
  \\
  + \frac{N}{2} \iint \frac{h(z)- h(w)}{z-w} \inda{z\neq w} \; \tilde\mu_V(dz) \, \tilde\mu_V(dw)
  \bigg).
\end{multline}
\end{proposition}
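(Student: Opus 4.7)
My first step is to convert the derivative $\ddp{}{t}F_{N,V,\beta}(tf)$ into an expectation under a tilted measure. Since $F_{N,V,\beta}(tf)=\log\E_{N,V,\beta}(\ee^{tX_f})$,
\begin{equation*}
\ddp{}{t}F_{N,V,\beta}(tf)=\frac{\E_{N,V,\beta}(X_f\ee^{tX_f})}{\E_{N,V,\beta}(\ee^{tX_f})},
\end{equation*}
and the identity $-\beta H_{N,V}(z)+t\sum_j f(z_j)=-\beta H_{N,V-tf/(\beta N)}(z)$ implies that the reweighted measure $\ee^{tX_f}P_{N,V,\beta}/\E_{N,V,\beta}(\ee^{tX_f})$ coincides with $P_{N,V-tf/(\beta N),\beta}$. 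Because $X_f$ is centered by the fixed (i.e.\ $t$-independent) equilibrium measure $\mu_V$, this yields $\ddp{}{t}F_{N,V,\beta}(tf)=\E_{N,V-tf/(\beta N),\beta}(X_f)$, so it suffices to show that the right-hand side of \eqref{e:loopF} equals this tilted expectation.

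The second step is to apply the loop equation (Proposition~\ref{prop:loop}) to the shifted potential $V-tf/(\beta N)$ with the test function $h=4\bar\partial f/\Delta V$ from \eqref{e:hdef}. Since $f$ is supported in $B(z_0,\tfrac12 N^{-s})\subset\mathrm{int}(S_V)$ and $\Delta V\geq\alpha_0>0$ there, $h$ is smooth and compactly supported in the interior of $S_V$. Using $N\partial(V-tf/(\beta N))=N\partial V-(t/\beta)\partial f$, dividing by $N$, and writing the sums as integrals against $\hat\mu$, the loop equation becomes
\begin{equation*}
\E\!\left[\frac{N}{2}\iint\frac{h(z)-h(w)}{z-w}\inda{z\neq w}\hat\mu(dz)\hat\mu(dw)+\frac{1}{\beta}\int\partial h\,d\hat\mu+\frac{t}{\beta}\int h\,\partial f\,d\hat\mu-N\int h\,\partial V\,d\hat\mu\right]=0,
\end{equation*}
with expectation under $P_{N,V-tf/(\beta N),\beta}$.

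The third step is to eliminate the term $N\int h\,\partial V\,d\hat\mu$ using the Euler--Lagrange identity $K_V h=f$ of Proposition~\ref{prop:ELff}, which on $\supp\mu_V$ reads $\partial V(z)h(z)=f(z)+\int\frac{h(z)-h(w)}{z-w}\mu_V(dw)$. Integrating this against $\hat\mu$ gives
\begin{equation*}
N\int h\,\partial V\,d\hat\mu=N\int f\,d\hat\mu+N\iint\frac{h(z)-h(w)}{z-w}\hat\mu(dz)\mu_V(dw),
\end{equation*}
and integrating it against $\mu_V$, combined with the symmetric identity \eqref{e:ELff} applied to $h$, yields the deterministic value $\int h\,\partial V\,d\mu_V=-\int f\,d\mu_V$. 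These two formulas let me expand the double integral in \eqref{e:loopF} by writing $\tilde\mu_V=\hat\mu-\mu_V$ and converting the $\hat\mu\otimes\mu_V$ cross term and the $\mu_V\otimes\mu_V$ piece into expressions involving $\int h\,\partial V\,d\hat\mu$, $\int f\,d\hat\mu$, and $\int f\,d\mu_V$.

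Substituting this expansion into the right-hand side of \eqref{e:loopF} produces exactly the loop-equation bracket above (whose expectation vanishes) plus the remainder $N\int f\,d\hat\mu-N\int f\,d\mu_V=X_f$; taking expectations completes the proof. The main point requiring care is the bookkeeping around the $\inda{z\neq w}$ restriction, which is essential for the $\hat\mu\otimes\hat\mu$ term (matching the $j\neq k$ sum in \eqref{e:loop}) but has no effect on integrals in which at least one factor is $\mu_V$, by absolute continuity of $\mu_V$. One also needs to verify that $h=4\bar\partial f/\Delta V$ is genuinely supported in the interior of $S_V$ so that the pointwise Euler--Lagrange identity $K_V h=f$ may be used inside integrals against $\hat\mu$.
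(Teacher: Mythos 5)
Your proof is correct and follows essentially the same route as the paper: apply the loop equation with the shifted potential $V-tf/(\beta N)$ and test function $h=4\bar\partial f/\Delta V$, then use the Euler--Lagrange identity $K_Vh=f$ to reduce the deterministic and cross terms to $X_f$. The only cosmetic differences are that you make explicit the tilting identity $\ddp{}{t}F_{N,V,\beta}(tf)=\E_{N,V-tf/(\beta N),\beta}(X_f)$, which the paper uses silently, and that you expand the $\tilde\mu_V\otimes\tilde\mu_V$ double integral in \eqref{e:loopF} and match it against the loop-equation bracket, whereas the paper expands $\hat\mu\otimes\hat\mu$ in \eqref{e:loop} into $\mu_V\otimes\mu_V$, cross, and $\tilde\mu_V\otimes\tilde\mu_V$ pieces and simplifies down to $N\int K_Vh\,d\tilde\mu_V$; these are two directions of the same algebraic bookkeeping.
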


\begin{proof}
We use the loop equation \eqref{e:loop} with $V$ replaced by $V-tf/(\beta N)$.
The first term in \eqref{e:loop} is proportional to
\begin{align} \label{e:looppf1}
  \frac{1}{2N^2} \sum_{j\neq k} \frac{h(z_j)-h(z_k)}{z_j-z_k}
  &=
  \half \iint \frac{h(z)-h(w)}{z-w} \, \inda{z \neq w} \, \hat\mu(dz) \, \hat\mu(dw)
  \nonumber
  \\
  &=
  \half \iint \frac{h(z)-h(w)}{z-w} \, \mu_V(dz) \, \mu_V(dw)
  + \iint \frac{h(z)-h(w)}{z-w} \, \mu_V(dz) \, \tilde\mu_V(dw) 
  \nonumber 
  \\
  &\qquad
  + \half \iint \frac{h(z)-h(w)}{z-w} \, \inda{z \neq w} \, \tilde\mu_V(dz) \, \tilde\mu_V(dw)
  ,
\end{align}
where we used that $\mu_V$ is absolutely continuous to omit the indicator function $\inda{z\neq w}$
in the first two terms on the right-hand side.
Similarly, we write the last term in \eqref{e:loop} as
\begin{equation} \label{e:looppf2}
  \frac{1}{N} \sum_j h(z_j) \partial V(z_j)
  = \int h(z) \partial V(z) \, \mu_V(dz)
  + \int h(z) \partial V(z) \, \tilde\mu_V(dz).
\end{equation}
By Proposition~\ref{prop:ELff}, the difference of the first terms on the right-hand sides of \eqref{e:looppf1} and \eqref{e:looppf2}
vanishes, and the difference of the second terms is equal to
\begin{equation}
  \int K_V h(w) \, \tilde\mu_V(dw).
\end{equation}
Applying \eqref{e:loop} with $V$ replaced by $V - tf/(\beta N)$, we obtain an additional term involving $f$ and find
\begin{multline}
  \E_{N,V-tf/(\beta N),\beta} \pa{ N \int K_Vh \, d\tilde\mu_V }
  =
  \E_{N,V-tf/(\beta N),\beta}
  \bigg(
  \frac{1}{\beta} \int \partial h \, d\hat\mu
  + \frac{t}{\beta} \int h \partial f \, d\hat\mu
  \\
  +  \frac{N}{2} \iint \frac{h(z)-h(w)}{z-w} \, \inda{z\neq w} \, \tilde\mu_V(dz) \, \tilde\mu_V(dw)
  \bigg).
\end{multline}
Let $h=4\bar\partial f/\Delta V$. By Proposition~\ref{prop:ELff} we then have
$K_V h = f$ and thus $N \int K_Vh \, d\tilde \mu_V = X_f$ and obtain \eqref{e:loopF}.
\end{proof}

\subsection{Proof of Theorem~\ref{thm:rigidity}}

To show that the right-hand side of \eqref{e:loopF} is bounded by the
right-hand side of \eqref{e:FNbd},
we use Theorem~\ref{thm:locallaw} applied with potential $V-tf/(\beta N)$ instead of $V$,
where $f \in C_c^4(\C)$ has support in $B(z_0,\frac12 N^{-s})$ and $\|f\|_{4,N^{-s}} \leq \beta \alpha_0$.

We choose $\delta  \in (0,\frac12 -s)$.
In particular, then $\|\nabla f\|_\infty/(\beta N) \leq N^{-\frac12-\delta}\alpha_0$ and $\|\Delta f\|_\infty/(\beta N) \leq N^{-\delta} \alpha_0$,
so Theorem~\ref{thm:locallaw} can be applied to the potential $V-tf/(\beta N)$ with constants independent of $f$.
Moreover, by Proposition~\ref{prop:eqmeasf} and since $t |\Delta f| \leq \beta N \Delta V$ on the support of $\mu_V$,
we have $\mu_{V-tf/(\beta N)} = \mu_V - \frac{1}{4\pi \beta N}  \Delta f \, dm$ and therefore
\begin{equation} \label{e:muVmuVf}
  \int g \, d\mu_{V} - \int g \, d\mu_{V-tf/(\beta N)}
  = \frac{1}{4\pi \beta N} \int g \Delta f\, dm
  \leq CN^{-1} \|g\|_\infty
  \leq CN^{-1} \|g\|_{2,N^{-s}},
\end{equation}
where we used $m(\supp \Delta f) = O(N^{-2s})$ and $|\Delta f| \leq N^{2s}\beta\alpha_0$.
Thus we can replace $\mu_{V-tf/(\beta N)}$ by $\mu_V$ on the left-hand side of the estimate in Theorem~\ref{thm:locallaw}
with a negligible error.

In the following, we say that an event $E$ holds with \emph{$\delta$-high probability} (abbreviated \emph{$\delta$-HP})
if
\begin{equation} \label{e:deltaHP}
  P_{N,V-tf/(\beta N),\beta}(E) \leq \ee^{-(1+\beta)N^{2\delta}+O(\log N)}
  .
\end{equation}
Except for the replacement of $V$ by $V-tf/(\beta N)$,
this definition is the same as $t$-HP, introduced in the previous section above \eqref{e:step-prev}, with $t=\frac12-\delta$.
However, in this section, it is more natural to use $\delta$ rather than $\frac12-\delta$. As previously,
by the definition and the union bound, any union of $N^{O(1)}$ many events which each hold with $\delta$-HP also holds with $\delta$-HP.
Throughout the remainder of this section, we also abbreviate
\begin{equation} \label{e:thetaxidef}
\theta=N^{\delta}, \qquad \xi = (\log N)(1+1/\beta).
\end{equation}
Then, by Theorem~\ref{thm:locallaw} and \eqref{e:muVmuVf}, it follows that
for any $g \in C_c^2$ with support in a ball of radius $t$ with $t > \theta/\sqrt{N}$ we have, with $\delta$-HP,
\begin{equation} \label{e:locallawprec}
  \int g \, d\tilde\mu_V
  = O(\xi) \pa{ N^{-\frac12} (t^2 \wedge 1) \|\nabla g\|_\infty + N^{-1} (t^2\wedge 1) \|\Delta g\|_\infty }
  = O(\xi) N^{-\frac12} (t\wedge 1) \|g\|_{2,t\wedge 1}.
\end{equation}
To be precise, Theorem~\ref{thm:locallaw} as stated does not apply to the case $s=0$ corresponding to $t \geq 1$
as here $g$ is not assumed to have support contained in $S_V$.
However, it is immediate from Proposition~\ref{prop:step} that \eqref{e:locallawprec} holds also in this case
as by our smoothness assumptions the equilibrium measure has no boundary charge.

Below we further use the abbreviation \eqref{e:hdef}, and permit all constants to depend on $V$.
In particular, we use the bound $N^{-s} \|h\|_{k,N^{-s}} \leq C \|f\|_{k+1, N^{-s}}$.

We first estimate the first two terms in \eqref{e:loopF}.

\begin{lemma} \label{lem:Ehf}
For $f$ satisfying the conditions above,
with $\delta$-HP, we have
\begin{align}
  \label{e:Ehf1}
  \int \partial h \, d\hat\mu
  - \frac{1}{4\pi} \int (\Delta f) (\log \Delta V) \, dm
  &= O(\xi/\theta) \|f\|_{4,N^{-s}}
  ,
  \\
  \label{e:Ehf2}
  \int h \partial f \, d\hat\mu
  -   \frac{1}{4\pi} \int f(-\Delta f) \, dm  
  &= O(\xi/\theta) \|f\|_{3,N^{-s}}^2
  .
\end{align}
\end{lemma}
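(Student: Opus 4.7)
The plan is, for each identity, to rewrite the left-hand side as $\int g\, d\tilde\mu_V$ for an appropriate $g$ supported in $B(z_0,\tfrac12 N^{-s})$ and then invoke the local density bound \eqref{e:locallawprec} under the tilted measure $P_{N,V-tf/(\beta N),\beta}$.

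First I evaluate the $\mu_V$-integrals explicitly. On $\supp(f)\subset \mathrm{int}(S_V)$ the density of $\mu_V$ is $(4\pi)^{-1}\Delta V$. Writing $h=4\bar\partial f/\Delta V$ and applying the quotient rule together with $4\partial\bar\partial=\Delta$, one finds
\begin{equation*}
\int(\partial h)\,\Delta V\, dm = \int\Delta f\, dm - 4\int\bar\partial f\cdot\partial(\log\Delta V)\, dm;
\end{equation*}
the first integral vanishes by compact support of $f$, and two integrations by parts (valid since $\Delta V\geq\alpha_0$ on a fixed neighbourhood of $\supp(f)$ and $V\in C^4$, so $\log\Delta V\in C^2$ there) convert the second into $-\int(\Delta f)(\log\Delta V)\, dm$. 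Hence $\int\partial h\, d\mu_V=(4\pi)^{-1}\int(\Delta f)(\log\Delta V)\, dm$. For the second identity, since $\bar\partial f\cdot\partial f=|\partial f|^2=\tfrac14|\nabla f|^2$ for real $f$, the factor $\Delta V$ cancels and one gets directly $\int h\,\partial f\, d\mu_V=(4\pi)^{-1}(f,-\Delta f)$ after one integration by parts. Thus both claims reduce to showing, with $\delta$-HP, that $\int g\, d\tilde\mu_V$ is $O(\xi/\theta)\|f\|_{4,N^{-s}}$ when $g=\partial h$, and $O(\xi/\theta)\|f\|_{3,N^{-s}}^2$ when $g=h\,\partial f$.

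Next I apply \eqref{e:locallawprec} with $t=N^{-s}$; the hypothesis $t>\theta/\sqrt N$ holds because $\delta<\tfrac12-s$. The discrepancy between $\mu_V$ and $\mu_{V-tf/(\beta N)}$ is handled by \eqref{e:muVmuVf}, contributing only an $O(N^{-1})\|g\|_{2,N^{-s}}$ error that is subdominant. Since $V\in C^4$ and $\Delta V\geq\alpha_0$ on a fixed neighbourhood of $\supp(f)$, all derivatives of $V$ and of $1/\Delta V$ up to the required order are $O(1)$ there, so the quotient and product rules give
\begin{equation*}
\|\nabla^k(\partial h)\|_\infty\lesssim\sum_{l=1}^{k+2}\|\nabla^l f\|_\infty,\qquad \|\nabla^k(h\,\partial f)\|_\infty\lesssim\sum_{\substack{l_1+l_2\leq k+2\\ l_1,l_2\geq 1}}\|\nabla^{l_1}f\|_\infty\|\nabla^{l_2}f\|_\infty,
\end{equation*}
for $k=0,1,2$. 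Combining these with the elementary estimate $\|\nabla^l f\|_\infty\leq t^{-l}\|f\|_{k,t}$ for $l\leq k$, the dominant contribution to the right-hand side of \eqref{e:locallawprec} is $N^{-1/2}t^3\|\nabla^4 f\|_\infty\leq N^{s-1/2}\|f\|_{4,N^{-s}}$ in the first case (coming from $t^2\|\Delta(\partial h)\|_\infty$), and the bilinear analogue $N^{-1/2}t^3\|\nabla^3 f\|_\infty\|\nabla f\|_\infty\leq N^{s-1/2}\|f\|_{3,N^{-s}}^2$ in the second. Since $s\leq\tfrac12-\delta$, both are bounded by $\theta^{-1}$ times the corresponding norm, yielding the advertised error $O(\xi/\theta)$.

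The main obstacle is purely the derivative bookkeeping: one must verify, term by term, that every cross term appearing in $\nabla g$ and $\Delta g$—when combined with its prefactor $N^{-1/2}t\cdot t^k$ in \eqref{e:locallawprec}—is no worse than the dominant term identified above. Since all intermediate terms scale strictly better in powers of $t$, this is a routine but careful check.
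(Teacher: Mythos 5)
Your proof is correct and follows essentially the same approach as the paper: compute $\partial h = \frac{\Delta f - 4(\bar\partial f)(\partial\log\Delta V)}{\Delta V}$ and $h\,\partial f = \frac{4(\bar\partial f)(\partial f)}{\Delta V}$, integrate by parts against $d\mu_V = \frac{1}{4\pi}\Delta V\,dm$ to identify the leading terms $\frac{1}{4\pi}\int(\Delta f)(\log\Delta V)\,dm$ and $\frac{1}{4\pi}(f,-\Delta f)$, and then control the $\tilde\mu_V$-integrals via \eqref{e:locallawprec} with derivative bookkeeping (the paper also reduces to bounding $N^{-2s}\|\partial h\|_{2,N^{-s}}$ and $N^{-2s}\|h\,\partial f\|_{2,N^{-s}}$ using the same identity \eqref{e:phhpf}, and uses $N^{-1/2}\leq N^{-s}/\theta$ in exactly the same way to produce the $\theta^{-1}$ gain).
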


\begin{proof}
By definition of $h$, we have
\begin{equation} \label{e:phhpf}
  \partial h
  = 4 \frac{\partial\bar\partial f}{\Delta V} + 4 (\bar\partial f) \partial \pa{\frac{1}{\Delta V}}
  = \frac{\Delta f - 4 (\bar\partial f) (\partial \log \Delta V)}{\Delta V} 
  ,\qquad
  h \partial f = \frac{4(\bar\partial f)(\partial f)}{\Delta V}
  .
\end{equation}
By integration by parts, 
we have $\int \Delta f \, dm =0$ for $f \in C_c^2(\C)$ and,
since $d\mu_V = \frac{1}{4\pi} \Delta V \, dm$ on its support, therefore
\begin{align*}
  \int \partial h \, d\mu_V
  &=
  \frac{1}{4\pi} \int (\Delta f -4(\bar\partial f)(\partial \log \Delta V)) \, dm
  = \frac{1}{4\pi} \int (\Delta f) (\log \Delta V) \, dm
  ,
  \\
  \int h \partial f \, d\mu_V
  &=
  \frac{1}{\pi} \int (\bar\partial f) (\partial f) \, dm
  =
  \frac{1}{4\pi} \int f(-\Delta f) \, dm  
  .
\end{align*}

By \eqref{e:locallawprec} and since $N^{-\frac12} \leq N^{-\delta-s} = N^{-s}/\theta$,
to complete the proof of \eqref{e:Ehf1}--\eqref{e:Ehf2},
it only remains to verify that
$N^{-2s}\|\partial h\|_{2,N^{-s}}$ and $N^{-2s}\|h\partial f\|_{2,N^{-s}}$ are bounded by the right-hand sides
of \eqref{e:Ehf1} and \eqref{e:Ehf2}, respectively.
This is clear from \eqref{e:phhpf} and thus the proof is complete.
\end{proof}

More care is required to estimate the last term in \eqref{e:loopF}.
Indeed,
\begin{equation*}
  \frac{h(z)-h(w)}{z-w} = \partial h(z) + \bar\partial h(z) \frac{\bar z -\bar w}{z-w} + O(|z-w|),
\end{equation*}
and the second term on the right-hand side is not smooth on the diagonal.

\begin{lemma} \label{lem:Ehh}
For $f$ satisfying the conditions above,
with $\delta$-HP, we have
\begin{equation} \label{e:Ehh}
  N \iint \frac{h(z)- h(w)}{z-w} {\inda{z\neq w}}\; \tilde\mu_V(dz) \, \tilde\mu_V(dw)
  =
  O(\theta^2 + \xi^2 \log N) \|f\|_{3,N^{-s}}.
\end{equation}
\end{lemma}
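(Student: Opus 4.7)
The plan is to split $A$ at scale $\ell = \theta/\sqrt{N}$ via a smooth cutoff $\chi_\ell(z-w)$ supported in $|z-w| \leq 2\ell$ with $\chi_\ell = 1$ on $|z-w| \leq \ell$, writing $A = A_{\rm near} + A_{\rm far}$ where $A_{\rm near}$ contains the factor $\chi_\ell$ and $A_{\rm far}$ the factor $1-\chi_\ell$. For $A_{\rm near}$, I use the Lipschitz estimate $\left|\frac{h(z)-h(w)}{z-w}\right| \leq \|\nabla h\|_\infty = O(N^{2s})\|f\|_{3,N^{-s}}$, and expand $\tilde\mu \otimes \tilde\mu = \hat\mu\otimes\hat\mu - 2\hat\mu\otimes\mu_V + \mu_V\otimes\mu_V$. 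Theorem~\ref{thm:locallaw} applied at scale $\ell$ (valid since $\ell > \theta/\sqrt N$) to smoothed indicators of $B(\cdot,2\ell)$ controls the number of particle pairs within distance $2\ell$ with at least one particle in $\supp h$, yielding $|A_{\rm near}| = O(\ell^2)\|f\|_{3,N^{-s}}$ on the $\delta$-HP event, hence $N|A_{\rm near}| = O(\theta^2)\|f\|_{3,N^{-s}}$.

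For $A_{\rm far}$, I use the symmetry of the kernel under $z \leftrightarrow w$ together with the odd-ness of $1/(z-w)$ to rewrite
$$A_{\rm far} = 2\iint h(z)\phi_\ell(z-w)\tilde\mu(dz)\tilde\mu(dw) = 2\int h(z)\Phi(z)\tilde\mu(dz),$$
where $\phi_\ell(u) = (1-\chi_\ell(u))/u$ and $\Phi(z) = \int \phi_\ell(z-w)\tilde\mu(dw)$. Fully expanding $\tilde\mu\otimes\tilde\mu$ produces one doubly-random term $\iint h(z)\phi_\ell(z-w)\hat\mu(dz)\hat\mu(dw)$ and three terms of the form $\int g\,d\tilde\mu$ with deterministic test functions $g$ equal to $hp$ or $q$, where $p(z) = \int \phi_\ell(z-w)\mu_V(dw)$ and $q(w) = \int h(z)\phi_\ell(z-w)\mu_V(dz)$. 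The Euler--Lagrange equation gives $\int(z-w)^{-1}\mu_V(dz) = -\partial V(w)$ on $\supp\mu_V$, so $p$ and $q$ have $C^2$-norms uniformly bounded in $\ell$ in terms of $\|f\|_{3,N^{-s}}$ (with corrections of size $O(\ell)$); Theorem~\ref{thm:locallaw} then bounds these three pieces by $O(\xi)\|f\|_{3,N^{-s}}/N$ in the $\delta$-HP sense.

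For the doubly-random term, a dyadic decomposition $\phi_\ell = \sum_n \phi_n$ at scales $2^n\ell$ (with $\phi_n$ supported in an annulus of width $2^n\ell$ and $\|\nabla^k\phi_n\|_\infty = O((2^n\ell)^{-1-k})$), combined with Theorem~\ref{thm:locallaw} applied to $w \mapsto \phi_n(z-w)$ at each dyadic scale, gives the pointwise bound $\|\Phi\|_\infty = O(\xi\log N/\sqrt{N})$ after summing over $O(\log N)$ dyadic scales. A further rearrangement of the doubly-random term (symmetrizing the double sum and re-identifying $\Phi$) together with another application of Theorem~\ref{thm:locallaw} in the $z$-integration, using this pointwise bound on $\Phi$ in conjunction with the bounds on $p$ and $q$, produces $N\left|\iint h(z)\phi_\ell(z-w)\hat\mu(dz)\hat\mu(dw)\right| = O(\xi^2\log N)\|f\|_{3,N^{-s}}$, establishing the claimed bound.

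The main obstacle is the doubly-random term: since $\Phi$ is itself random, Theorem~\ref{thm:locallaw} cannot be applied directly to $h\Phi$ as a deterministic test function. The sharp $O(\xi^2\log N)$ scaling requires combining two independent local-law applications (one to control $\|\Phi\|_\infty$ via the dyadic decomposition, another for the outer $\tilde\mu$-integration) with the cancellations provided by the odd symmetry of $\phi_\ell$ and the Euler--Lagrange identity. The factor $\log N$ is traced to the dyadic decomposition required because the kernel $\phi_\ell$ interpolates between $0$ near the diagonal and the slowly-decaying $1/(z-w)$ at macroscopic distances, with no single length scale capturing its behavior.
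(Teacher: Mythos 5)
Your proposal correctly handles the near-diagonal contribution (it matches the paper's estimate for $t<\theta/\sqrt N$ and gives $O(\theta^2)$), but the treatment of the far part has a genuine gap that I do not see how to close along the lines you indicate.

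The first issue is the expansion $\tilde\mu_V\otimes\tilde\mu_V = \hat\mu\otimes\hat\mu - \tilde\mu_V\otimes\mu_V - \mu_V\otimes\tilde\mu_V - \mu_V\otimes\mu_V$. Each of the four pieces is individually far larger than the target. For instance, $\int h\,p\,d\tilde\mu_V$ with $p(z)=\int\phi_\ell(z-w)\,\mu_V(dw)=-\partial V(z)+O(\ell)$ has $\|p\|_\infty = O(1)$ (not small), and the local law gives only $\int h p\,d\tilde\mu_V = O(\xi N^{-1/2}N^{-s})\|hp\|_{2,N^{-s}} = O(\xi N^{-1/2})\|f\|_{3,N^{-s}}$; after multiplying by $N$ this is $O(\xi N^{1/2})\|f\|_{3,N^{-s}}$, not $O(\xi)\|f\|_{3,N^{-s}}$. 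The term $\iint h\,\phi_\ell\,\hat\mu\otimes\hat\mu$ has a large deterministic main part of order $N^{1-3s}$. So the bound you seek can only come from keeping $\tilde\mu_V\otimes\tilde\mu_V$ together in \emph{both} variables simultaneously; splitting it destroys the cancellation.

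The second, more fundamental issue is that even if you keep $\tilde\mu_V\otimes\tilde\mu_V$ intact, your dyadic decomposition $\phi_\ell=\sum_n\phi_n$ does not tensorize the kernel: each $\phi_n(z-w)$ is a function of the difference $z-w$, not a product $u(z)v(w)$. Consequently $\iint h(z)\phi_n(z-w)\,\tilde\mu_V(dz)\,\tilde\mu_V(dw)$ is \emph{not} a product of two single-variable expressions, and you cannot apply Theorem~\ref{thm:locallaw} separately in each variable to get a factor $\xi^2$. Your proposed fix --- get a pointwise bound on the random quantity $\Phi(z)=\int\phi_\ell(z-w)\,\tilde\mu_V(dw)$ and then "re-identify $\Phi$ and apply the local law in the $z$-integration" --- does not work: $\Phi$ is random, so $h\Phi$ is not an admissible test function for the local law, and a deterministic sup bound $\|\Phi\|_\infty=O(\xi\log N/\sqrt N)$ combined with $\|\tilde\mu_V\|_{\mathrm{TV}}=O(1)$ only yields $\int h\Phi\,d\tilde\mu_V=O(\xi\log N/\sqrt N)\|h\|_\infty$, which after multiplication by $N$ is $O(\xi N^{1/2}\log N)\|f\|_{1,N^{-s}}$, far too big. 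There is no purely variational trick to extract a second $N^{-1/2}$ from a pointwise bound on $\Phi$.

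The paper resolves exactly this difficulty by a genuinely tensorized representation of the kernel, the identity \eqref{e:hdecomp}:
\begin{equation*}
  \frac{h(z)-h(w)}{z-w} = C \int_0^\infty \int_\C \varphi(|z-\zeta|/t)\,\varphi(|w-\zeta|/t)\,(\bar z-\bar w)(h(z)-h(w)) \, m(d\zeta) \, \frac{dt}{t^5}.
\end{equation*}
After the polarization \eqref{e:barzbarhzhw4} of $(\bar z-\bar w)(h(z)-h(w))$, every term is of the form $u_{\zeta,t}(z)\,v_{\zeta,t}(w)$, an actual product. Then $\iint u\,v\,\tilde\mu_V\otimes\tilde\mu_V = \bigl(\int u\,d\tilde\mu_V\bigr)\bigl(\int v\,d\tilde\mu_V\bigr)$, and Theorem~\ref{thm:locallaw} applies to each factor separately, giving the $\xi^2$. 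The integral over $t\in[\theta/\sqrt N,N^2]$ then produces the $\log N$, playing the role of your dyadic sum but applied correctly. Your proposal is missing exactly this tensorization step; without it the argument collapses at the doubly-random term.
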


\begin{proof}
To decompose the singularity, we use that for any compactly supported $\varphi: [0,\infty) \to \R$ we have
\begin{equation} \label{e:hdecomp}
  \frac{h(z)-h(w)}{z-w}
  = C \int_0^\infty \int_\C \varphi(|z-\zeta|/t)\varphi(|w-\zeta|/t) (\bar z-\bar w)(h(z)-h(w)) \, m(d\zeta) \, \frac{dt}{t^5}.
\end{equation}
Indeed, this formula is equivalent to
\begin{equation*}
  \frac{1}{|z|^2} = C \int_0^\infty \int_\C \varphi(|z-\zeta|/t)\varphi(|\zeta|/t) \, m(d\zeta) \, \frac{dt}{t^5},
\end{equation*}
which holds since both sides are rotationally symmetric and homogeneous of degree $-2$,
for any function $\varphi$ for which the double integral on the right-hand side is well-defined (see e.g.\ \cite{MR864658}).
For example, we assume for convenience that $\varphi$ is smooth, not identically zero, that $0\leq \varphi \leq 1$,
and that $\varphi$ is supported in $[\frac14,\frac12]$: this is enough for (\ref{e:hdecomp}) to hold in the absolutely convergent sense
with $0<C<\infty$.

Recall that $\theta = N^\delta$.
We first consider the singular contribution $t< \theta/\sqrt{N}$ to \eqref{e:hdecomp}, namely
\begin{equation*}
  H_0(z,w) =
  \inda{z\neq w} \int_0^{\theta/\sqrt{N}} \int_\C \varphi(|z-\zeta|/t)\varphi(|w-\zeta|/t) (\bar z-\bar w)(h(z)-h(w)) \, m(d\zeta) \, \frac{dt}{t^5}.
\end{equation*}
In this case, we use that since $h$ has support in $B(z_0,\frac12 N^{-s})$ we have
\begin{multline*}
  |\varphi(|z-\zeta|/t)(\varphi(|w-\zeta|/t) (\bar z-\bar w)(h(z)-h(w))|
  \\
  \leq \varphi(|z-\zeta|/t) \varphi(|w-\zeta|/t) t^2 \|\nabla h\|_\infty \inda{z,w \in B(z_0,N^{-s})}.
\end{multline*}
Thus we have the deterministic bound
\begin{equation*}
  |H_0(z,w)|
  \leq
  \inda{z,w \in B(z_0,N^{-s})}
  \|\nabla h\|_\infty \int_0^{\theta/\sqrt{N}} \int_\C \varphi(|z-\zeta|/t) \varphi(|w-\zeta|/t) \, m(d\zeta) \, \frac{dt}{t^3}.
\end{equation*}
The only nonvanishing contributions to the integral are for
$|z-w| \leq t \leq \theta/\sqrt{N}$ and thus the integral can be estimated by
\begin{equation*}
  \inda{|z-w| \leq \theta/\sqrt{N}}
  \int_0^\infty\int_\C \varphi(|z-\zeta|/t) \varphi(|w-\zeta|/t) \, m(d\zeta) \, \frac{dt}{t^3}
  = C \inda{|z-w| \leq \theta/\sqrt{N}},
\end{equation*}
where the constant on the right-hand side is the integral on the left-hand side
(which depends neither on $z$ or $w$ by the same argument that shows \eqref{e:hdecomp}).
In summary, we have shown that
\begin{equation} \label{e:H0bda}
  |H_0(z,w)| \leq  C \indb{z,w \in B(z_0,N^{-s})} \indb{|z-w| \leq \theta/\sqrt{N}} \|\nabla h\|_\infty.
\end{equation}
Now we estimate
\begin{equation*}
  \iint H_0(z,w) \, \tilde \mu(dz) \, \tilde \mu(dw)
  \leq 
  \iint |H_0(z,w)| \, (\hat \mu+\mu_V)(dz) \, (\hat \mu+\mu_V)(dw).
\end{equation*}
We cover $B(z_0,N^{-s})$ by $O(N^{1-2s}/\theta^2)$ many disks $B_j'$ of diameters $\theta/\sqrt{N}$,
and denote by $B_j$ the disk with the same center as $B_j'$ but diameter $2\theta/\sqrt{N}$.
By \eqref{e:H0bda} the support of $H_0$ is contained in the union
$\bigcup_j \{(z,w): z \in B_j' ,w \in B_j \} \subset \bigcup_j \{(z,w): z,w \in B_j \}$,
and thus
\begin{equation*}
  \iint |H_0(z,w)| \, (\hat \mu+\mu_V)(dz) \, (\hat \mu+\mu_V)(dw)
  \leq
  C\|\nabla h\|_\infty \sum_j (\hat \mu+\mu_V)(B_j)^2
  ,
\end{equation*}
where the sum ranges over the $O(N^{1-2s}/\theta^2)$ many disks.
For each disk, \eqref{e:locallawprec} implies that
\begin{equation*}
  \hat\mu(B_j)^2 \leq 2\mu_V(B_j)^2 = O(\theta^4/N^2)
\end{equation*}
with $\delta$-HP.
By a union bound, this holds simultaneously for all disks, again with $\delta$-HP.
Together with the number of disks, we have thus obtained the bound
\begin{equation} \label{e:H0bd}
  N \iint H_0(z,w) \, \tilde \mu(dz) \, \tilde \mu(dw)
  = O(\theta^2 N^{-2s}) \|\nabla h\|_\infty
  = O(\theta^2) \|f\|_{2,N^{-s}},
\end{equation}
with $\delta$-HP, as needed.

Next we estimate the remaining part of the integral \eqref{e:hdecomp}, namely $t \geq \theta/\sqrt{N}$.
We will first estimate the integrand for fixed such $t$ and any $\zeta$.
For this, we write
\begin{multline} \label{e:barzbarhzhw4}
  (\bar z -\bar w)(h(z)-h(w))
  = (\bar z - \bar\zeta) (h(z)-h(\zeta)) - (\bar w-\bar\zeta) (h(z)-h(\zeta))
  \\
  - (\bar z - \bar\zeta) (h(w)-h(\zeta)) + (\bar w-\bar\zeta) (h(w)-h(\zeta)).
\end{multline}
Since the four terms are analogous, we only consider the first one.
Set $u(z) = \varphi(|z-\zeta|/t)(\bar z - \bar\zeta)(h(z)-h(\zeta))$
and $v(w) = \varphi(|w-\zeta|/t)$. Then
\begin{gather*}
  \|\nabla u\|_\infty \leq Ct \|\nabla h\|_\infty, 
  \quad
  \|\nabla^2u\|_\infty \leq Ct \|\nabla^2 h\|_\infty + C \|\nabla h\|_\infty,
  \\
  \|\nabla v\|_\infty \leq C/t,
  \quad
  \|\nabla^2 v\|_\infty \leq C/t^2,
\end{gather*}
and $r = \diam\supp u=N^{-s} \wedge t$ if $h(\zeta)=0$
and $r=t$ if $h(\zeta) \neq 0$,
and the support of $v$ has diameter at most $t$.
Thus, for any $t > \theta/\sqrt{N}$ and any $\zeta \in\C$, by \eqref{e:locallawprec}
and since $r \leq t$ we obtain
\begin{align*}
  &\iint u(z) v(w) \, \tilde \mu_V(dz) \, \tilde \mu_V(dw)
  \\
  &\qquad
  =
  O\pB{ \xi N^{-\frac12} r^2 t \|\nabla h\|_{\infty} + \xi N^{-1} r^2 (\|\nabla h\| + t\|\nabla^2 h\|) }
  O\pB{ \xi N^{-\frac12} (t^2 \wedge 1)/t }
  \\
  &\qquad
  =
  O\pB{\xi^2 N^{-1} r^2 (t^2 \wedge 1) \|\nabla h\|_{\infty}}
  +
  O\pB{\xi^2 N^{-3/2} r^2 (t^2 \wedge 1) \|\nabla^2 h\|_{\infty}}
  ,
\end{align*}
with $\delta$-HP.

By a union bound, this estimate can be extended from fixed $t$ and $\zeta$ to all $t \in [\theta/\sqrt{N},N^2] \cap N^{-10}\Z$
and $\zeta \in \{\zeta' \in \C \cap (N^{-10}\Z)^2: |\zeta'| \leq N^2 \}$.
Since the left-hand side is Lipschitz continuous in $\zeta$ and $t$ with Lipschitz constant at most
$O(1+1/t^{5})$ this bound then actually holds uniformly in $\theta/\sqrt{N} \leq t \leq N^2$ and $|\zeta| \leq N^2$,
with $\delta$-HP.

Note that we may restrict the integral over $\zeta \in \C$ in \eqref{e:hdecomp}
to those $\zeta$ with distance at most $t$ to $\supp h$.
We divide the integral over $\zeta$ into the regions $\zeta \in \supp h$ and $0<\dist(\zeta, \supp h) \leq t$.
The contribution of the integral over $\zeta \in \supp h$ is $O(N^{-2s})$,
which together with the factor $r^2 =t^2$ gives $O(N^{-2s}t^2)$.
The contribution of the integral over $\zeta\not\in \supp h$ is $O(N^{-2s} \vee t^2)$,
which together with the factor $r^2$, which is $r^2 = N^{-2s} \wedge t^2$ in this case,
also gives $O(N^{-2s}t^2)$.
In summary, we have shown that
\begin{align} \label{e:uvtildemutildemu}
  &\iiint u(z) v(w) \, \tilde \mu_V(dz) \, \tilde \mu_V(dw) \, m(d\zeta)
  \nonumber\\
  &\qquad
  =
  O\pB{\xi^2 { N^{-2s}t^2(t^2 \wedge 1)} }
  \pB{ N^{-1} \|\nabla h\|_{\infty} + N^{-3/2} \|\nabla^2 h\|_{\infty} }
  ,
\end{align}
uniformly in $t \in [\theta/\sqrt{N},N^2]$, with $\delta$-HP.
Since 
{\begin{equation*}
  \int_{\theta/\sqrt{N}}^{N^2}
  \frac{t^2(t^2 \wedge 1)}{t^5} 
  \, dt
  = O(\log N),
 \end{equation*}}
integration of \eqref{e:uvtildemutildemu} over $t \in [\theta/\sqrt{N},N^2]$ with respect to the measure $dt/t^5$ shows
\begin{multline} \label{e:H1bd}
  N \iint \pbb{ \int_{\theta/\sqrt{N}}^{N^2} \int_\C \varphi(|z-\zeta|/t)\varphi(|w-\zeta|/t) (\bar z-\bar w)(h(z)-h(w)) \, m(d\zeta) \, \frac{dt}{t^5} }
  \, \tilde \mu_V(dz) \, \tilde \mu_V(dw)
  \\
  = O(\xi^2 \log N) (N^{-2s}\|\nabla h\|_\infty + N^{-3s} \|\nabla^2 h\|_\infty)
  = O(\xi^2 \log N) \|f\|_{3,N^{-s}},
 \end{multline}
 where we also used $N^{-1/2} \leq N^{-s}$.

Finally, for $t\geq N^2$, necessarily $h(z)=0$ or $h(w)=0$, and we obtain deterministically 
\begin{multline} \label{e:H2bd}
  N \int_{N^2}^\infty \int_\C \varphi(|z-\zeta|/t)\varphi(|w-\zeta|/t) (\bar z-\bar w)(h(z)-h(w)) \, m(d\zeta) \, \frac{dt}{t^5}
  \\
  \leq 
  N \|h\|_\infty \int_{N^2}^\infty \frac{dt}{t^2}
  \leq N^{-1}\|h\|_\infty \leq \|f\|_{1,N^{-s}}
  .
\end{multline}
The proof is complete since the left-hand side of \eqref{e:Ehh} is bounded by the sum of
\eqref{e:H0bd}, \eqref{e:H1bd}, and \eqref{e:H2bd}.
\end{proof}

\begin{proof}[Proof of Theorem~\ref{thm:rigidity}]
  By Proposition~\ref{prop:loopF} and Lemmas~\ref{lem:Ehf}--\ref{lem:Ehh}, and since
  the random variable in the expectation of the right-hand side \eqref{e:loopF} is almost surely bounded by
  $O(N(1+\frac{1}{\beta}))(\|\nabla^2 f\|_\infty + \|\nabla f\|_\infty^2)$,
  for all $f$ satisfying the conditions above \eqref{e:FNbd}, we have
  \begin{equation*}
    \ddp{}{t} F_{N,V-tf/(\beta N),\beta}(tf)
    =
    O\pbb{ \theta^2 + \xi^2 \log N }
    \|f\|_{3,N^{-s}}
    +
    O\pbb{\frac{\xi}{\theta \beta}}
    \pbb{\|f\|_{4,N^{-s}}+\|f\|_{3,N^{-s}}^2}
    .
  \end{equation*}
  By assumption $\|f\|_{4,N^{-s}} \leq \beta \alpha_0 = O(\beta)$, and thus since
  $\theta=N^{2\delta}$ and $\xi = (1+1/\beta)\log N$, where $\delta>0$ can be chosen arbitrarily small,
  both terms on the right-hand side are bounded by
  $O(\beta N^{\varepsilon})$ for sufficiently large $N$.
  Then \eqref{e:FNbd} follows by integration over $t\in [0,1]$.

  To prove \eqref{e:rigidity}, given any $f$ as in the assumption of Theorem~\ref{thm:rigidity},
  we use \eqref{e:FNbd} with $f$ replaced by $g= \beta \alpha_0 f/\|f\|_{4,N^{-s}}$.
  Then $F_{N,V,\beta}(g) = O\p{ \beta N^{\varepsilon} }$,
  and by Markov's inequality, we conclude that for a sufficiently large constant $C$ we have
  \begin{equation*}
    \P\pa{X_f \geq C \frac{N^{\varepsilon}}{\alpha_0} \|f\|_{4,N^{-s}}}
    =
    \P\pa{X_g \geq C \beta N^{\varepsilon}}
    \leq
    \ee^{-\beta N^{\varepsilon}}
    .
  \end{equation*}
  This completes the proof.
\end{proof}

\section*{Acknowledgements}

PB was partially supported by NSF grants DMS-1208859 and DMS-1513587.
HTY was partially supported by NSF grant DMS-1307444 and a Simons Investigator fellowship.
We also gratefully acknowledge the hospitality and support of the
Institute for Advanced Study in Princeton (RB, PB, HTY),
and the National Center for Theoretical Sciences
and the National Taiwan University in Taipei (RB, PB, MN, HTY),
where part of this research was carried out.
The authors' stay at the IAS was supported by NSF grant DMS-1128155.
MN was supported by the Center of Mathematical Sciences and Applications at Harvard University.

We thank Nestor Guillen for a helpful discussion.

\bibliography{all}
\bibliographystyle{plain}

\end{document}